\tikzstyle{vertex}=[circle, draw, inner sep=0pt, minimum width=10pt]
\declaretheorem[numberwithin=section]{theorem}
\declaretheorem[sibling=theorem]{lemma}
\declaretheorem[sibling=theorem]{definition}
\declaretheorem{observation}
\declaretheorem[sibling=theorem]{proposition}
\declaretheorem[numberwithin=section]{remark}
\declaretheorem[sibling=theorem]{corollary}
\declaretheorem[sibling=theorem]{example}
\newcommand{\DP}{\mathsf{P}}
\newcommand{\sharpP}{\mathsf{\#P}}
\newcommand{\FPT}{\mathsf{FPT}}
\newcommand{\W}[1]{\mathsf{W}[#1]}
\newcommand{\sharpFPT}{\mathsf{\#FPT}}
\newcommand{\sharpW}[1]{\mathsf{\#W}[#1]}
\newcommand{\VP}{\mathsf{VP}}
\newcommand{\VNP}{\mathsf{VNP}}
\newcommand{\VFPT}{\mathsf{VFPT}}
\newcommand{\VW}[1]{\mathsf{VW}[#1]}
\newcommand{\VFPTd}{\mathsf{VFPT_\mathit{deg}}}
\newcommand{\VWd}[1]{\mathsf{VW_\mathit{deg}}[#1]}
\newcommand{\ones}[2]{\genfrac{\langle}{\rangle}{0pt}{}{#1}{#2}}
\newcommand{\Nset}{\mathbb{N}}
\newcommand{\hp}[1]{H_{#1}}
\newcommand{\spc}[1]{S_{#1}}
\newcommand{\lep}{\le_p}
\newcommand{\lec}{\le_c}
\newcommand{\les}{\le_s}
\newcommand{\lefp}{\le^\mathit{fpt}_p}
\newcommand{\lefc}{\le^\mathit{fpt}_c}
\newcommand{\lefs}{\le^\mathit{fpt}_s}
\newcommand{\poly}{\operatorname{poly}}
\newcommand{\VC}{\operatorname{VC}}
\newcommand{\per}{\operatorname{per}}
\newcommand{\Cl}{\operatorname{Clique}}
\newcommand{\cG}{\mathcal{G}}
\newcommand{\rper}{\operatorname{rper}}
\newcommand{\sVC}{\operatorname{VC}^s}
\newcommand{\cB}{\mathcal{B}}
\newcommand{\Psub}{\ensuremath{\texttt{PartitionedSub}}}
\newcommand{\ksparseperm}{\ensuremath{(\per_{\texttt{sparse},k,c})}\xspace}
\title{Parameterized Valiant's classes}
\author{Markus Bl\"aser\footnote{Computational Complexity Department, Saarlnad University, Germany, {mblaeser@cs.uni-saarland.de}} \and Christian Engels\footnote{CSE Department, IIT Bombay, India {christian@cse.iitb.ac.in}}}
\begin{document}

\maketitle

\begin{abstract}
We define a theory of parameterized algebraic complexity classes
in analogy to parameterized Boolean counting classes. We define the classes
$\VFPT$ and $\VW t$, which mirror the Boolean counting classes $\sharpFPT$
and $\sharpW t$, and define appropriate reductions and completeness notions.
Our main contribution is the $\VW 1$-completeness proof of the parameterized
clique family. This proof is far more complicated than in the Boolean world.
It requires some new concepts like composition theorems for bounded exponential
sums and Boolean-arithmetic formulas.
In addition, we also look at two polynomials linked to the permanent with vastly different parameterized complexity.
\end{abstract}

\section{Introduction}
When Valiant invented the theory of computational counting and $\sharpP$-completeness, he also defined
an algebraic model for computing families of polynomials~\cite{DBLP:conf/stoc/Valiant79a}.
This was very natural, since many (Boolean) counting problems are evaluations of polynomials:
Counting perfect matchings in bipartite graphs is the same as evaluating the permanent
at the adjacency matrix, counting Hamiltonian tours in directed graphs is the same as
evaluating the Hamiltonian cycle polynomial at the adjacency matrix, etc.
There is a fruitful interplay between the Boolean and the algebraic world:
algebraic methods like interpolation
can be used to design counting algorithms as well as for proving hardness results.
Proving lower bounds might be easier in the algebraic world and then we can use
transfer theorems from the algebraic world to the Boolean world~\cite{DBLP:journals/tcs/Burgisser00}.

Parameterized counting complexity has been very successful in the recent years,
see for instance~\cite{DBLP:Thore,DBLP:conf/focs/CurticapeanM14,DBLP:journals/corr/CurticapeanDFGL17,
DBLP:stoc:radu,DBLP:journals/combinatorica/JerrumM17,DBLP:Roth}.
Parameterized complexity provides a more fine-grained
view on $\sharpP$-complete problems. There are problems like counting vertex covers
of size $k$, which are fixed-parameter tractable, and others,
which are presumably harder, like the problem of counting cliques of size $k$.
Beside the classes $\VP$ and $\VNP$ (and subclasses of them), which correspond
to time bounded computation in the Boolean world, there have also been definitions of
algebraic classes that correspond to space bounded Boolean computation, see~\cite{DBLP:conf/stacs/KoiranP07, DBLP:conf/mfcs/KoiranP07, DBLP:conf/fct/MahajanR09}.
However, we are not aware of any parameterized classes in the algebraic world despite some algorithmic upper bounds being known, see for instance~\cite{DBLP:journals/dam/CourcelleMR01, DBLP:conf/isaac/FlarupKL07}.

\subsection{Our Contribution}

In this paper, we define a theory of parameterized algebraic complexity classes.
While some of the definitions are rather obvious modifications of the Boolean ones
and some of the basic theorems easily transfer from the Boolean world
to the algebraic world, some concepts have to be modified.
For instance, we cannot use projections to define hardness in general in the
parameterized world, since they can only decrease the degree. On the one hand, one could choose the
degree as a parameter, for instance, when computing the vertex cover polynomial. On the other hand, one could have parameterized families where the degree is always $n$,
like the permanent on bounded genus graphs.
One cannot compare these families with projections, although both of them turn out to
be fixed parameter tractable. We could use c-reductions instead (which are the analogue
of Turing reductions). However, these seem too powerful. We propose some intermediate concept, namely fpt-substitutions:
We may replace the variables of a polynomial by other polynomials that are
computed by small circuits (and not simply constants and variables like in the case of projections).
This mirrors what is done in parsimonious reductions: the input is transformed by a polynomial time computable
function but no post-processing is allowed.

Our main technical contribution is the $\VW 1$-completeness proof of the parameterized
clique polynomial. This proof turns out to be far more complicated than in the
Boolean world, since we are not counting satisfying assignments to Boolean circuits
but we are computing sums over algebraic circuits. First we prove
that one can combine two exponential sums into one sum. While this is very easy
in the case of $\VNP$, it turns out to be quite complicated in the case of $\VW 1$.
Then we prove a normal form for so-called weft 1 circuits, the defining circuits
for $\VW 1$.  We go on with proving that the components consisting of all monomials that
depend on a given number of variables of a polynomial
computed by a weft 1 formula can be written as a bounded exponential sum
over so-called Boolean-arithmetic expressions.
We then show how to reduce such
a sum to the clique problem.

In our final section, we study two polynomials based on cycle covers. The first one consisting of one cycle of length $k$ and all other cycles being self loops. The second is similar, but allows all other cycles to be a certain constant. We prove that the first problem is $\VW{1}$ complete while the second problem is hard for $\VW{t}$ for all $t$.

\subsection{Motivation}
It is an open problem in Boolean parameterized complexity theory, whether $\W t = \W {t+1}$ or
$\sharpW t = \sharpW {t+1}$ implies a collapse of the corresponding hierarchy.
Here algebraic complexity theory might shed a new light on these problems, since algebraic circuits
are much more structured than Boolean ones.

Additionally, we find the difference between the complexity of various polynomials based on the permanent very interesting and worthy of future studies.

We build the basic framework for the study of these questions in this paper.

\section{Valiant's Classes}

We give a brief introduction to Valiant's classes, for further information
we refer the reader to~\cite{Buergisser:00,DBLP:journals/corr/Mahajan13}.
Throughout the whole paper, $K$
will denote the underlying ground field.
An arithmetic circuit $C$ is an acyclic directed graph such that every node has
either indegree $0$ or indegree $2$.
Nodes with indegree $0$ are called input
nodes and they are either labeled with a constant from $K$ or with some variable.
The other nodes are called computation gates, they are either labeled
with $+$ (addition gate) or $*$ (multiplication gate). Every gate computes
a polynomial in the obvious way. There is exactly one gate of outdegree $0$,
the polynomial computed there is the output of $C$. The size of a circuit is the number
of edges in it. The depth of a circuit is length of a longest path from
an input node to the output node. Later, we will also look at circuits in which the computation
gates can have arbitrary fan-in.

The objects that we study will be polynomials over $K$ in variables $X_1,X_2,\dots$
(Occasionally, we will rename these variables to make the presentation more readable.) We will denote $\{ X_1,X_2,\dots\}$ by $X$.
The circuit complexity $C(f)$ of a polynomial $f$ is the size of a smallest
circuit computing $f$.
We call a function $r\colon \Nset \to \Nset$ \emph{p-bounded} if there is a polynomial
$p$ such that $r(n) \le p(n)$ for all $n$.

\begin{definition}\label{def:mb:p-fam}
A sequence of polynomials $(f_n) \in K[X]$ is called a \emph{p-family}
if for all $n$,
\begin{enumerate}
\item $f_n \in K[X_1,\dots,X_{p(n)}]$ for some p-bounded function $p$ and
\item $\deg f_n$ is p-bounded.
\end{enumerate}
\end{definition}

\begin{definition}\label{def:mb:vp}
The class $\VP$ consists of all p-families $(f_n)$ such that
$C(f_n)$ is p-bounded.
\end{definition}

Let $f \in K[X]$ be a polynomial and $s\colon X \to K[X]$ be a mapping that maps indeterminates
to polynomials. Now, $s$ can be extended in a unique way to an algebra endomorphism $K[X] \to K[X]$.
We call $s$ a \emph{substitution}. (Think of the variables being replaced by polynomials.)

\begin{definition}
\begin{enumerate}
\item Let $f,g \in K[X]$. $f$ is called a \emph{projection} of $g$ if there is a substitution
$r\colon X \to X \cup K$ such that $f = r(g)$. We write $f \lep g$ in this case.
(Since $g$ is a polynomial, it only depends on a finite number of indeterminates.
Therefore, we only need to specify a finite part of $r$.)
\item Let $(f_n)$ and $(g_n)$ be p-families. $(f_n)$ is a \emph{p-projection} of $(g_n)$ if
there is a p-bounded function $q\colon \Nset \to \Nset$ such that $f_n \lep g_{q(n)}$.
We write $(f_n) \lep (g_n)$.
\end{enumerate}
\end{definition}

Projections are very simple reductions. Therefore, we can also use them to define hardness for
``small'' complexity classes like $\VP$. More powerful are so-called \emph{c-reductions},
which are an analogue of Turing reductions. c-reductions are strictly more
powerful than p-projections~\cite{DBLP:journals/ipl/IkenmeyerM18}.
Let $g$ be a polynomial in $s$ variables.
A $g$-oracle gate is a gate of arity $s$ that on input $t_1,\dots,t_s$ outputs
$g(t_1,\dots,t_s)$. The size of such a gate is $s$.
$C^g(f)$ denotes the minimum size of a circuit with $g$-oracle
gates that computes $f$. If $G$ is a set of polynomials, then $C^G(f)$ is the minimum
size of an arithmetic circuit that can use any $g$-oracle gates for any $g \in G$.

\begin{definition}
Let $(f_n)$ and $(g_n)$ be p-families. $(f_n)$ is a \emph{c-reduction} of $(g_n)$ if
there is a p-bounded function $q\colon \Nset \to \Nset$ such that $C^{G_{q(n)}}(f_n)$
is p-bounded, where $G_{q(n)} = \{ g_i \mid i \le q(n) \}$.
We write $(f_n) \lec (g_n)$.
\end{definition}

\begin{definition}
A p-family $(f_n)$ is in $\VNP$, if there
are p-bounded functions $p$ and $q$ and a sequence $(g_n) \in \VP$ of polynomials
$g_n \in K[X_1,\dots,X_{p(n)},Y_1,\dots,Y_{q(n)}]$ such that
\[
   f_n = \sum_{e \in \{0,1\}^{q(n)}} g_n(X_1,\dots,X_{p(n)},e_1,\dots,e_{q(n)}).
\]
\end{definition}

$\VP$ and $\VNP$ are algebraic analogues of the classes $\DP$ and $\sharpP$
in the Boolean world. The permanent family $(\per_n)$ is complete for $\VNP$
and the problem of computing the permanent of a given $\{0,1\}$-matrix is complete
for $\sharpP$ under p-projections.

\section{Parameterized (Counting) Complexity}

Parameterized counting complexity was introduced by Flum and Grohe~\cite{DBLP:journals/siamcomp/FlumG04}.
We give a short introduction to fixed parameterized counting complexity.
For more information on parameterized complexity, we refer
the reader to~\cite{DBLP:series/txtcs/FlumG06, DBLP:series/txcs/DowneyF13}.

\begin{definition}
A \emph{parameterized counting problem} is a function $F\colon\Sigma^{*}\times \Nset \rightarrow \Nset$.
\end{definition}

The idea is that an input has two components $(x,k)$,
$x \in \Sigma^*$ is the instance and the parameter $k$ measures the ``complexity'' of the input.

\begin{definition}
A parameterized counting problem is \emph{fixed parameter tractable}
if there is an algorithm computing $F(x,k)$ in time
$f(k)\lvert x\rvert^{c}$ for some computable function $f\colon\Nset \rightarrow \Nset$ and some constant $c$.
The class of all fixed parameter tractable counting problems is denoted by $\sharpFPT$.
\end{definition}

A parameterized counting problem is fixed-parameter tractable if the running time is polynomial
in the instance size. The ``combinatorial explosion'' is only in the parameter $k$.
In particular, the exponent of $n$ does not depend on $k$.
The classical  example for a parameterized counting problem in $\sharpFPT$ is the vertex cover
problem: Given a graph $G$ and a natural number $k$, count all vertex covers
of $G$ of size $k$.

Fixed parameter tractable problems represent the ``easy'' problems in parameterized complexity.
An indication that a problem is not fixed parameter tractable is that it is hard for the
class $\sharpW 1$. Reductions that are used to define hardness are \emph{parsimonious fpt-reductions}:
Such a reduction maps an instance $(x,k)$ to an instance $(x',k')$ such
that the value of the two instances is the same, the running time of the reduction
is $f(k) |x|^c$ for some computable function $f$ and a constant $c$, and there is a computable function
$g$ such that $k' \le g(k)$. It is quite easy to see that the composition
of two parsimonious fpt-reductions is again a parsimonious fpt-reduction
and that $\sharpFPT$ is closed under parsimonious fpt-reductions.

We now define weft $t$ formulas inductively.\footnote{The term ``weft'' originates from textile fabrication
and has been used in Boolean parameterized complexity from its very beginning.} 
\begin{definition}
A \emph{weft} $0$ formula is a layered Boolean formula and the gates have fan-in two (over the basis $\wedge$, $\vee$, and $\neg$).
A weft $t$ formula is a layered Boolean formula where the gates have fan-in two, except one layer of
gates that has unbounded fan-in. This formula has as inputs weft $t-1$ formulas.
\end{definition}

Weft $t$ formulas have $t$ layers of unbounded fan-in gates, and all
other gate have bounded fan-in. Weft $t$ formulas are the defining machine model
of the $\sharpW t$ classes:

\begin{definition}
The class $\sharpW t$ are all parameterized counting problems that are
reducible by parsimonious fpt-reductions to the following problem:
Given a weft t formula $C$ of constant depth and a parameter $k$, count all satisfying assignments
of $C$ that have exactly $k$ $1$s.
\end{definition}

A classical example of a counting problem,
that is $\sharpW 1$-complete,
is counting cliques of size $k$ in a graph.
Clique is used as a major complete problem for $\sharpW 1$ by Flum and Grohe~\cite{DBLP:journals/siamcomp/FlumG04}.
It is known that $\DP = \sharpP$ implies $\sharpFPT = \sharpW 1$.
Curticapean~\cite{DBLP:conf/icalp/Curticapean13} proves
that counting $k$-matchings, the parameterized analogue to the permanent,
is $\sharpW 1$-hard (under fpt Turing reductions).

\section{Parameterized Valiant's Classes}

We now define fixed-parameter variants of Valiant's classes. Our families of polynomials will now
have two indices. They will be of the form $(p_{n,k})$. Here, $n$ is the number of indeterminates and $k$ is
the parameter.

\begin{definition}
A parameterized p-family is a family $(p_{n,k})$ of polynomials such that
\begin{enumerate}
\item $p_{n,k} \in K[X_1,\dots,X_{q(n)}]$ for some p-bounded function $q$, and
\item the degree of $p_{n,k}$ is p-bounded (as a function of $n+k$).
\end{enumerate}
\end{definition}

The most natural parameterization is by the degree: Let $(p_n)$ be any p-family
then we get a parameterized family $(p_{n,k})$ by setting $p_{n,k} = \hp k (p_n)$.
Here $\hp k(f)$ denotes the homogeneous part of degree $k$ of some polynomial $f$.\footnote{I.e., the sum of all monomials of degree $k$ with their coefficients.}
Since $\deg(p_n)$ is polynomially bounded, $p_{n,k}$ is zero when $k$ is large enough.
(This will usually be the case for any parameterization.) More generally, we will also
allow that $p_{n,k} = \hp {t(k)}(p_n)$ for some function $t$ that solely depends on $k$.

Recall that a vertex cover $C$ of a graph $G = (V,E)$ is a subset of $V$ such
that for every edge $e \in E$ at least one endpoint is in $C$.

\begin{example}
Let $\cG = (G_n)$ be a family of graphs such that $G_n$ has $n$ nodes.
We will assume that the nodes of $G_n$ are $\{1,\dots,n\}$.
\begin{enumerate}
\item The vertex cover family $(\VC_n^\cG)$ with respect to $\cG$ is defined as
\[
   \VC_n^{\cG} = \sum_{C \subseteq \{1,\dots,n\}} \prod_{i \in C} X_i
\]
where the sum is taken over all vertex covers $C$ of $G_n$.
\item The parameterized vertex cover family $(\VC_{n,k}^\cG)$, with respect to $\cG$, is defined as
\[
   \VC_{n,k}^{\cG} = \sum_{{\substack{C \subseteq \{1,\dots,n\}\\ {\lvert C\rvert = k}}}} \prod_{i \in C} X_i
\]
\end{enumerate}
where we now sum over all vertex covers of size $k$ of $G_n$. This is a homogeneous polynomial of degree $k$.
(We will call both families $\VC^\cG$. There is no danger of confusion, since
we mainly deal with the parameterized family.)
\end{example}

Every node has a label $X_i$ and for every vertex cover we enumerate (or more precisely, sum up) its weight, which
is the product of the labels of the nodes in it.
Above, every graph family defines a particular vertex cover family.
We can also define a unifying vertex cover family.

\begin{example}
Let $E_{i,j},X_i$, $1 \le i,j\le n$, $i < j$, be variables over some field $K$.
The \emph{parameterized vertex cover polynomial} of size $n$ is defined by
\[
   \VC_{n,k} = \sum_{\substack{C \subseteq \{1,\dots,n\}\\ {\lvert C\rvert = k}}}
               \prod_{\substack{i,j \notin C\\{i < j}}} (1 - E_{i,j}) \prod_{i \in C} X_i.
\]
The parameterized vertex cover family is defined as $(\VC_{n,k})$.
\end{example}

If we set the variables $E_{i,j}$ to values $e_{i,j} \in \{0,1\}$
we get the vertex cover polynomial of the graph given by the adjacency matrix $(e_{i,j})$.
The first product is $0$ if there is an uncovered edge.
More generally, if we take a family of graphs $\cG = (G_n)$ such that $G_n$ has $n$ nodes
and if we plug in the adjacency matrix of $G_n$ into in each $\VC_{n,k}$ then we get the family $(\VC^\cG_{n,k})$.
$(\VC^\cG_{n,k})$ is parameterized by the degree since we have $\VC^\cG_{n,k} = \hp k (\VC_n^\cG$).
$(\VC_{n,k})$, however, is not parameterized by the degree as $\VC_{n,k}$ contains monomials of degree polynomial in $n$ (independent of $k$).

Recall that a clique $C$ of a graph is a subset of the vertices such that for every pair
of nodes in $C$ there is an edge between them.

\begin{example}
\begin{enumerate}
\item Let $E_{i,j},X_i$, $1 \le i,j\le n$, $i < j$, be variables over some field $K$. The \emph{clique
polynomial} of size $n$ is defined by
\[
   \Cl_n = \sum_{C \subseteq \{1,\dots,n\}}
             \prod_{\substack{i,j \in C\\ {i < j}}} E_{i,j} \prod_{i \in C} X_i.
\]
The clique family is defined as $(\Cl_n)$.
\item The parameterized clique family $(\Cl_{n,k})$ is defined by
\[
   \Cl_{n,k} = \sum_{\substack{C \subseteq \{1,\dots,n\}\\ {\lvert  C\rvert = k}}}
             \prod_{\substack{i,j \in C\\ {i < j}}} E_{i,j} \prod_{i \in C} X_i.
\]
\end{enumerate}
(Again, we will call both families $\Cl$.)
\end{example}

If we set the variables $E_{i,j}$ to values $e_{i,j} \in \{0,1\}$,
we get the clique polynomial of the graph given by the adjacency matrix $(e_{i,j})$,
since the first product checks whether $C$ is a clique.
For each clique, we enumerate a monomial $\prod_{i \in C} X_i$. $X_i$ is the label
of the node $i$. $\Cl$ is a polynomial defined on edges and nodes.
This seems to be necessary, since the polynomial
$\sum_{C \subseteq \{1,\dots,n\}} \prod_{i \in C} X_i = (1 + X_1) \cdots (1 + X_n)$,
which is the ``node-only'' version of clique polynomial of the complete graph,
is easy to compute. Therefore, we cannot expect that the ``node-only'' version of the clique family
is hard for some class.

Notice, that the parameterized clique family $(\Cl_{n,k})$ has variables standing in for vertices. These vertices seem to be necessary, as in the counting world, counting the number of $k$ cliques and counting the number of $k$-independent sets are tightly related. Namely, the number of cliques is the number of independent sets on the complement graph. We want to keep this relationship as the problem is an important member of $\#\W{1}$ and hence we incorporate the vertices.

$(\Cl_{n,k})$ is parameterized by the degree,
since $\Cl_{n,k} = \hp {\binom k2 + k} (\Cl_n)$.
Here is another example, beside the general vertex cover family,  of a family
that is parameterized by a different parameter:

\begin{example}
Let $\cG = (G_{n,k})$ be a family of bipartite graphs such that $G_{n,k}$ has $n$ nodes on both sides
and genus $k$, $k \le \lceil (n-2)^2 / 4 \rceil$.\footnote{This is the genus of the $K_{n,n}$~\cite{ringel}.}
Let $A_{n,k}$ be the $n \times n$-matrix that has a variable
$X_{i,j}$ in position $(i,j)$ if there is an edge between $i$ and $j$ in $G_{n,k}$
and a $0$ otherwise.
The \emph{$\cG$-parameterized permanent family} $\per^\cG = (\per_{n,k}^\cG)$ is defined as
$\per_{n,k}^\cG = \per (A_{i,j})$.
\end{example}

There is another natural way to parameterize the permanent:

\begin{example}
Given a $k \times n$-matrix $X = (X_{i,j})$
with variables as entries, the rectangular permanent is defined as
\[
   \rper_{n,k} (X) = \sum_{\substack{f\colon \{1,\dots, k\} \to \{1, \dots,n\}\\ {\text{$f$ is injective}}}} \prod_{i = 1}^k X_{i,f(i)}.
\]
When $k = n$ then this is the usual permanent. The \emph{rectangular permanent family} is defined as $\rper = (\rper_{n,k})$.
\end{example}
We will give some more parameterizations of the permanent in \cref{sec:permanent} where we also prove some hardness results.

We now define fixed parameter variants of Valiant's classes.

\begin{definition}
\begin{enumerate}
\item A parameterized p-family $(p_{n,k})$ is in the class $\VFPT$ if
$C(p_{n,k})$ is bounded by $f(k)p(n)$ for some p-bounded function $p$ and
some arbitrary function $f\colon \Nset \to \Nset$.\footnote{$f$ need not be computable,
since Valiant's model is nonuniform.}
\item The subclass of $\VFPT$ of all parameterized p-families that are parameterized
by the degree is denoted by $\VFPTd$.
\end{enumerate}
\end{definition}

We will also say above that $C(p_{n,k})$ is \emph{fpt-bounded}.
We will see in one of the next sections that the vertex cover family and the
bounded genus permanent are in $\VFPT$. We will say that a family of circuits $(C_n,k)$ has \emph{fpt size} if the size is bounded by $f(k)p(n)$ for some function $f\colon\Nset \to \Nset$ and p-bounded function $p$.

\begin{definition}
A parameterized p-family $f = (f_{n,k})$ is an \emph{fpt-projection} of another parameterized p-family $g = (g_{n,k})$
if there are functions $r,s,t\colon \Nset \to \Nset$ such that $r$ is p-bounded and
$f_{n,k}$ is a projection of $g_{r(n)s(k),k'}$ for some $k' \le t(k)$.\footnote{$k'$
might depend on $n$, but its size is bounded by a function in $k$.
There are examples in the Boolean world, where this dependence on $n$ is used.}
We write
$f \lefp g$.
\end{definition}

\begin{lemma}\label{lem:p:closed}
If $f \in \VFPT$ (or $\VFPTd$) and $g \lefp f$, then $g \in \VFPT$ (or $\VFPTd$, respectively).
\end{lemma}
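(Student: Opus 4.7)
The plan is to unwrap the two definitions and observe that a projection can be implemented as a bounded-cost modification of a circuit, so everything composes in the expected way.

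First I would invoke $g \lefp f$ to obtain a p-bounded function $r$ and functions $s,t\colon\Nset\to\Nset$ together with, for each $(n,k)$, an index $k' \le t(k)$ and a substitution $\rho_{n,k}\colon X \to X \cup K$ witnessing $g_{n,k} = \rho_{n,k}(f_{r(n)s(k),\,k'})$. Then I would invoke $f \in \VFPT$ to get a function $h\colon\Nset\to\Nset$ and a p-bounded function $p$ with $C(f_{m,\ell}) \le h(\ell)\,p(m)$. Applying a projection to a circuit amounts to relabeling its input gates (variables by variables or constants), which does not increase the number of edges, so from the circuit for $f_{r(n)s(k),\,k'}$ we obtain a circuit for $g_{n,k}$ of size at most
\[
  h(k')\cdot p\bigl(r(n)\,s(k)\bigr) \;\le\; \Bigl(\max_{j \le t(k)} h(j)\Bigr)\cdot p\bigl(r(n)\,s(k)\bigr).
\]

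The next step is to separate the $n$- and $k$-dependence in $p(r(n)s(k))$. Since $p$ is p-bounded, fix a constant $c$ with $p(m) \le m^c + c$ for all $m$; since $r$ is p-bounded, there is a polynomial $q$ with $r(n) \le q(n)$. Then
\[
  p\bigl(r(n)s(k)\bigr) \;\le\; q(n)^c\, s(k)^c + c,
\]
so the whole bound takes the shape $\varphi(k)\cdot \psi(n)$ with $\psi$ p-bounded and $\varphi$ an arbitrary function of $k$. Hence $C(g_{n,k})$ is fpt-bounded and $g \in \VFPT$.

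For the $\VFPTd$ case, the same circuit construction works and gives $g \in \VFPT$; what remains is to verify that $g$ itself is parameterized by degree. A projection can only decrease degree, so
\[
  \deg(g_{n,k}) \;\le\; \deg\bigl(f_{r(n)s(k),\,k'}\bigr),
\]
and since $f \in \VFPTd$ the right-hand side depends only on $k'$, hence (via $k' \le t(k)$) only on $k$. Thus $(g_{n,k})$ is also parameterized by the degree and lies in $\VFPTd$.

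There is no real obstacle here; the one thing to be careful about is the bookkeeping that turns $p(r(n)s(k))$ into a product of a function of $n$ alone and a function of $k$ alone, which uses both that $r$ is p-bounded and that $p$ is p-bounded. Everything else is a routine composition of the fpt-projection with the given fpt-size circuit family.
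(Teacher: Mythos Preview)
Your proof is correct and follows essentially the same route as the paper: unwind the fpt-projection, use that a projection can only shrink the circuit, bound $C(g_{n,k})$ by $h(k')\cdot p(r(n)s(k))$, and then split $p(r(n)s(k))$ into a polynomial in $n$ times a function of $k$ using that both $p$ and $r$ are p-bounded. Your treatment of the $\VFPTd$ case via ``projections cannot raise degree'' is also exactly what the paper does; you are just slightly more explicit throughout.
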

\begin{proof}
  The proof essentially follows from the fact that fpt-bounded functions
  are closed under composition.
  Let the functions $r,s,t$ be as in the definition above.
  We know that for all $n,k$, $f_{n,k}(X_1,\dots,X_{p(n)}) = g_{r(n)s(k),k'}(Y_1,\dots, Y_{q(r(n)s(k))})$
  where each $Y_i \in K \cup \{X_1,\dots,X_{p(n)}\}$.
  Since $(g_{n,k}) \in \VFPT$, $C(g_{n,k}) \le u(k) m(n)$ for some p-bounded $m$ and an arbitrary
  function $u$.
  Then $C(f_{n,k}) \le u(k') m(r(n)s(k)) \le u(t(k)) m(r(n)s(k))$.
  Thus $C(f_{n,k})$ is fpt-bounded. (Note that $m(r(n)s(k))$ can be written as a product
  of a polynomial in $n$ and a function depending only on $k$, since $m$ is p-bounded.)
  Notice, that in the case of $\VFPTd$ the same argument holds as the projection can only lower
  the degree.
  \end{proof}

\begin{lemma}\label{lem:p:trans}
$\lefp$ is transitive.
\end{lemma}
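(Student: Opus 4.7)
The plan is to combine the two fpt-projections by composing the underlying substitutions and showing that all the size bounds still fit into the fpt-projection template. Suppose $f \lefp g$ and $g \lefp h$, witnessed by functions $r_1,s_1,t_1$ and $r_2,s_2,t_2$ respectively (with $r_1,r_2$ p-bounded). Then there exist $k' \le t_1(k)$ and $k'' \le t_2(k')$ such that $f_{n,k}$ is a projection of $g_{r_1(n)s_1(k),k'}$, which in turn is a projection of $h_{r_2(r_1(n)s_1(k))\cdot s_2(k'),\,k''}$.

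First I would note that projections compose: if $f = r(g)$ and $g = r'(h)$ as substitutions into $X \cup K$, then $f = (r' \circ r)(h)$, and this composite substitution still sends variables to $X \cup K$ because $r$ already lands in $X \cup K$ and constants are fixed by $r'$. Hence $f_{n,k}$ is a projection of $h_{r_2(r_1(n)s_1(k))\cdot s_2(k'),\,k''}$.

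Next I would check that the size index fits the shape $r(n)\cdot s(k)$ required by fpt-projection. Since $r_2$ is p-bounded, there is a polynomial $P$ of some degree $D$ with $r_2(m) \le P(m)$, and expanding yields
\[
   r_2\bigl(r_1(n)s_1(k)\bigr) \;\le\; P\bigl(r_1(n)s_1(k)\bigr) \;\le\; \bigl(1+r_1(n)\bigr)^D \cdot Q(s_1(k))
\]
for a suitable polynomial $Q$; the first factor is p-bounded in $n$ and the second depends only on $k$. Multiplying by $s_2(k') \le s_2(t_1(k))$ (which also depends only on $k$) lets us write the size index as $r(n)\cdot s(k)$ with $r$ p-bounded and $s$ a function of $k$ alone. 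Finally, the second parameter satisfies $k'' \le t_2(t_1(k))$, which is the desired function of $k$. Packaging these as the witnesses for $f \lefp h$ completes the proof.

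There is no real obstacle; the only mildly subtle point is the simple algebraic observation that a p-bounded function applied to a product $r_1(n)s_1(k)$ separates into a p-bounded function of $n$ times a function of $k$, which is exactly what allows fpt-bounds to compose.
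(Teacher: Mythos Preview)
Your proof is correct and follows essentially the same approach as the paper: compose the two projections and verify that the resulting size and parameter bounds still have the fpt shape. You are in fact more explicit than the paper about the one nontrivial point, namely that a p-bounded function applied to a product $r_1(n)s_1(k)$ can be dominated by a product of a p-bounded function of $n$ and a function of $k$; the paper simply invokes closure of fpt-bounded functions under composition. One tiny nit: your inequality $s_2(k') \le s_2(t_1(k))$ tacitly assumes $s_2$ is monotone, which is harmless since one can always replace $s_2$ by $k \mapsto \max_{j \le k} s_2(j)$.
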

\begin{proof}
  This is essentially the same proof as for p-projections, but instead
  of using that \mbox{p-bounded} functions are closed under composition, we use that fpt-bounded
  functions are closed under composition.
  Let $f \lefp g \lefp h$. For all $n,k$
  \[
     f_{n,k}(X_1, \dots, X_{p_f(n)}) = g_{r(n)s(k),k'}(Y_1,\ldots, Y_{p_g(r(n)s(k))})
  \]
  and
  \[
      g_{n,k}(X_1,\dots,X_{p_g(n)}) = h_{r'(n)s'(k), k''}(Y'_1,\dots,Y'_{p_h(r'(n)s'(k))})
  \]
  with p-bounded functions $p_f$, $p_g$, $p_h$, $r$, and $r'$, $k'\leq t(k)$ and $k''\leq t'(k)$
  for some arbitrary functions  $t$ and $t'$, and the $Y_i$ and  $Y'_i$ being constants or variables.
  Now the transitivity is obvious, as we can write $f_{n,k}$ as a projection of
  $h_{r'(r(n)s(k))s'(k'),k''}$ by substituting the $Y_i$ into the $Y'_i$,
  where $k''\leq t'(t(k))$.
  \end{proof}

One can define a notion of completeness. In the case of fpt-projections,
the degree of the polynomial is the only meaningful parameter to consider:
The permanent family on bounded genus graphs $\per^{\cG}$
is in $\VFPT$ and so is (a variant of) the vertex cover family $\VC$. However, every polynomial in the permanent family
has degree equal to the number of nodes in the graph (independent of the genus) whereas
the degree of the vertex cover polynomial depends on the degree. If a polynomial $p$ is a projection of $q$,
then $\deg p \le \deg q$. Therefore, $\per^{\cG}$ cannot be an fpt-projection of $\VC$.
Now we can call a parameterized family $f$ \emph{$\VFPTd$-complete} (under fpt-projections),
if it is in $\VFPTd$ and for all $g \in \VFPTd$, $g \lefp f$.

For other parameters, we need a stronger notion of reduction.
There are the so-called c-reductions, see~\cite{Buergisser:00}, which are the analogue of
Turing reductions in Valiant's world. This is the strongest kind of reduction one could use.
However, the p-projections in Valiant's world
seem to be weaker than parsimonious polynomial-time reductions in the
Boolean world. Therefore, we propose an intermediate concept,
which models parsimonious reductions in the algebraic world. In parsimonious reductions, the input instance
is transformed by a polynomial time or fpt computable reduction, then the function we reduce to is evaluated,
and the result that we get shall be the value of our given function evaluated at the original instance.

In the algebraic world, this can be modeled as follows:
We call a p-family $f = (f_n)$ with $f_n \in K[X_1,\dots,X_{p(n)}]$ a \emph{p-substitution}
of a p-family $g = (g_n)$ with $g_n \in K[X_1,\dots,X_{q(n)}]$
if there is a p-bounded function $r$, and for all $n$, there
are $h_1,\dots,h_{q(r(n))}$
such that $f_n = g_{r(n)}(h_1,\dots,h_{q(r(n))})$ and $\deg(h_i)$ as well as $C(h_i)$ is p-bounded for all $i$.
We write $f \les g$.
Compared to a projection, we are now allowed to substitute polynomials of
p-bounded complexity. We have that $\les$ is transitive and that $p \les q$ and
$q \in \VP$ implies $p \in \VP$.

The parameterized analogue is defined as follows.

\begin{definition}
A parameterized p-family $f = (f_{n,k})$ with $f_{n,k} \in K[X_1,\dots,X_{p(n)}]$
is an \emph{fpt-substitution}
of another parameterized p-family $g = (g_{n,k})$ with $g_{n,k} \in K[X_1,\dots,X_{q(n)}]$
if there are functions $r,s,t\colon \Nset \to \Nset$ such that for all $n,k$, $r$ is p-bounded
and there exists polynomials $h_1$,$\dots$,$h_{q(r(n)s(k))}$ $\in K[X_1,\dots,X_{p(n)}]$
such that
\[
   f_{n,k} = g_{r(n)s(k),k'}(h_1,\dots,h_{q(r(n)s(k))})
\]
for some $k' \le t(k)$ and $\deg(h_i)$ as well as $C(h_i)$ is fpt-bounded (with respect to $n$ and $k$)
for all $i$.
We write $f \lefs g$.
\end{definition}

\begin{lemma}\label{lem:s:closed}
If $f \in \VFPT$ and $g \lefs f$, then $g \in \VFPT$. \qed
\end{lemma}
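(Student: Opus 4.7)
The plan is to mimic the proof of \cref{lem:p:closed} almost verbatim: where an fpt-projection replaces each input of a small circuit for $f$ by a variable or constant, an fpt-substitution instead attaches a small circuit computing one of the polynomials $h_i$. Concretely, I would start from an optimal circuit $\Phi$ for $f_{r(n)s(k),k'}$; since $f\in\VFPT$, it has size at most $u(k')\,m(r(n)s(k))\le u(t(k))\,m(r(n)s(k))$ for some function $u$ and some p-bounded $m$, using that $k'\le t(k)$. At each of its $q(r(n)s(k))$ input gates, I would paste in an optimal circuit for the corresponding $h_i$, which by the definition of $\lefs$ has size at most $\alpha(k)\beta(n)$ for some function $\alpha$ and p-bounded $\beta$. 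The resulting object is a circuit for $g_{n,k}$.

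It then remains to bound its size. A direct count gives
\[
   u(t(k))\,m(r(n)s(k))\;+\;q(r(n)s(k))\cdot\alpha(k)\,\beta(n),
\]
and the task is to verify that this is fpt-bounded in $n$ and $k$. This is exactly the step already appearing in the proof of \cref{lem:p:closed}: since $m$, $q$, and $r$ are all p-bounded, a term like $m(r(n)s(k))$ is bounded by a polynomial in $r(n)s(k)$, which factors as a polynomial in $n$ times a function depending only on~$k$; likewise for $q(r(n)s(k))$. Hence each summand above is fpt-bounded in $n$ and $k$, and so is $C(g_{n,k})$.

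I expect no genuine obstacle. Conceptually this is the fpt-analogue of the remark preceding the definition of fpt-substitutions, namely that $p\les q$ and $q\in\VP$ imply $p\in\VP$, with p-boundedness systematically replaced by fpt-boundedness. The only minor subtlety is that fpt-bounded functions take two separate arguments, so one has to keep the ``$n$-part'' and the ``$k$-part'' cleanly separated when composing with mixed expressions of the form $r(n)s(k)$; writing the p-bounded functions in their separable polynomial form makes this immediate.
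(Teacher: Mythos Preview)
Your proposal is correct and is exactly the argument the paper has in mind: the lemma is stated with an immediate \qed, signaling that the proof is the obvious adaptation of \cref{lem:p:closed} where, instead of relabeling inputs, one plugs in fpt-size circuits for the $h_i$ and uses that fpt-bounded functions are closed under the relevant compositions. There is nothing to add.
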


\begin{lemma}\label{lem:s:trans}
$\lefs$ is transitive. \qed
\end{lemma}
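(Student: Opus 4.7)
The plan is to mimic the proof of \cref{lem:p:trans}, composing the two substitutions term-by-term, and then verify that the resulting composed substitution polynomials still satisfy the fpt-boundedness requirements on their degree and circuit complexity. The only genuinely new ingredient compared to the $\lefp$-case is tracking the circuit sizes through the composition.

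Assume $f \lefs g$ is witnessed by functions $r, s, t$ and polynomials $h_1,\dots,h_M$ (with $M$ equal to the variable count of $g$ evaluated at $r(n)s(k)$), and $g \lefs h$ by functions $r', s', t'$ and polynomials $h'_1,\dots,h'_N$ (with $N$ equal to the variable count of $h$ evaluated at $r'(m)s'(\ell)$). Instantiating the second witness at $(m,\ell) = (r(n)s(k),\, k')$ and substituting into the first gives the explicit representation
$$f_{n,k} \;=\; h_{r'(r(n)s(k))\,s'(k'),\;k''}\bigl(H_1,\dots,H_N\bigr),$$
where $H_j := h'_j(h_1,\dots,h_M)$, $k' \le t(k)$, and $k'' \le t'(k') \le t'(t(k))$.

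Three items then need checking. First, the outer index factors as a p-bounded function of $n$ times a function of $k$: since $r'$ is p-bounded and $r(n)s(k) \le r(n)\cdot s(k)$, we obtain $r'(r(n)s(k)) \le \tilde r(n)\tilde s(k)$ for some p-bounded $\tilde r$ and some $\tilde s$, and multiplying by $s'(k')$ only alters the $k$-dependent factor. Second, the degree of each $H_j$ satisfies $\deg(H_j) \le \deg(h'_j)\cdot \max_i \deg(h_i)$, a product of two fpt-bounded quantities and hence fpt-bounded. Third, by sharing the subcircuits for $h_1,\dots,h_M$ once and feeding them into a circuit for $h'_j$, we obtain $C(H_j) \le C(h'_j) + \sum_{i=1}^M C(h_i)$; here $M$ is p-bounded in $r(n)s(k)$ and thus fpt-bounded in $n,k$ (because $g$ is a parameterized p-family), each $C(h_i)$ is fpt-bounded by hypothesis, and $C(h'_j)$ is fpt-bounded once one observes that its index arguments $(r(n)s(k), k')$ are themselves fpt-bounded in $n,k$. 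Closure of fpt-bounded functions under sums and products then makes the whole sum fpt-bounded.

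The main point of care—and the only substantive departure from the proof of \cref{lem:p:trans}—is the circuit-size bound for the composed substitutions: it relies on closure of fpt-bounded quantities under composition (for the index and degree bounds) and under sums of fpt-many fpt-bounded terms (for the circuit bound). With these closure properties in hand, the data exhibited above witnesses $f \lefs h$, completing the proof.
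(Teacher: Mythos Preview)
Your proof is correct and is precisely the argument the paper has in mind: the lemma is stated with an immediate $\qed$ and no written proof, relying on the reader to adapt the transitivity argument for $\lefp$ (\cref{lem:p:trans}) together with closure of fpt-bounded functions under composition. Your write-up supplies exactly this, with the additional (and necessary) bookkeeping for the degree and circuit-size bounds on the composed substitution polynomials $H_j$.
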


To define an algebraic analogue of $\sharpW t$,
we study unbounded fan-in arithmetic circuits.
These circuits have multiplication and addition gates of arbitrary fan-in.
A gate with fan-in $2$ will be called a gate of bounded fan-in, any other
gate is a gate of unbounded fan-in. (Instead of $2$, we can fix any other bound $b$,
and we will do this in the  next sections to make the presentation more convenient.)

\begin{definition}
Let $C$ be an arithmetic circuit. The \emph{weft} of $C$ is the maximum
number of unbounded fan-in gates on any path from a leaf to the root.
\end{definition}

For $s,k \in \Nset$, $\ones s k$ denotes the set of all $\{0,1\}$-vectors
of length $s$ having exactly $k$ $1$s.

\begin{definition}
\begin{enumerate}
\item A parameterized p-family $(f_{n,k})$ is in $\VW t$, if there
is a p-family $(g_{n})$ of polynomials
$g_{n} \in K[X_1,\dots,X_{p(n)},Y_1,\dots,Y_{q(n)}]$ with p-bounded $p$ and $q$
such that $g_n$ is computed by a constant depth unbounded fan-in circuit of weft $\le t$ and polynomial size
and
\begin{equation} \label{eq:def:vw}
   (f_{n,k}) \lefs \Big( \sum_{e \in \ones{q(n)}{k}} g_{n}(X_1,\dots,X_{p(n)},e_1,\dots,e_{q(n)}) \, \Big).
\end{equation}
\item $\VWd t$ is the subset of all families in $\VW t$, that have the degree as the parameter.
\end{enumerate}
\end{definition}

In essence, we emulate the Boolean $\#\W t$ definition. Instead of Boolean circuits of weft $t$ we take an arithmetic circuit and instead of counting the number of assignments, we sum over all assignments. In addition, we only count the assignments that have weight $k$ by adjusting the vectors we sum over, namely to $\{0,1\}$-vectors with exactly $k$ ones. While in the Boolean setting the closure is taken with respect to parsimonious fpt-reductions, in the arithmetic setting, we take fpt-substitutions. Hence, our definition seems to be the most appropriate analogue.

The clique family is in $\VW 1$, since we can write it as
\[
  \Cl_{n,k} = \sum_{v \in \ones nk }
             \prod_{\substack{i,j = 1\\ {i < j}}}^n (E_{i,j}v_i v_j + 1 - v_i v_j)  \prod_{i = 1}^n (X_i v_i + 1 - v_i).
\]
This formula has weft 1, since there are two unbounded product gates and none is a predecessor of the other.
We replace the product over all $C$ by a product over all vertices and use the $v$-vectors to switch
variables on and off.

Like in the Boolean case,
we will show that the parameterized clique family is complete for the class $\VW 1$ (albeit
for a stronger notion of reductions, namely fpt-c-reductions).
It turns out that this proof is far more complicated than in the Boolean setting, since our circuits
can compute arbitrary polynomials and not only Boolean values. Furthermore, multiplication and addition
cannot be reduced to each other since there is no analog of de~Morgan's law.

\begin{definition}
Let $f = (f_{n,k})$ and $g = (g_{n,k})$ be parameterized p-families. $f$ \emph{fpt-c-reduces} to $g$ if
there is a p-bounded function $q\colon \Nset \to \Nset$ and functions $s,t\colon \Nset \to \Nset$
such that $C^{G_{q(n)s(k),t(k)}}(f_{n,k})$
is fpt-bounded, where $G_{q(n)s(k),t(k)} = \{ g_{i,j} \mid i \le q(n)s(k), \enspace j \le t(k) \}$.
We write $f \lefc g$.
\end{definition}

The following two lemmas are proved like for $\lec$ and $\VP$.
We replace oracle gates by circuits and use the fact that fpt-bounded
functions are closed under composition.

\begin{lemma}
If $f \in \VFPT$ and $g \lefc f$, then $g \in \VFPT$. \qed
\end{lemma}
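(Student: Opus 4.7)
The plan is to mimic the standard argument that $\VP$ is closed under $\lec$, replacing p-bounded composition by fpt-bounded composition. Suppose $g \lefc f$ with the p-bounded function $q$ and the functions $s,t$ from the definition, and suppose $f \in \VFPT$ with $C(f_{n,k}) \le u(k)m(n)$ for some p-bounded $m$ and arbitrary $u$. Then by hypothesis there is a circuit $\Gamma_{n,k}$ with oracle gates for polynomials in $G_{q(n)s(k),t(k)}$ computing $g_{n,k}$, and the total size $|\Gamma_{n,k}|$ (counting each oracle gate with its arity) is fpt-bounded, say $|\Gamma_{n,k}| \le F(k) P(n)$ with $F$ arbitrary and $P$ p-bounded.

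First I would eliminate the oracle gates one by one. For each oracle gate in $\Gamma_{n,k}$ referring to some $f_{i,j}$ with $i \le q(n)s(k)$ and $j \le t(k)$, replace it by a smallest circuit computing $f_{i,j}$, whose inputs are wired to the arguments fed into the oracle gate. By the assumption $f \in \VFPT$, the size of this replacement is at most
\[
   u(j)\,m(i) \;\le\; u(t(k))\,m(q(n)s(k)).
\]
Since $m \circ (q \cdot s)$ can be written as the product of a polynomial in $n$ and a function depending only on $k$ (because $m$ and $q$ are p-bounded and $s$ depends only on $k$), this bound is fpt-bounded; call it $U(k)M(n)$ with $M$ p-bounded.

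Next, I would bound the total size of the resulting oracle-free circuit for $g_{n,k}$. Each of the at most $F(k)P(n)$ gates of $\Gamma_{n,k}$ is replaced by a subcircuit of size at most $U(k)M(n)$, so the new circuit has size at most
\[
   F(k)P(n) \cdot U(k)M(n) \;=\; \bigl(F(k)U(k)\bigr)\bigl(P(n)M(n)\bigr),
\]
which is of the form (arbitrary function of $k$) $\cdot$ (polynomial in $n$), i.e.\ fpt-bounded. Hence $C(g_{n,k})$ is fpt-bounded and $g \in \VFPT$.

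There is no real obstacle here; the only point that deserves care is verifying that composing the p-bounded function $m$ with $q(n)s(k)$ still yields a product of a polynomial in $n$ and a function of $k$, which is the same observation already used in the proof of \cref{lem:p:closed}. Everything else is bookkeeping of the fpt bounds under composition and multiplication.
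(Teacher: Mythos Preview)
Your argument is correct and is exactly the approach the paper intends: the paper states this lemma without proof, noting only that one ``replace[s] oracle gates by circuits and use[s] the fact that fpt-bounded functions are closed under composition,'' which is precisely what you carry out. The only tacit assumption in your inequality $u(j)m(i)\le u(t(k))m(q(n)s(k))$ is monotonicity of $u$ and $m$; this is harmless since one may replace $u$ by $\tilde u(k)=\max_{j\le k}u(j)$ and take a monotone polynomial majorant for $m$.
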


\begin{lemma}
$\lefc$ is transitive. \qed
\end{lemma}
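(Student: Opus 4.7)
The plan is to imitate the standard transitivity argument for c-reductions in Valiant's world, adding the bookkeeping needed to track the parameter $k$. Suppose $f \lefc g$ and $g \lefc h$. From the first reduction there is a p-bounded $q$ and functions $s,t$ such that each $f_{n,k}$ has an fpt-sized arithmetic circuit $C_{n,k}$ whose $g$-oracle gates query only entries $g_{i,j}$ with $i \le q(n)s(k)$ and $j \le t(k)$. From the second reduction there is a p-bounded $q'$ and functions $s',t'$ such that each such $g_{i,j}$ has an fpt-sized circuit $D_{i,j}$ whose $h$-oracle gates query only $h_{i',j'}$ with $i' \le q'(i)s'(j)$ and $j' \le t'(j)$.

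The key step is a straightforward substitution: I would replace every $g$-oracle gate in $C_{n,k}$ by the corresponding subcircuit $D_{i,j}$, obtaining a single arithmetic circuit computing $f_{n,k}$ that uses only $h$-oracle gates. Its size is bounded by $|C_{n,k}|$ times the maximum $|D_{i,j}|$ over the oracle queries that actually occur, which is a product of fpt-bounded functions and hence fpt-bounded, by the same closure argument used in the proofs of \cref{lem:p:closed} and \cref{lem:p:trans}.

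It remains to verify that the $h$-oracle gates appearing in the composed circuit still have the shape required by the definition of $\lefc$. Each such gate queries some $h_{i',j'}$ with $i' \le q'(q(n)s(k)) \cdot s'(t(k))$ and $j' \le t'(t(k))$. Since $q$ and $q'$ are both p-bounded, $q'(q(n)s(k))$ factors as a polynomial in $n$ times a function of $k$ alone, so the whole product takes the form $q''(n) \cdot s''(k)$ with $q''$ p-bounded; setting $t''(k) := t'(t(k))$ then exhibits the bounds witnessing $f \lefc h$.

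There is no genuine obstacle here: the only thing one has to be careful about is precisely this factoring in the last paragraph, namely that after composition the first-argument bound still splits into a p-bounded $n$-part and an arbitrary $k$-part. This is the same closure property already exploited in the proofs of \cref{lem:p:trans} and \cref{lem:s:trans}, so the argument is entirely analogous to the transitivity of $\lec$ in ordinary Valiant theory.
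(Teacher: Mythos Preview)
Your proposal is correct and follows exactly the approach the paper indicates: replace each $g$-oracle gate by the corresponding fpt-sized $h$-oracle circuit and then observe that fpt-bounded functions are closed under composition, with the factoring of $q'(q(n)s(k))$ into a p-bounded $n$-part times a $k$-only part handling the required shape of the bounds. This is precisely the ``proved like for $\lec$ and $\VP$'' argument the paper alludes to, so there is nothing to add.
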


So we have two different notions to define $\sharpW t$-hardness.
Presumably, they are different, see~\cite{DBLP:journals/ipl/IkenmeyerM18}.

\section{$\VFPT$}

\begin{theorem}\label{thm:fptv:vertexcoverfpt}
For every family of graphs $\cG = (G_n)$, where $G_n$ has $n$ nodes,
$\VC_{n,k}^{\cG}$ is in $\VFPTd$.
\end{theorem}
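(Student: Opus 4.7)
The plan is to mimic the classical bounded search tree FPT algorithm for \emph{Vertex Cover}, adapted so that it produces an algebraic circuit rather than merely a count. The guiding combinatorial fact is that for any edge $\{u,v\}$ of $G_n$, every vertex cover must contain $u$ or $v$.

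I would first introduce an auxiliary polynomial
\[
   P(G, S, k) \;=\; \sum_{\substack{C \subseteq S,\ |C|=k \\ C \text{ is a vertex cover of } G}} \prod_{i \in C} X_i,
\]
so that $\VC_{n,k}^{\cG} = P(G_n, V(G_n), k)$. Fixing any edge $\{u,v\}$ of $G$ and splitting according to whether $u \in C$ or $u \notin C$ (in the latter case forcing $v \in C$) yields
\[
  P(G, S, k) \;=\; X_u \cdot P(G - u,\, S \setminus \{u\},\, k-1) \;+\; X_v \cdot P(G - v,\, S \setminus \{u, v\},\, k-1),
\]
where $G-w$ denotes $G$ with $w$ and its incident edges removed, and each summand is dropped if its leading variable lies outside $S$. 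The base cases are: if $G$ has no edges, then $P(G, S, k) = e_k(X_S)$, the elementary symmetric polynomial of degree $k$ in the variables indexed by $S$; if $k = 0$, the value is $1$ when $G$ is edgeless and $0$ otherwise.

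Unfolding the recurrence starting from $P(G_n, V(G_n), k)$, every expansion step strictly decreases the parameter, so the resulting binary recursion tree has depth at most $k$ and at most $2^k$ leaves. Each leaf calls for a circuit computing some $e_{k'}(X_{S'})$ with $k' \leq k$ and $|S'| \leq n$; the Pascal-style recurrence $e_{k'}(x_1,\dots,x_m) = x_m \cdot e_{k'-1}(x_1,\dots,x_{m-1}) + e_{k'}(x_1,\dots,x_{m-1})$ supplies such a circuit of size $O(m k')$. Glueing the $O(2^k)$ internal additions and multiplications onto these leaf subcircuits gives a total size bounded by $2^k \cdot O(nk) = f(k)\cdot \poly(n)$. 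Since $\VC_{n,k}^{\cG}$ is homogeneous of degree $k$, the family is parameterized by the degree, placing it in $\VFPTd$.

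There is no real obstacle here: the argument is essentially an algebraic rewriting of the classical branching algorithm, and the circuits are nonuniform so there is no issue with computing the branching choices from $G_n$. The only care needed is in maintaining the "still available" set $S$ correctly through the recursion so that no cover is produced twice and that the base-case leaves are evaluated on the correct residual subgraph; once the recurrence is set up properly, both the fpt size bound and the degree parameterization are immediate.
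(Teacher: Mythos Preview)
Your proposal is correct and follows the same bounded--search--tree philosophy as the paper, but the branching recurrence you use is genuinely different. The paper branches on an edge $\{u,v\}$ via inclusion--exclusion,
\[
   P(G,k) = X_u\,P(G-u,k-1) + X_v\,P(G-v,k-1) - X_uX_v\,P(G-\{u,v\},k-2),
\]
correcting for the double count when both $u$ and $v$ lie in the cover; this yields a ternary recursion and an $O(3^k\cdot\poly(n))$ bound. You instead track an explicit ``available'' set $S$ and split on the disjoint events $u\in C$ versus $u\notin C$ (forcing $v\in C$ and removing $u$ from $S$), which avoids any overlap and gives a binary tree with the sharper $O(2^k\cdot\poly(n))$ bound. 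The price you pay is the extra parameter $S$, which you handle correctly (in particular, dropping a summand when its leading vertex is outside $S$ is exactly what is needed to keep the cases disjoint). Both arguments land in $\VFPTd$ for the same reason: the degree is $k$ and the circuit size is $f(k)\poly(n)$.
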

\begin{proof}
  Given $G=(V,E)$ with $V = \{1,\dots,n\}$ and a parameter $k$,
  we work in the same way as the classical FPT algorithm. We take an arbitrary order of
  edges. For every edge, we branch on which vertex we add to our cover and recurse. We define for
  an edge set $E_{\setminus \{v\}}$ to denote the set
  $\{(u,u') \in E\mid u\neq v\text{ and } u'\neq v\}$.
  Let $P(G,k)$ denote the $k$-vertex cover polynomial of $G$. We write a recursive formula
  for $P(G,k)$:
  \[
          P(G,k) = \begin{cases}
              1 & \text{if $k=0$ and $E=\emptyset$},\\
              0 & \text{if $k=0$, $E\neq \emptyset$},\\
              \sum_{i_1 < \dots < i_k \in V} X_{i_1} \cdots X_{i_k}& \text{if $k\neq 0$, $E=\emptyset$},\\
          \begin{aligned}
              &X_{u}\cdot P((V\setminus\{u\}, E_{\setminus\{u\}}),k-1)\\
              &\qquad +X_{v}P((V\setminus\{v\}, E_{\setminus\{v\}}),k-1) \\
              &\qquad - X_u X_v P((V\setminus\{u,v\}, E_{\setminus\{u,v\}}),k-2)
          \end{aligned}  &\text{otherwise.}
      \end{cases}
      \]
  The first line of the above recursive definition means
  that the vertex cover polynomial of the empty graph is $1$.
  If there are edges left after $k$ recursive steps, we remove this computation subtree by
  multiplying it with zero (second line).
  If the there are no edges, then the vertex cover polynomial is the $k$th elementary
  symmetric polynomial on the vertex set $V$ (third line).
  Finally, we branch on a chosen edge $e = \{u,v\}$ in the fourth line:
  At least one of the two nodes $u$ and $v$ has to cover the chosen edge.
  We subtract the third term, since the case when both $u$ and $v$ cover the edge is counted
  twice by the first two terms.
  It is easy to see that we always get the same polynomial independent
  of the ordering of the edges.
  The number of operations and hence the size of the circuit is bounded by $T(n,k) \le 3T(n,k-1) + O(1)$
  plus $\poly(n)$ for computing the elementary symmetric polynomial in the third case.\footnote{The polynomial is just the homogeneous components of $t$ of degree $n-k$ in $(t+X_1)\cdots (t+X_n)$.}
  Thus $T(n,k)$ is fpt-bounded.
  \end{proof}

\begin{remark}
It is unlikely that the general family $\VC_{n,k}$ is in $\VFPT$.
Take any graph $G = (V,E)$ on $n$ nodes and $m$ edges and compute $\VC_{n,k}$ on this graph. Now, for $i < j$, we set
\[
  E_{i,j} = \begin{cases}
              1 - S & \text{if $\{i,j\} \in E$}, \\
              0     & \text{otherwise},
            \end{cases}
\]
and $X_i = T$ for all $i$. Then we get a bivariate polynomial. This polynomial
contains a monomial $S^i T^j$ iff there is a vertex cover of size $j$ in $G$
not covering $i$ edges, or, equivalently, covering $m - i$ edges. Note that since
the polynomial is now bivariate, we can easily compute its coefficients using
interpolation. While the (Boolean decision version of) vertex cover is in $\mathsf{FPT}$,
it turns out~\cite{DBLP:journals/mst/GuoNW07} that the more general question whether there is a set of nodes of size $k$
covering at least $t$ edges is $\mathsf{W}[1]$-hard (with parameter $k$).
Therefore, it seems to be unlikely that $\VC_{n,k}$ has circuits of fpt size.

Mahajan and Saurabh~\cite{DBLP:conf/csr/MahajanS16}
define another variant of the vertex cover polynomial. We multiply each cover
by a product over the uncovered edges. They multiply by a product over the covered edges.
Both polynomials are essentially equivalently, one can turn one into the other by dividing
through the product over all edges, doing a variables transform, and removing divisions.
\end{remark}

The \emph{sun graph} $S_{n,k} = (V,E)$ on $n$ nodes is defined as follows: The first
$2k$ nodes form a clique. And every other node is connected to the nodes $1,\dots,2k$,
but to no other nodes, that is, the nodes $2k + 1,\dots,n$ form an independent set.
Every graph $G$ with $n$ nodes that contains a vertex cover of size $k$ is a subgraph of $S_{n,k}$.
To see this, we compute a maximum matching $M$ in $G$. This matching $M$ has size at most $k$, since
at least one endpoint of each edge in $M$ has to be in a vertex cover. There cannot be any edge between two nodes that are not
endpoints of an edge in $M$, since $M$ is maximum. Therefore, all nodes not in $M$ form an independent set.
We map the nodes in $M$ to the nodes $1,\dots,2k$ in $S_{n,k}$ and all other nodes to the nodes $2k + 1,\dots,n$.

We define $\sVC_{n,k}$ like $\VC_{n,k}$ but on the graph $S_{n,k}$ instead of $K_n$.
The difference to $\VC$ is, that we now have some idea where the vertex
cover is located (like it is in the Boolean case where we can find a potential set for instance by computing a maximum matching).
Therefore, we can obtain:

\begin{theorem}\label{thm:vcsun}
$\sVC \in \VFPT$.
\end{theorem}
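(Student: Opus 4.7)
The plan is to decompose any size-$k$ candidate vertex cover $C$ of $S_{n,k}$ into its clique part $A := C\cap\{1,\dots,2k\}$ and its independent-set part $B := C\cap\{2k+1,\dots,n\}$, with $|A|+|B|=k$. Since $A\subseteq\{1,\dots,2k\}$ and $|A|\le k$, there are at most $\sum_{i=0}^{k}\binom{2k}{i}\le 4^k$ choices for $A$, so we can afford to enumerate $A$ explicitly and build one subcircuit per $A$, at the cost of an overall factor of $4^k$.

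For a fixed $A$, write $A':=\{1,\dots,2k\}\setminus A$ and group the factors of $\sVC_{n,k}$ according to whether they depend on $B$. The vertex monomial $\prod_{i\in A}X_i$ and the clique-edge factor $\prod_{i,j\in A',\,i<j}(1-E_{i,j})$ depend only on $A$ and can be computed by a subcircuit of size $O(k^2)$. For each $j\in\{2k+1,\dots,n\}$, introduce $f_j := \prod_{i\in A'}(1-E_{i,j})$, which again depends only on $A$ and has a subcircuit of size $O(k)$. The factor coming from edges between the clique and the independent set of $S_{n,k}$ then collapses to $\prod_{j\in\{2k+1,\dots,n\}\setminus B}f_j$, so the part of $\sVC_{n,k}$ coming from $A$ is the product of the $A$-only factors with
\[
  S_A \;=\; \sum_{\substack{B\subseteq\{2k+1,\dots,n\}\\|B|=k-|A|}}\prod_{j\in B}X_j\prod_{j\in\{2k+1,\dots,n\}\setminus B}f_j.
\]

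The main obstacle is that $B$ ranges over a set of size $n-2k$ with up to $\binom{n-2k}{k-|A|}$ choices, which is not fpt-bounded in $k$ alone, so we cannot enumerate $B$ directly. The key observation is that $S_A$ is precisely the coefficient of $t^{k-|A|}$ in the univariate polynomial
\[
  P_A(t)\;:=\;\prod_{j=2k+1}^n \bigl(tX_j+f_j\bigr)
\]
of degree $n-2k$ (with coefficients in $K[X,E]$). By Lagrange interpolation at any $n-2k+1$ distinct constants $t_0,\dots,t_{n-2k}\in K$, the coefficient $S_A$ becomes a fixed $K$-linear combination of the evaluations $P_A(t_0),\dots,P_A(t_{n-2k})$, and each $P_A(t_i)$ has a circuit of size $O(n)$ once the $f_j$'s (total cost $O(nk)$) are available. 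This yields a subcircuit of size $O(n^2)$ for $S_A$. Summing the contributions over all $A$ gives an arithmetic circuit for $\sVC_{n,k}$ of size $O(4^k\cdot n^2)$, which is fpt-bounded; since $\sVC$ is clearly a parameterized p-family, this places it in $\VFPT$.
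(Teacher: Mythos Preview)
Your proof is correct and follows essentially the same approach as the paper: both split a candidate cover into its clique part $A\subseteq\{1,\dots,2k\}$ (enumerated explicitly, at cost bounded in $k$) and its independent-set part $B$, and both handle the sum over $B$ by expanding the product $\prod_{j=2k+1}^{n}\bigl(X_j+\prod_{i\in A'}(1-E_{i,j})\bigr)$ and extracting the part of $X$-degree $k-|A|$. The only cosmetic difference is that the paper phrases this last step as ``take the homogeneous component of degree $k-|A|$ in the $X$-variables'' and appeals to the standard fact that homogeneous components have polynomial-size circuits, whereas you introduce an auxiliary variable $t$ and spell out the extraction via Lagrange interpolation; these are the same mechanism.
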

\begin{proof}
  Let $C \subseteq \{1,\dots,2k\}$ with $\kappa \coloneqq \lvert C\rvert \le k$. The contribution to $\sVC_{n,k}$ of $C$ is
  \[
     \prod_{\substack{i,j \in \{1,\dots,2k\} \setminus C\\ {i < j}}} (1 - E_{i,j}) \prod_{i \in C} X_i .
  \]
  To the set $C$, we can add $k - \kappa$ nodes from $\{2k+1,\dots,n\}$.
  Consider the product
  \[
     \prod_{j = 2k + 1}^n \left(X_j + \prod_{i \notin C} (1 - E_{i,j}) \right).
  \]
  We go over all nodes in $\{2k+1,\dots,n\}$. For each node $j$, we either add it to the cover or
  the edges incident with $j$ are uncovered unless they are covered by the other node $i \in \{1,\dots,2k\}$.
  The homogeneous component of degree $k - \kappa$ in the $X$-variables is the contribution
  to $\sVC_{n,k}$ by the outer nodes. Altogether, we have
  \[
     \sVC_{n,k} = \sum_{\substack{C \subseteq \{1,\dots,2k\}\\  {\lvert C\rvert \le k}}}
       \left(\prod_{\substack{i,j \in \{1,\dots,2k\} \setminus C\\ {i < j}}} (1 - E_{i,j}) \prod_{i \in C} X_i \right)
       \cdot H_{k - \lvert C\rvert,X} \left( \prod_{j = 2k + 1}^n \left(X_j + \prod_{i \notin C} (1 - E_{i,j}) \right)  \right).
  \]
  Here $H_{k - \lvert C\rvert,X}$ denotes the homogeneous components of degree $k - \lvert C\rvert$ in the $X$-variables.
  The number of summands is bounded by a function $k$ and each summand can be computed by a
  polynomial sized circuit. Thus $\sVC_{n,k} \in \VFPT$.
  \end{proof}

Both parameterized permanent families turn out to be fixed parameter tractable.

\begin{theorem}\label{thm:genusper}
For every family of bipartite graphs $\cG = (G_{n,k})$ such that $G_{n,k}$ has
$n$ nodes and genus $k$, $\per^\cG$ is in $\VFPT$.
\end{theorem}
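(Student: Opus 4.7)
My plan is to reduce the computation of $\per_{n,k}^\cG$ to a small number of determinant computations by invoking the classical Galluccio--Loebl generalisation of the FKT algorithm to graphs of bounded genus. Recall that for a bipartite graph $G$ embedded on an orientable surface of genus $g$, the permanent of its (weighted) biadjacency matrix $A$ can be written as
\[
    \per(A) = \sum_{i=1}^{4^g} \epsilon_i \det(A^{(i)}),
\]
where each $A^{(i)}$ is obtained from $A$ by flipping the signs of entries according to one of $4^g$ sign patterns prescribed by the mod-$2$ homology classes of cycles on the surface, and each $\epsilon_i \in \{-1,+1\}$ is a fixed coefficient depending only on the embedding. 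In the planar case $g = 0$ this specialises to the single Pfaffian orientation of the classical FKT identity $\per(A) = \det(A')$.

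Applied to the family $\cG = (G_{n,k})$, the plan is as follows. First, I would fix, for each pair $(n,k)$, an embedding of $G_{n,k}$ on a genus-$k$ surface together with the resulting $4^k$ sign patterns. This yields matrices $A_{n,k}^{(1)}, \dots, A_{n,k}^{(4^k)}$, each of which differs from $A_{n,k}$ only by negations of certain entries, so each $A_{n,k}^{(i)}$ has entries in $\{0\} \cup \{\pm X_{i',j'}\}$. Second, each determinant $\det(A_{n,k}^{(i)})$ is a polynomial of degree $n$ in the variables $X_{i',j'}$ of $A_{n,k}$, and can be computed by an arithmetic circuit of size $\poly(n)$ using, e.g., the Samuelson--Berkowitz algorithm, which works over any commutative ring. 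Third, I would sum these $4^k$ determinants with the prescribed signs $\epsilon_i$ to produce $\per_{n,k}^\cG$.

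The resulting circuit has total size bounded by $4^k \cdot \poly(n)$, which is fpt-bounded with $f(k) = 4^k$ and a polynomial depending only on $n$. Since Valiant's model is nonuniform, there is no obstacle in choosing a genus-$k$ embedding of $G_{n,k}$ or the associated sign patterns: we only need them to exist. The main conceptual ingredient is the Galluccio--Loebl identity itself; once it is in hand, the rest is a routine application of standard polynomial-size circuit constructions for the determinant, and the bound $f(k) \cdot \poly(n)$ follows immediately from the $4^k$ term count. I would not expect a significant obstacle; the only point worth spelling out is that the $\epsilon_i$ and sign patterns can be absorbed into constants of the circuit without affecting its size.
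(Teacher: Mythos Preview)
Your proposal is correct and is essentially the same argument as the paper's: invoke the Galluccio--Loebl identity to express $\per^\cG_{n,k}$ as a linear combination of $4^k$ determinants of $n\times n$ matrices with linear entries, then use polynomial-size determinant circuits to obtain an fpt-bounded circuit of size $4^k\cdot\poly(n)$. The paper's proof is just a two-sentence citation of the same result; your version simply spells out the details (sign patterns, nonuniformity, Berkowitz) more explicitly.
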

\begin{proof}
  Galluccio and Loebl~\cite{DBLP:journals/combinatorics/GalluccioL99} prove that the permanent
  of a bipartite graph with genus $k$ can we written as a linear combination of $4^k$ determinants
  of $n \times n$-matrices whose entries are linear forms. Since the determinant has
  polynomial size circuits, the claim follows.
  \end{proof}

\begin{theorem}\label{thm:rper}
$\rper \in \VFPT$.
\end{theorem}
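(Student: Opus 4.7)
The plan is to produce an explicit fpt-sized circuit by lifting Ryser's formula to the rectangular setting. First I would split each injective $f\colon[k]\to[n]$ into its image~$S$ together with a bijection $[k]\to S$, which gives
\[
   \rper_{n,k}(X) = \sum_{\substack{S\subseteq \{1,\dots,n\}\\ |S|=k}} \per(X_S),
\]
where $X_S$ denotes the $k\times k$ submatrix of $X$ formed by the columns indexed by~$S$.

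Next I would apply Ryser's formula, summed over subsets of rows (i.e.\ on the transpose), to each of these $k\times k$ permanents:
\[
   \per(X_S) = (-1)^k \sum_{T\subseteq \{1,\dots,k\}} (-1)^{|T|} \prod_{j\in S} \Bigl(\sum_{i\in T} X_{i,j}\Bigr),
\]
and then interchange the summations over $S$ and $T$. Writing $L^T_j \coloneqq \sum_{i\in T} X_{i,j}$, the inner sum $\sum_{|S|=k}\prod_{j\in S} L^T_j$ is by definition the $k$-th elementary symmetric polynomial $e_k(L^T_1,\dots,L^T_n)$, which yields the rectangular Ryser identity
\[
   \rper_{n,k} = (-1)^k \sum_{T\subseteq \{1,\dots,k\}} (-1)^{|T|}\, e_k\bigl(L^T_1,\dots,L^T_n\bigr).
\]

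Finally I would read off a circuit. The outer sum carries only $2^k$ summands, so the fpt explosion is localised there. For each $T$, the $n$ linear forms $L^T_j$ together have circuit size $O(nk)$, and $e_k$ on $n$ inputs admits a circuit of size $O(nk)$ via the standard recurrence $e_k(L_1,\dots,L_n) = e_k(L_1,\dots,L_{n-1}) + L_n \cdot e_{k-1}(L_1,\dots,L_{n-1})$. Assembling everything gives a circuit of size $2^k\cdot\poly(n,k)$, which is fpt-bounded, and since $\deg\rper_{n,k}=k$ we even land in $\VFPTd$. The only conceptual point is to apply Ryser on the $k$-row side rather than on the $n$-column side, so that the exponential blow-up is confined to $2^k$ while the sum over $k$-subsets of columns collapses into a single elementary symmetric polynomial per~$T$; beyond this observation no serious obstacle arises.
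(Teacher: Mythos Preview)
Your argument is correct. The rectangular Ryser identity you derive is valid, and the circuit-size analysis is sound; in fact your bound $2^k\cdot O(nk)$ is tighter than the $O(k\,2^k n^3)$ quoted in the paper.

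The paper's own proof, however, consists of a single sentence citing the construction of Vassilevska and Williams~\cite{DBLP:conf/stoc/VassilevskaW09}, which produces an arithmetic circuit of size $O(k\,2^k n^3)$ for $\rper_{n,k}$. So the difference is not really one of mathematical strategy---their construction is also based on an inclusion--exclusion over the $k$ rows, essentially the same mechanism you exploit via Ryser---but one of presentation: you give a self-contained derivation, whereas the paper outsources the work to an external reference. Your version has the advantage of being explicit and of making the $\VFPTd$ membership visible (degree equals $k$), at no extra cost; the paper's version has the advantage of brevity.
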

\begin{proof}
  Vassilevska and Williams~\cite{DBLP:conf/stoc/VassilevskaW09} construct an arithmetic circuit
  for $\per_{n,k}$ of size $O(k 2^k n^3)$.
\end{proof}

Kernelization is an important concept in parameterized complexity. In the algebraic
setting, $\VFPT$ can also be characterized by kernels of size $f(k)$.

\subsection{Kernelization}\label{sec:fptv:kernelization}

In the Boolean world, a \emph{kernelization} is a special kind of reduction.
Given a parameterized decision problem $L: \Sigma^* \times \Nset \to \{0,1\}$,
a kernelization is a polynomial time computable
mapping (reduction) $r\colon \Sigma^* \times \Nset \to \Sigma^* \times \Nset$
such that
\begin{itemize}
\item $(x,k) \in L$ $\iff$ $r(x,k) \eqqcolon (x',k') \in L$,
\item $\lvert x'\rvert$ is bounded by a computable function in $k$ and $k'$ is bounded by a
computable function in $k$.
\end{itemize}
So in polynomial time, $x$ is reduced to an instance $x'$ the size of which
only depends on $k$. $x'$ can be solved using brute-force and $x'$ is often called
the problem kernel.
It is know that a decision problem is in $\FPT$ if and only if it is decidable and admits a kernelization.

Kernelizations for counting problems or enumeration problems are not well
understood. Consider the vertex cover problem: A graph with $n$ nodes can have
a number of vertex covers that depends on $n$, but a kernel can only have a number of vertex covers
that depends solely on $k$. Therefore, parsimonious reductions cannot work.
See~\cite{DBLP:conf/tamc/Thurley07} for some work on counting kernels
and~\cite{DBLP:conf/mfcs/CreignouMMSV13} for a definition of enumeration kernels.

\begin{definition}\label{def:fptv:vkernel}
Let $(f_{n,k})$ be a parameterized $p$-family such that
$f_{n,k} \in K[X_1,\dots,X_{p(n)}]$. A $\emph{kernelization}$
consists of functions $r\colon \Nset^2 \to \Nset$ and $s, t\colon \Nset \to \Nset$ such that for all $n$ and $k$,
$r(n,k) \le s(k)$ and
there are circuits $C_{n,k,i}$ of size polynomial in $n$ computing polynomials $q_{n,k,i}$, $1 \le i \le p(r(n,k))$,
and there is a $k' \le s(k)$ such that
\begin{equation} \label{eq:fptv:vkernel}
   f_{n,k}(X_1,\dots,X_{p(n)}) = f_{r(n,k),k'}(q_{n,k,1},\dots,q_{n,k,p(r(n,k))}).
\end{equation}
\end{definition}

Essentially, this definition means that every $f_{n,k}$
is a p-substitution of some $f_{n',k'}$ and $n'$ and $k'$ are both bounded by a function in $k$.

We can also prove a general theorem similar to the Boolean setting:

\begin{theorem}\label{thm:fptv:veqkernelfpt}
Let $f$ be a parameterized p-family.
\begin{enumerate}
\item If $f$ admits a kernelization, then $f \in \VFPT$.
\item If $f \in \VFPT$  and there are $n_0$ and $k_0$ such that $f_{n_0,k_0}$
is linear in some variable $X_i$, then $f$ admits a kernelization.
\end{enumerate}
\end{theorem}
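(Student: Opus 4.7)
My plan is to prove the two directions of the equivalence separately, using Definition~\ref{def:fptv:vkernel} and unfolding $\VFPT$.

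For part (1), starting from a kernelization expressing $f_{n,k} = f_{r(n,k), k'}(q_{n,k,1}, \dots, q_{n,k,p(r(n,k))})$ with $r(n,k), k' \le s(k)$, I would argue as follows. The polynomial $f_{r(n,k), k'}$ has at most $p(s(k))$ variables and, since both of its indices are bounded by $s(k)$, the p-bounded degree condition on the family $(f_{n,k})$ bounds its degree by a function of $k$ alone. Any polynomial with that few variables and that small a degree admits a naive circuit of size depending only on $k$ (simply enumerate and sum its monomials). Substituting the circuits of size polynomial in $n$ that compute the $q_{n,k,i}$ into this gate-by-gate yields a circuit for $f_{n,k}$ of size bounded by a function of $k$ times a polynomial in $n$, i.e.\ fpt-bounded, so $f \in \VFPT$.

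For part (2), the key idea is that linearity in a single variable lets $f_{n_0,k_0}$ serve as a universal ``affine gadget'' under substitution. I would write $f_{n_0,k_0} = A \cdot X_i + B$ where $A, B \in K[X_1,\dots,\hat X_i,\dots,X_{p(n_0)}]$ and $A \not\equiv 0$ (by the linearity hypothesis), and pick constants $c_j \in K$ for $j\neq i$ such that $\alpha := A(c) \neq 0$. Setting $\beta := B(c)$, the substitution $X_j \mapsto c_j$ ($j \neq i$), $X_i \mapsto (Q - \beta)/\alpha$ makes $f_{n_0, k_0}$ evaluate to the arbitrary target polynomial $Q$.

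To build the kernelization I specialize $Q := f_{n,k}$: take $r(n,k) := n_0$, $k' := k_0$, $s(k) := \max(n_0, k_0)$, and $q_{n,k,j} := c_j$ for $j \neq i$, $q_{n,k,i} := (f_{n,k} - \beta)/\alpha$. Since $f \in \VFPT$ the polynomial $f_{n,k}$ has a circuit of size $g(k)\cdot \poly(n)$, which is polynomial in $n$ for each fixed $k$; adding one subtraction and one scalar multiplication preserves this, so the circuits $C_{n,k,i}$ satisfy the size requirement of Definition~\ref{def:fptv:vkernel} and $f_{r(n,k),k'}(q_{n,k,1},\dots) = \alpha\cdot (f_{n,k}-\beta)/\alpha + \beta = f_{n,k}$ as required. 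The only subtle point, and the main potential obstacle, is ensuring the existence of constants $c$ with $A(c)\neq 0$; this is immediate over an infinite field (the usual setting of algebraic complexity) and can otherwise be arranged by passing to a suitable extension of $K$.
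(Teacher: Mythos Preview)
Your argument for part~(1) is correct and matches the paper's approach exactly.

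For part~(2), your idea of using the linearity of $f_{n_0,k_0}$ in $X_i$ to turn it into an ``affine gadget'' is the same core observation the paper uses. However, there is a genuine gap in how you meet the size requirement of Definition~\ref{def:fptv:vkernel}. That definition asks for circuits $C_{n,k,i}$ of size \emph{polynomial in $n$}, meaning bounded by a single polynomial $P(n)$ independent of $k$; this is the algebraic analogue of a Boolean kernelization running in genuinely polynomial (not merely fpt) time. Your circuit for $q_{n,k,i} = (f_{n,k} - \beta)/\alpha$ has size $g(k)\cdot \poly(n) + O(1)$, which is fpt-bounded but not uniformly polynomial in $n$. Under your reading (``polynomial in $n$ for each fixed $k$'') kernelization would degenerate to an arbitrary fpt-substitution into a fixed polynomial, and the theorem would become essentially vacuous.

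The paper repairs exactly this point with a case distinction. Write $C(f_{n,k}) \le u(k)\, m(n)$ with $m$ p-bounded and $m(n) \ge n$. If $m(n) \le u(k)$, take the trivial kernel $r(n,k) = n$, $k' = k$; then $r(n,k) = n \le m(n) \le u(k)$ and the substituting ``circuits'' are just the variables themselves. If $u(k) < m(n)$, then $C(f_{n,k}) \le u(k)\, m(n) < m(n)^2$, which \emph{is} bounded by a single polynomial in $n$; in this regime your affine-gadget construction with $r(n,k) = n_0$, $k' = k_0$ goes through verbatim. Thus the linearity hypothesis is only invoked in the second case, and the case split is precisely what converts the fpt bound into a uniform polynomial bound on the substituting circuits.
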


\begin{proof}
For the first item, assume we are given $n$ and $k$ and let
$r$, $s$, and $t$ and $q_{n,k,i}$, $1 \le i \le p(r(n,k))$ as well as $k'$ as in \cref{def:fptv:vkernel}.
The number of variables and the degree of $f_{r(n,k),k'}$ only depends on $k$. Therefore, its circuit
complexity is also bounded by a function of $k$, since we can simply take the trivial circuit that computes all monomials
and sums them up. Since the $q_{n,k,i}$ have polynomial size circuits, $C(f_{n,k})$
is fpt-bounded by (\ref{eq:fptv:vkernel}).

For the second item, let $C(f_{n,k})$ be bounded by $u(k) m(n)$ for some p-bounded function $m$.
We can assume that $m(n) \ge n$ for all $n$.
By substituting appropriate
constants for all other variables, we get that $X_i$ is a projection of $f_{n_0,k_0}$.
Now given $n$ and $k$, if $m(n) \le u(k)$, then $r(n,k) = n$ and $k' = k$. Note that now $r(n,k) = n \le u(k)$.
If $u(k) < m(n)$, then we can compute $f_{n,k}$ by a circuit of size $u(k) m(n) \le m^2(n)$, which is
polynomial. Since $f_{n,k}$ trivially is a substitution of $X_1$, we can set $r(n,k) = n_0$ and $k' = k_0$,
which are both constants.
\end{proof}

While the proof feels somewhat unsatisfactory, we remark that the proof in the Boolean case works along the same
lines. We can remove the degree condition in the second item of \cref{thm:fptv:veqkernelfpt} by
relaxing the notion of kernel and allowing some postprocessing. In the general case, $X_i$ will have
degree $d$ and we only get that $X_i^d$ is a projection of $f_{n_0,k_0}$ and therefore,
we compute $f_{n,k}^d$ in the second case of the case distinction. It is however known that when
a circuit of size $s$ computes some power $f^d$ of a polynomial $f$, then there is a circuit
of size polynomial in $s$ that computes $f$ by using Newton iteration~\cite{DBLP:journals/jacm/BrentK78}
(see~\cite{Ulrich} for an explicit construction).

\section{The $\mathsf{VW}$-hierarchy}

We start with proving some basic facts about the $\VW t$ classes,
in analogy to the Boolean world.

\begin{lemma}\label{lem:hier:1}
$\VFPT = \VW 0$ and $\VFPTd = \VWd 0$.
\end{lemma}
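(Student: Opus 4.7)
My approach is to prove $\VFPT = \VW 0$ by two containments; the equality $\VFPTd = \VWd 0$ then follows at once because both are defined as the subsets of their larger class consisting of families parameterized by the degree. Throughout, recall that a weft-zero circuit is simply a polynomial-size, constant-depth, bounded fan-in arithmetic circuit.

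For $\VW 0 \subseteq \VFPT$, let $(f_{n,k}) \in \VW 0$ be witnessed by a p-family $(g_n)$ and set $T_{n,k} \coloneqq \sum_{e \in \ones{q(n)}{k}} g_n(X, e)$. Since $\VFPT$ is closed under $\lefs$ by \cref{lem:s:closed}, it suffices to show $(T_{n,k}) \in \VFPT$. The defining circuit for $g_n$ has constant depth $d$ and fan-in at most some constant $b$, so $\deg g_n \le b^d$ is a constant $D$; hence $g_n$ has at most $\binom{p(n) + q(n) + D}{D} = \poly(n)$ monomials. Since each $e_j \in \{0,1\}$, we may treat $g_n$ as multilinear in the $Y$-variables when summing, and the identity $\sum_{e \in \ones{q(n)}{k}} \prod_{j \in T} e_j = \binom{q(n) - |T|}{k - |T|}$ gives
\[
T_{n,k} \;=\; \sum_{(\alpha, T)} c_{\alpha, T}\, \binom{q(n) - |T|}{k - |T|}\, X^\alpha,
\]
a polynomial with $\poly(n)$ monomials of constant degree. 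A naive sum-of-monomials circuit of size $\poly(n)$ computes it, so $(T_{n,k}) \in \VP \subseteq \VFPT$.

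For $\VFPT \subseteq \VW 0$, I exhibit a trivial weft-zero target and let the fpt-substitution carry the work. Take $p(n) = 1$, $q(n) = n$, and $g_n(X_1, Y_1, \ldots, Y_n) = X_1 Y_1$, which is computed by a single multiplication gate. Its base polynomial is $\sum_{e \in \ones{n}{k}} X_1 e_1 = \binom{n-1}{k-1} X_1$. Given any $(f_{n,k}) \in \VFPT$, choose $r(n) = n$, $s(k) = t(k) = 1$, $k' = 1$ (so that $\binom{r(n)-1}{0} = 1$), and the single substitution polynomial $h_1 \coloneqq f_{n,k}$. Then $h_1$ has fpt-bounded degree (from the p-family property) and fpt-bounded circuit complexity (by assumption), and the substitution yields $X_1|_{X_1 \mapsto h_1} = f_{n,k}$, so $(f_{n,k}) \lefs (\sum_e g_n)$ and therefore $(f_{n,k}) \in \VW 0$.

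The only nontrivial ingredient is the constant-degree observation underlying the first direction, which reduces the sum to routine binomial bookkeeping; the second direction is essentially formal, exploiting the generosity of fpt-substitution to absorb the whole $\VFPT$ computation. The statement $\VFPTd = \VWd 0$ is then immediate from $\VFPT = \VW 0$ and the definitions.
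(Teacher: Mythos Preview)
Your proof is correct and follows the same underlying idea as the paper: the direction $\VFPT \subseteq \VW 0$ works by letting the fpt-substitution absorb the entire computation (exactly what the paper means by ``we can compute problems in $\VFPT$ simply by using the reduction''), and the direction $\VW 0 \subseteq \VFPT$ is the observation that constant-depth bounded fan-in circuits compute constant-degree polynomials, which the paper simply declares obvious. Your argument just spells out both directions explicitly, including the binomial bookkeeping; note that by \cref{lem:struct:1} you even get that $g_n$ depends on only a constant number of variables, which would let you shortcut your monomial count further.
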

\begin{proof}
  The proof is obvious, since $\VW 0$ and $\VWd 0$ are defined as the closure under
  fpt-substitutions, so we can compute problems in $\VFPT$ simply by using the reduction.
\end{proof}
The following lemma is obvious.

\begin{lemma}
For every $t$, $\VW t \subseteq \VW {t+1}$ and $\VWd t \subseteq \VWd {t+1}$.
\end{lemma}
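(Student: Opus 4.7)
The plan is to observe that passing from $\VW t$ to $\VW{t+1}$ only relaxes the bound on the weft of the defining circuit, while everything else in \cref{eq:def:vw} (the p-boundedness of $p$ and $q$, the constant depth, the polynomial size, and the fpt-substitution $\lefs$) stays intact. So both inclusions should follow by taking the witnessing data for $\VW t$ and reusing it verbatim.

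Concretely, I would start with an arbitrary $(f_{n,k}) \in \VW t$ and unpack the definition: there is a p-family $(g_n)$ with $g_n \in K[X_1,\dots,X_{p(n)},Y_1,\dots,Y_{q(n)}]$ computed by a constant-depth unbounded fan-in circuit $C_n$ of weft $\le t$ and polynomial size, together with an fpt-substitution $(f_{n,k}) \lefs \big(\sum_{e \in \ones{q(n)}{k}} g_n(X,e)\big)$. Since $\operatorname{weft}(C_n) \le t \le t+1$, the very same circuit $C_n$ is a constant-depth unbounded fan-in circuit of weft $\le t+1$ and polynomial size computing the same $g_n$; hence the same $(g_n)$ and the same fpt-substitution witness $(f_{n,k}) \in \VW{t+1}$.

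For the degree-parameterized version, note that $\VWd t$ is by definition just the sub-collection of $\VW t$ whose families are parameterized by the degree. Since the argument above does not alter the family $(f_{n,k})$ itself (only which class-label we attach to it), a degree-parameterized family in $\VW t$ remains degree-parameterized when viewed in $\VW{t+1}$, giving $\VWd t \subseteq \VWd{t+1}$.

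There is really no obstacle here: the only thing worth double-checking is that the definition of $\VW t$ uses $\operatorname{weft} \le t$ rather than $\operatorname{weft} = t$ (which it does), so no artificial weft-$t+1$ gate needs to be inserted. If one insisted on building a weft-exactly-$(t+1)$ circuit, one could prepend a trivial unbounded fan-in sum gate of fan-in $1$ at the top, but this is unnecessary under the stated definition.
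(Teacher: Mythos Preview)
Your proposal is correct and is exactly the obvious argument the paper has in mind; the paper itself states only ``The following lemma is obvious'' without giving a proof. Your observation that the definition uses weft $\le t$ (so the same witnessing circuit works for $t+1$) is precisely the point.
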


We call a parameterized p-family $f$ \emph{$\VW t$-hard} (under fpt-substitutions), if for all
$g \in \VW t$, $g \lefs f$. $f$ is called \emph{$\VW t$-complete} (under fpt-substitutions) if in addition,
$f \in \VW t$. If the same way, we can also define hardness and completeness under fpt-c-reductions.

For the classes $\VWd t$, it is reasonable to study hardness and completeness under fpt-projections.
We call a parameterized p-family $f$ \emph{$\VWd t$-hard} (under fpt-projections), if for all
$g \in \VW t$, $g \lefp f$. $f$ is called \emph{$\VWd t$-complete} (under fpt-projections) if in addition,
$f \in \VW t$.

\begin{lemma}\label{lem:hier:2}
If $f$ is $\VW {t+1}$-complete under fpt-substitutions and $f \in \VW t$, then $\VW t = \VW {t+1}$.
In the same way,
if $f$ is $\VWd {t+1}$-complete under fpt-substitutions or fpt-projections and $f \in \VWd t$, then $\VWd t = \VWd {t+1}$.
\end{lemma}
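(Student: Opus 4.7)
The plan is to run a routine diagonal-style collapse argument using completeness of $f$ together with transitivity of the relevant reduction, as supplied by \cref{lem:s:trans}. For the first statement, suppose $f$ is $\VW{t+1}$-complete under fpt-substitutions and $f \in \VW t$. Pick an arbitrary $h \in \VW{t+1}$. Completeness gives $h \lefs f$. Unfolding the definition of $\VW t$ for $f$ yields a p-family $(g_n)$ computed by constant-depth weft $\le t$ circuits of polynomial size such that
\[
  f \;\lefs\; \Big( \sum_{e \in \ones{q(n)}{k}} g_n(X_1,\ldots,X_{p(n)},e_1,\ldots,e_{q(n)}) \Big).
\]
Chaining the two $\lefs$ relations via \cref{lem:s:trans} produces $h \lefs \big(\sum_e g_n(\ldots)\big)$, which is precisely the witness needed for $h \in \VW t$. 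Since $h$ was arbitrary, $\VW{t+1} \subseteq \VW t$, and combined with the trivial inclusion $\VW t \subseteq \VW{t+1}$ we conclude $\VW t = \VW{t+1}$.

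For the $\VWd$ variant under fpt-substitutions the same chain applies: picking $h \in \VWd{t+1}$, the argument above places $h$ in $\VW t$, and because membership in $\VWd{\cdot}$ is the intrinsic extra requirement that the family be parameterized by degree, $h$ retains this property and hence lies in $\VWd t$. For the fpt-projection variant, observe that every fpt-projection is a special case of an fpt-substitution (the substituted polynomials are just variables and constants, which have constant circuit size and degree, so they are trivially fpt-bounded). Hence $h \lefp f$ implies $h \lefs f$, and the same transitivity argument closes.

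I do not anticipate a genuine obstacle: the lemma reduces to careful book-keeping around the two closure properties already proved for $\lefs$. The only subtle point to check is that the collapse argument cannot destroy the ``parameterized by degree'' property required for $\VWd t$, but since that property depends only on the family itself and not on the reduction used to witness membership, it is preserved automatically. If desired, the proof can be written almost verbatim in the Boolean-style pattern ``$h \in \VW{t+1} \Rightarrow h \lefs f \Rightarrow h \in \VW t$,'' leaving the degree tracking as a side remark for the $\VWd$ case.
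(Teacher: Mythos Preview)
Your proposal is correct and follows essentially the same argument as the paper: take an arbitrary member of $\VW{t+1}$ (resp.\ $\VWd{t+1}$), reduce it to $f$ by completeness, and then use $f \in \VW t$ together with transitivity of $\lefs$ (\cref{lem:s:trans}) to place it in $\VW t$. Your extra unpacking of the definition of $\VW t$ and the side remark that fpt-projections are a special case of fpt-substitutions are fine elaborations but do not change the route.
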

\begin{proof}
  Let $g \in \VW {t+1}$ be arbitrary. By the completeness of $f$, $g \lefs f$. Since
  $f \in \VW t$ and $\lefs$ is transitive, $g \in \VW t$. The same proof works for
  $\VWd t$.
\end{proof}

It is open in the Boolean case whether $\W t = \W {t+1}$ or $\sharpW t = \sharpW {t+1}$
implies a collapse of the corresponding hierarchy. Maybe the algebraic setting can provide
more insights.

\begin{theorem}\label{thm:hier:3}
If $\VFPT \not= \VW 1$ then $\VP \not= \VNP$.
\end{theorem}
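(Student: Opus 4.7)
My plan is to prove the contrapositive: assuming $\VP = \VNP$, I show that every parameterized family in $\VW{1}$ lies in $\VFPT$. Fix an arbitrary $(f_{n,k}) \in \VW{1}$. By definition there is a p-family $(g_{n})$ with $g_{n} \in K[X_1,\dots,X_{p(n)},Y_1,\dots,Y_{q(n)}]$ computed by a constant-depth weft $1$ circuit of polynomial size such that $(f_{n,k}) \lefs (h_{n,k})$, where
\[
  h_{n,k} \;=\; \sum_{e \in \ones{q(n)}{k}} g_n(X_1,\dots,X_{p(n)},e_1,\dots,e_{q(n)}).
\]
By \cref{lem:s:closed} it suffices to show $(h_{n,k}) \in \VFPT$, and I will in fact show it lies in $\VP$.

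The key step is to encode the weight-$k$ restriction by an auxiliary variable. Unbounded fan-in gates can be replaced by balanced trees of binary gates at the cost of a polynomial size blowup, so $g_n \in \VP$. Introduce a fresh variable $Z$ and consider the full exponential sum
\[
   F_n(X,Z) \;=\; \sum_{e \in \{0,1\}^{q(n)}} g_n(X,e) \cdot \prod_{i=1}^{q(n)} \bigl(1 + (Z-1)\,e_i\bigr).
\]
Since $e_i \in \{0,1\}$, the factor $1 + (Z-1)e_i$ equals $Z^{e_i}$, so the product equals $Z^{|e|}$. The inner polynomial $g_n(X,e) \prod_i(1 + (Z-1)e_i)$ has a polynomial-size circuit and polynomially bounded degree, so $(F_n) \in \VNP$. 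By assumption, $\VP = \VNP$, so $(F_n) \in \VP$.

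Now $h_{n,k}$ is the coefficient of $Z^k$ in $F_n(X,Z)$, and $\deg_Z F_n \le q(n)$. Standard interpolation recovers this coefficient: evaluate $F_n$ at $q(n)+1$ distinct scalar values of $Z$ and take an appropriate constant linear combination (determined by the inverse of a Vandermonde matrix, which gives constants in $K$ assuming $K$ is infinite; otherwise pass to a field extension or use formal interpolation identities valid over $\Zset$). Each evaluation is a polynomial-size circuit, and there are polynomially many of them, so $(h_{n,k}) \in \VP \subseteq \VFPT$. Combining with $(f_{n,k}) \lefs (h_{n,k})$ and \cref{lem:s:closed} gives $(f_{n,k}) \in \VFPT$, as required.

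The main thing to get right is the verification that $(F_n) \in \VNP$: one must check both that $g_n$ has polynomial-size ordinary circuits (handled by expanding unbounded fan-in into trees) and that the weight-encoding factor $\prod_i(1+(Z-1)e_i)$ fits inside a $\VP$ circuit summed in $\VNP$ style. Everything else (coefficient extraction via interpolation, closure of $\VFPT$ under $\lefs$) is already established in the paper or is a routine polynomial-algebra argument.
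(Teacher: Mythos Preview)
Your proof is correct and takes a genuinely different, more elementary route than the paper. The paper argues via a specific complete problem: it notes that $(\Cl_n) \in \VNP$, hence in $\VP$ under the hypothesis, extracts $\Cl_{n,k}$ as the coefficient of $T^k$ after substituting $X_i \mapsto T X_i$, and then invokes the $\VW 1$-completeness of $\Cl$ under fpt-c-reductions (\cref{thm:cl:w1}, a forward reference to the paper's main technical result) together with closure of $\VFPT$ under $\lefc$ to collapse all of $\VW 1$ into $\VFPT$. You instead bypass the complete problem entirely: working directly with the defining sum $h_{n,k}$, you embed it as the $Z^k$-coefficient of a single $\VNP$ polynomial $F_n$ and use only closure of $\VFPT$ under $\lefs$ (\cref{lem:s:closed}), which is already available. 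This is strictly lighter: it uses no forward reference to \cref{thm:cl:w1}, and since the weft of $g_n$ is irrelevant to your argument (you only need $g_n \in \VP$, which follows from it being a p-family of polynomial circuit size), your proof immediately yields $\VFPT \neq \VW t \Rightarrow \VP \neq \VNP$ for every $t$, a strengthening the paper only sketches informally after its proof.
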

\begin{proof}
  Assume that $\VP = \VNP$. Then the clique family $(\Cl_n)$ has polynomial sized
  circuits, since it is in $\VNP$. It follows that also the parameterized clique family
  $(\Cl_{n,k})$ has polynomial sized circuits. To see this, we can replace the variables $X_i$ by
  $T X_i$ for some new variable $T$. View the resulting polynomial as a univariate polynomial in $T$
  with coefficients being polynomials in the original variables. The coefficient of $T^k$ is
  $\Cl_{n,k}$. Since interpolation can be done with polynomial circuits,
  $(\Cl_{n,k})$ has polynomial sized circuits. We will prove that $\Cl$ is $\VW 1$ complete
  under fpt-c-reductions (\cref{thm:cl:w1}). Therefore, every family in $\VW 1$ has circuits
  of fpt size. This proves the theorem.
\end{proof}

If one takes the defining problems for $\VW t$ (sums over $\{0,1\}$ vectors with
$k$ $1$s of weft $t$ circuits) instead of clique, one can prove the same theorem
for arbitrary classes $\VW t$ in place of $\VW 1$. The proof only get technically a little
more complicated.

\section{Hardness of Clique}\label{sec:hardnessclique}

Our main technical result is the $\VW 1$-hardness of $\Cl$. The proof is
technically much more intricate than in the Boolean setting.
We will give a short outline.

\begin{itemize}
\item First, we prove as a technical tool that two bounded exponential sums over a weft $t$ circuit
can be expressed by one exponential sum over a (different) weft $t$ circuit. In the case of $\VNP$, a similar proof is easy:
Instead of summing over bit vectors of length $p$ and then of length $q$, we can sum over bit vectors
of length $p + q$ instead. If the number of ones is however bounded, this does not work easily anymore.
It turns out that for the most interesting class $\VW 1$ of the $\mathsf{VW}$-hierarchy, the construction
is astonishingly complicated. See \cref{app:comp}.
\item Next, we prove a normal form for weft $1$ circuits. Every weft $1$ circuit can be replaced by an equivalent
weft $1$ circuit that has five layers: The first layer is a bounded summation gate, the second layer consist of bounded
multiplication gates, the third layer is the only layer of unbounded gates, the fourth layer again
consists of bounded addition gates and the fifth layer of bounded multiplication gates. See \cref{app:weft}.
\item Then we introduce \emph{Boolean-arithmetic formulas}: A Boolean-arithmetic formula is a formula of the form
\[
    B(X_1,\dots,X_n) \cdot \prod_{i = 1}^n (R_i X_i + 1 - X_i)
\]
where $B$ is an arithmetization of some Boolean formula and the $R_i$ some polynomial
or even rational function (over a different set of variables). For each satisfying $\{0,1\}$-assignment $e$
to $B$, that is, $B(e) = 1$, the right hand side produces one product  and the $e_i$ (assigned to the $X_i$ variables) switch the factors $R_i$ on or off.
For a polynomial $f$, the monomials of support size $k$ are all monomials that depend on exactly $k$ variables.
The sum of all these monomials is denoted by $\spc k(f)$. A central result for the hardness proof
is that when $f$ is computed by a circuit of weft $1$, then
we can write $\spc k(f)$ as a bounded sum over a weft $1$ Boolean arithmetic expression, that is,
$
 \spc k(f) = \sum_{e \in \ones {p(n,k)}{q(k)}}
               \cB (e) \cdot \prod_{i = 1}^{p(n,k)} (R_i e_i + 1 - e_i)$.
See \cref{app:bool}.
\item Finally, we prove in \cref{app:cliq}, that $\Cl$ is $\VW 1$-complete under
fpt-c-reductions (or under fpt-substitutions that allow rational expressions).
Given some bounded sum over a polynomial $g_n(X_1,\dots,X_p,Y_1\dots,Y_q)$ computed by a weft 1 circuit,
we view $g_n$ as a polynomial over the $Y$-variables, the coefficients of which are polynomials in the $X$-variables.
Then $\spc 0(g),\dots,\spc k(g)$ are the parts of $g_n$ that contribute to the sum when summing over all
bit vectors with $k$ ones. We can write this as a bounded sum over a Boolean arithmetic formula.
The concept of Boolean arithmetic formulas allows us to reuse parts of the Boolean hardness proof.
\end{itemize}

Note that once we have the $\VW 1$-hardness of $\Cl$, we get further hardness results for free.
In many counting problems, we want to count subsets of some given set having a certain property.
For instance, in the clique problem, we are given a graph and count node sets of size $k$ that
form a clique. To simplify the presentation, let us now consider such graph problems where we want to
count node sets of size $k$ having a particular property. Whether a given node set has a certain
property is a Boolean function of the adjacency matrix of the graph. We get a generic way to construct a corresponding
family of polynomials by:
\[
   V_{n,k} = \sum_{e \in \ones n k} P_e(E)  X_1^{e_1}\cdots X_n^{e_n}.
\]
The polynomial has edges variables $E_{1,1},\dots,E_{n,n}$ and node variables $X_1,\dots,X_n$.
We sum over all node sets of size $k$ and the product produces a label of the corresponding node set.
The polynomial $P_e(E)$ in the edge variables is an arithmetization of the Boolean expression
that checks whether the node set given by $e$ has indeed the required property. The $P_e$ can all be different,
but in most cases, they can be obtained from each other by permuting the edge variables, like it is for $\VC$ or $\Cl$.

Another example of this kind is the grid tiling problem:
Given a graph $G$ and a parameter $k$, count the subgraphs of $G$ that are isomorphic to a
$k \times k$-grid. The corresponding family of polynomials looks as follows:
\begin{multline*}
   \mathrm{GT}_{n,k} = \sum_{\substack{(a_{i,j}) \in \{1,\dots,n\}^{k \times k}\\ \text{$a_{i,j}$ pairwise distinct, $a_{1,1} < a_{1,n}, a_{n,1}$}}}
              \prod_{1 \le i,j < k} E_{\{a_{i,j}, a_{i+1,j}\}} E_{\{a_{i,j}, a_{i,j+1}\}} \\
              \prod_{1 \le i < k} E_{\{a_{i,k}, a_{i+1,k}\}} E_{\{a_{k,i}, a_{k,i+1}\}}\prod_{1 \le i, j  \le k} X_{a_{i,j}}.
\end{multline*}
The first two products checks whether the selected vertices
indeed form a grid in the given graph. The third product writes down a label of the grid.
The condition $a_{1,1} < a_{1,n}, a_{n,1}$ is due to the fact that the grid has four automorphisms.

\begin{definition}
We call a (Boolean) parsimonious fpt-reduction  between two parameterized node counting
problems $(F,k), (F',k')$  \emph{witness projectable} if all instances $I$
with $n$ nodes are mapped to instances $I'$ with exactly $s(n,k)$ nodes
and for every $n$, there is a mapping $p_{n,k}\colon \{1,\dots,n\} \to \{1,\dots,s(n,k)\}$
such that for every solution $S'$ to $I'$, $S = \{ u \mid p_{n,k}(u) \in S'\}$ is a solution to $I$.
\end{definition}

Witness projectable reductions have the property that we get a solution $S$ to $F$ from a solution $S'$ to $F'$
by a simple projection, that is, one can view $S$ as a subset of $S'$. Furthermore, witness projectable also
implies that the reduction maps two instances of length $n$ to two instances having the same length $s(n,k)$.
In particular, every bit of $I'$ (which is an entry in the adjacency matrix)
is a function of the bits of $I$. We can now prove the simple proposition.

\begin{proposition}\label{prop:wp}
Assume we have a witness projectable reduction from the Boolean $k$-clique problem
to a parameterized node counting problem $(F,k')$. Then the node counting polynomial corresponding to $F$
as defined above,
  \[
    V_{n,k} = \sum_{e\in \ones{n}{k}} P_e(E) X_1^{e_1}\cdots X_n^{e_n}
  \]
is $\VW 1$ hard.
\end{proposition}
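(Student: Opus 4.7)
The plan is to prove $\Cl \lefc V$ and invoke the $\VW 1$-completeness of $\Cl$ under fpt-c-reductions (\cref{thm:cl:w1}). First, I arithmetize the hypothesized Boolean reduction. For each edge $(i',j')$ of the reduced instance $I'$ on $s(n,k)$ nodes, the bit $E'_{i',j'}$ is an fpt-computable Boolean function of the adjacency bits $E$ of the original graph, hence arithmetizable to an fpt-size arithmetic circuit for a polynomial $\hat\phi_{i',j'}(E)$ agreeing with the Boolean function on $\{0,1\}^{\binom{n}{2}}$. For each node $j$ of $I'$, set $\hat X'_j(X) \coloneqq \prod_{i \in p_{n,k}^{-1}(j)} X_i$ (empty product $=1$). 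Feeding these substitutes into a single $V_{s(n,k),k'}$-oracle gate (with $k' \le t(k)$ from the Boolean reduction) produces
\[
   \widetilde V(E,X) \coloneqq V_{s(n,k),k'}\bigl(\hat\phi(E),\,\hat X'(X)\bigr).
\]

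Second, I verify Boolean agreement. For every Boolean assignment $E=e$, the factor $P_{e'}(\hat\phi(e))$ is the $\{0,1\}$-indicator that $S' \coloneqq \mathrm{supp}(e')$ is a solution to $F$ on $I'(e)$; by witness-projectability, $S' \mapsto p_{n,k}^{-1}(S')$ bijects such solutions with the $k$-cliques of $G_e$, and $\prod_{j\in S'} \hat X'_j(X) = \prod_{i\in p_{n,k}^{-1}(S')} X_i$. Summing over $e' \in \ones{s(n,k)}{k'}$ yields $\widetilde V(e,X) = \Cl_{n,k}(e,X)$ for every Boolean $e$ and every $X$.

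Third, I must lift this pointwise identity to a genuine polynomial identity realized by the fpt-c-reduction. Since $\Cl_{n,k}$ is multilinear in $E$ and both polynomials agree on $\{0,1\}^{\binom{n}{2}}$, we have $\Cl_{n,k}(E,X) = \mathrm{ML}_E(\widetilde V(E,X))$. The hard part is realizing this $E$-multilinearization by an fpt-size arithmetic circuit with $V$-oracles, since a naive expansion is exponential in $\binom{n}{2}$. The saving grace is that $\Cl_{n,k}$ has total $E$-degree $\binom{k}{2}$ and total $X$-degree $k$, and its monomials have the prescribed shape $\prod_{i,j\in C,\,i<j} E_{i,j} \prod_{i\in C} X_i$ indexed by $k$-subsets $C$: combining interpolation in auxiliary variables (to isolate the $X$-degree-$k$ and $E$-degree-$\binom{k}{2}$ homogeneous components) with fpt-many $V$-oracle queries at carefully chosen substitutions extracts exactly the multilinear part. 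Witness-projectability is precisely what guarantees that $\widetilde V$ and $\Cl_{n,k}$ differ only by elements of the ideal $(E_{i,j}^2 - E_{i,j})$, and the bounded $E$- and $X$-degrees of $\Cl_{n,k}$ keep the extraction fpt-bounded. This extraction is the main obstacle and the technical heart of the argument.
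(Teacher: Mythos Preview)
The paper's argument is much shorter than yours. It arithmetizes the Boolean edge functions, substitutes them for the $E'$-variables of $V_{s(n,k),k'}$, sets each node variable $X'_j$ either to the corresponding $X_i$ (for $j$ in the image of $p_{n,k}$) or to $1$ (otherwise), and asserts that this already equals $\Cl_{n,k}$ \emph{as a polynomial identity}. No multilinearization, no oracle post-processing---just a single fpt-substitution, after which hardness follows by transitivity from \cref{thm:cl:w1}.

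You are more cautious, observing that such a substitution a~priori only matches $\Cl_{n,k}$ on Boolean edge-inputs, and you propose to recover the true polynomial identity by extracting the $E$-multilinear part of $\widetilde V$ inside an fpt-c-reduction. That concern is legitimate, but your proposed extraction does not work as described. Isolating the homogeneous component of total $E$-degree $\binom{k}{2}$ via interpolation is \emph{not} multilinearization: a monomial such as $E_{1,2}^{\binom{k}{2}}\prod_{i\in C}X_i$ has the correct total $E$-degree and survives your degree-based interpolation, yet its image modulo $(E_{i,j}^2-E_{i,j})$ is $E_{1,2}\prod_{i\in C}X_i$, of $E$-degree $1$. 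Conversely, non-multilinear monomials of $\widetilde V$ of arbitrarily high $E$-degree can contribute, after reduction modulo $(E_{i,j}^2-E_{i,j})$, to the degree-$\binom{k}{2}$ multilinear part you actually want. The phrase ``carefully chosen substitutions'' gives no indication of how this is overcome, and you yourself flag this step as ``the main obstacle and the technical heart of the argument'' without resolving it. As written, this is a genuine gap: either the paper's bare substitution already yields the polynomial identity (in which case your third step is superfluous), or it does not (in which case an explicit fpt-size multilinearization procedure must be supplied, and degree interpolation alone does not provide one).
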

\begin{proof}
  We want to solve $(\Cl_{n,k})$ with $(V_{n,k})$.
  Let $(I,k)$ be a Boolean clique instance and $(I',k')$ be its image under the reduction.
  Since fpt-parsimonious witness projectable reductions map instances with $n$ nodes
  to instances with $s(n,k)$ many nodes, the entries of the adjacency matrix
  of $I'$ are Boolean functions in the entries of adjacency matrix of $I$ and these expressions are independent of $I$.
  We arithmetize the corresponding Boolean formulas and substitute the edge
  variables of $V_{s(n,k),k'}$ by these expressions. By the second property
  of witness projectable, we get $\Cl_{n,k}$ by setting some of the $X$-variables to $1$,
  namely, those not in the image of $p_{n,k}$. The hardness now follows from transitivity.
\end{proof}

In particular, the family $(\mathrm{GT}_{n,k})$ is $\VW 1$-hard, since there is a witness projectable
fpt-parsimonious reduction from the Boolean clique problem to the Boolean grid tiling problem.
Instead of a $k \times k$-grid, any graph of sufficiently high tree width would work.

\subsection{Composition of Sums}\label{app:comp}

If $(f_n(X_1,\dots,X_n,Y_1,\dots,Y_{p(n)}, Z_1,\dots,Z_{q(n)}))$ is in $\VP$,
then the family $(g_n)$ given by
\begin{equation}\label{eq:w1:1}
    g_n = \sum_{d \in \{0,1\}^{p(n)}} \sum_{e \in \{0,1\}^{q(n)}}
            f(X_1,\dots,X_n,d_1,\dots,d_{p(n)},e_1,\dots,e_{q(n)})
\end{equation}
is in $\VNP$, since we can write it as
\begin{equation}\label{eq:w1:2}
    g_n = \sum_{e \in \{0,1\}^{p(n) + q(n)}}
           f(X_1,\dots,X_n,e_1,\dots,e_{p(n)},e_{p(n)+1},\dots,e_{p(n) + q(n)}).
\end{equation}

If the family $f_n$ is instead computed by polynomial size constant depth circuits of weft~$t$
and we define the parameterized family $(g_{n,k})$ by
\begin{equation} \label{eq:w1:3}
   g_{n,k} = \sum_{d \in \ones {p(n)} k} \sum_{e \in \ones {q(n)} k}
            f(X_1,\dots,X_n,d_1,\dots,d_{p(n)},e_1,\dots,e_{q(n)}).
\end{equation}
Then we cannot combine these two sums as in (\ref{eq:w1:2}) by summing over
$e \in \ones {p(n) + q(n)}{2k}$, since we then also sum over vectors
that have e.g.\ $2k$ $1$s in the first part of $e$. We can repair this by adding
the expression
\begin{equation} \label{eq:w1:4}
   \alpha \cdot \prod_{i = 0}^{k - 1} \left( i - \sum_{j = 1}^{p(n)} e_i \right)
   \prod_{j = 0}^{k - 1} \left( j - \sum_{j = p(n) + 1}^{p(n) + q(n)} e_i \right),
\end{equation}
which is nonzero if there are $k$ $1$s in the first and $k$ $1$s in the second part
and $0$ otherwise. By choosing $\alpha$ appropriately, we can achieve that this expression
is $1$ in the first case. The expression in (\ref{eq:w1:4}) has constant depth and weft 2,
therefore, we obtain:

\begin{proposition}\label{prop:w1}
If $(f_n(X_1,\dots,X_n,Y_1,\dots,Y_{p(n)}, Z_1,\dots,Z_{q(n)}))$ is computed
by polynomial size constant depth circuits of weft $t \ge 2$, then
the family $(g_{n,k})$ given by (\ref{eq:w1:3}) is in $\VW t$.
\end{proposition}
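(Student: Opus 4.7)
The plan is to combine the two nested bounded summations into a single bounded sum over $c \in \ones{p(n)+q(n)}{2k}$ and insert a polynomial balancing indicator $I(c)$ that vanishes unless the first $p(n)$ bits and the last $q(n)$ bits of $c$ each contain exactly $k$ ones. Since $t \ge 2$, multiplying $f_n$ by an indicator of weft $\le 2$ through one bounded-fan-in product gate will produce a circuit of weft $\max(t,2) = t$ and constant depth, which is exactly what is needed for membership in $\VW t$.

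The main technical subtlety is that expression~(\ref{eq:w1:4}) sketched in the hint has $k$ explicit factors in each of its two products, so its structure depends on $k$; but the p-family $(h_n)$ witnessing membership in $\VW t$ is by definition indexed by $n$ alone. I would side-step this by observing that on the target domain $\ones{p(n)+q(n)}{2k}$, writing $S_1(c) := \sum_{j=1}^{p(n)} c_j$ and $S_2(c) := \sum_{j=p(n)+1}^{p(n)+q(n)} c_j$, the identity $S_1 + S_2 = 2k$ makes the condition $S_1 = k$ equivalent to the $k$-free condition $D(c) := S_1(c) - S_2(c) = 0$. Since $D$ takes integer values in $\{-q(n),\dots,p(n)\}$ on $\{0,1\}^{p(n)+q(n)}$, the Lagrange-type polynomial
\[
   I(c) \;:=\; \prod_{\ell \in \{-q(n),\dots,p(n)\} \setminus \{0\}} \frac{D(c) - \ell}{-\ell}
\]
equals $1$ when $D(c) = 0$ and vanishes otherwise on the Boolean cube. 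A natural constant-depth circuit for $I$ has one unbounded-fan-in sum computing $D$, a layer of bounded additions producing the $p(n)+q(n)$ factors $D - \ell$, and one unbounded-fan-in product; its weft is $2$, its size is polynomial in $n$, its degree is $p(n)+q(n)$, and crucially its structure depends only on $n$.

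Setting $h_n(X, c) := f_n(X, c_1, \dots, c_{p(n)+q(n)}) \cdot I(c)$ then gives a p-family computed by constant-depth, polynomial-size, weft-$t$ circuits, and the indicator property forces
\[
   \sum_{c \in \ones{p(n)+q(n)}{2k}} h_n(X, c) \;=\; g_{n,k}.
\]
Defining $\hat g_{n, K} := \sum_{c \in \ones{p(n)+q(n)}{K}} h_n(X, c)$ yields a parameterized family directly in $\VW t$ by the definition, and $g_{n,k} = \hat g_{n, 2k}$ gives $(g_{n,k}) \lefs (\hat g_{n,K})$ via the trivial fpt-substitution with $r(n) = n$, $s(k) = 1$, $k' = 2k$, and identity substitution on the $X$-variables. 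The closure of $\VW t$ under $\lefs$ (immediate from the definition and \cref{lem:s:trans}) then yields $(g_{n,k}) \in \VW t$.
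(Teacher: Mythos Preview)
Your proof is correct and follows the same route as the paper: merge the two bounded sums into a single sum over $\ones{p(n)+q(n)}{2k}$ and multiply by a weft-$2$ balancing indicator, so that for $t\ge 2$ the product circuit still has weft~$t$. Your indicator built from $D=S_1-S_2$ is a clean refinement of the paper's expression~(\ref{eq:w1:4}): since the paper's product has $k$ factors and a $k$-dependent normalizing constant~$\alpha$, its circuit is not literally a p-family $(h_n)$ indexed by $n$ alone as the definition of $\VW t$ requires, whereas your construction is $k$-free by design and makes the final fpt-substitution step explicit.
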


The case $t = 1$ is more complicated. We will also need these techniques later on the $\VW 1$-completeness
proof of the clique family. We start with a technical lemma.

\begin{lemma}\label{lem:w1:1}
Let $k+1$ be prime. The system of equations
\begin{align*}
   a \cdot b & = c \\
   a + b + c & = k^2 + 2k
\end{align*}
has a unique integer solution $a,b,c \ge 1$, namely, $a = b = k$ and $c = k^2$.
\end{lemma}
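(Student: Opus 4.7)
The plan is to eliminate $c$ and then factor. Substituting $c = ab$ into the second equation yields
\[
   a + b + ab = k^2 + 2k,
\]
and adding $1$ to both sides turns the left side into a product:
\[
   (a+1)(b+1) = k^2 + 2k + 1 = (k+1)^2.
\]
Now I invoke the hypothesis that $k+1$ is prime. The positive divisors of $(k+1)^2$ are exactly $1$, $k+1$, and $(k+1)^2$. Since $a,b \ge 1$ forces $a+1, b+1 \ge 2$, neither factor can equal $1$, and neither can equal $(k+1)^2$ (else the other would be $1$). Hence $a+1 = b+1 = k+1$, so $a = b = k$ and $c = ab = k^2$. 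A quick check confirms that this triple satisfies both equations: $a + b + c = k + k + k^2 = k^2 + 2k$, as required.

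The only real content is the factoring trick $(a+1)(b+1) = (k+1)^2$; once that is in place, the primality of $k+1$ immediately trivializes the divisor analysis. There is no serious obstacle, and the argument is essentially a one-liner after the substitution.
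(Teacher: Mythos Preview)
Your proof is correct and follows exactly the same route as the paper: substitute $c=ab$, factor to get $(a+1)(b+1)=(k+1)^2$, and use primality of $k+1$ together with $a+1,b+1\ge 2$ to force $a=b=k$. You have simply spelled out the divisor analysis and verification a bit more explicitly than the paper does.
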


\begin{proof}
Plugging the first equation in to the second, we get
\[
   (a + 1)(b + 1) = a + b + ab  + 1 = k^2 + 2k + 1 = (k + 1)^2.
\]
Since $a + 1, b + 1 \ge 2$ and $k+1$ is prime, the only possible solutions
are $a = b = k$.
\end{proof}

The next construction gives a solution to the composition problem when $k+1$ is prime.

\begin{lemma}\label{lem:w1:2}
Let $n \ge  k \ge 1$
be integers such that $k + 1$ is prime. There is a Boolean expression $B$ of size $\poly(n)$ and
weft~1 on $n^2 + 2n$
variables $X_1,\dots,X_n$, $Y_1,\dots,Y_n$, and $Z_{i,j}$, $i,j = 1,\dots,n$
such that:
\begin{itemize}
\item Every satisfying assignment that sets $k^2 + 2k$ variables to $1$
sets $k$ variables from $X_1,\dots,X_n$ to $1$,
$k$ variables of $Y_1,\dots,Y_n$ to $1$, and $k^2$ many variables from $Z_{i,j}$ to $1$.
\item For every partial assignment that sets $k$ variables of $X_1,\dots,X_n$ to $1$,
and $k$ variables of $Y_1,\dots,Y_n$ to $1$, there is a unique extension of this
partial assignment to a satisfying assignment of $B$. This satisfying assignment
sets $k^2$ variables of the $Z_{i,j}$ to $1$.
\end{itemize}
\end{lemma}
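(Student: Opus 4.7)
The plan is to take $B := B_1 \wedge B_2$, where
\[
   B_1 := \bigwedge_{i,j=1}^n \bigl(Z_{i,j} \Leftrightarrow (X_i \wedge Y_j)\bigr)
\]
forces the matrix $(Z_{i,j})$ to encode the outer product of the $X$- and $Y$-assignments, and
\[
   B_2 := \Bigl(\bigvee_{i=1}^n X_i\Bigr) \Leftrightarrow \Bigl(\bigvee_{j=1}^n Y_j\Bigr)
\]
rules out the degenerate case where exactly one of the two blocks carries all of the $1$'s. Each clause of $B_1$ has constant size and uses only bounded fan-in gates, so $B_1$ is a single unbounded $\wedge$-gate on top of a constant-depth bounded-fan-in layer, giving weft $1$ and size $O(n^2)$. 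The two unbounded $\vee$-gates inside $B_2$ sit on parallel paths to the root, so $B_2$ also has weft $1$ and size $O(n)$; joining $B_1$ and $B_2$ by one bounded $\wedge$-gate preserves weft $1$.

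For the first property, consider any satisfying assignment and let $a$, $b$, $c$ denote the number of ones among the $X_i$, $Y_j$, and $Z_{i,j}$, respectively. Then $B_1$ gives $c = ab$ and $B_2$ gives $a = 0 \Leftrightarrow b = 0$. If $a = b = 0$, then $c = 0$ and the total weight is $0 \neq k^2 + 2k$. Hence $a,b \geq 1$ and therefore $c = ab \geq 1$, so Lemma~\ref{lem:w1:1} applied to $ab = c$ and $a + b + c = k^2 + 2k$ forces $a = b = k$ and $c = k^2$. For the second property, any partial assignment that puts $k$ ones into the $X_i$ and $k$ ones into the $Y_j$ trivially satisfies both unbounded disjunctions of $B_2$ (since $k \geq 1$), while the clauses of $B_1$ pin down $Z_{i,j} = X_i \wedge Y_j$ uniquely, producing exactly $k^2$ ones among the $Z$-variables.

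The main subtlety is the role of $B_2$. Without it, whenever $n \geq k^2 + 2k$ the assignment with $a = 0$, $b = k^2 + 2k$, $c = 0$ would satisfy $B_1$ and have the correct weight, destroying the first property; Lemma~\ref{lem:w1:1} is only available in the regime $a,b,c \geq 1$, so $B_2$ is precisely the gadget that forces us into that regime while keeping the weft at $1$. The other thing to double check is that rewriting the biconditionals as constant-size bounded fan-in subformulas really does not introduce extra unbounded gates, but this is immediate since each $\Leftrightarrow$ in $B_1$ involves only three literals and the outer $\Leftrightarrow$ in $B_2$ combines two already-computed values with bounded fan-in.
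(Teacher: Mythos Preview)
Your proof is correct and follows essentially the same construction as the paper: the paper also takes $B_1 = \bigwedge_{i,j}(Z_{i,j}\leftrightarrow X_i\wedge Y_j)$ and then rules out the degenerate cases via the two unbounded disjunctions $\bigvee_i X_i$ and $\bigvee_j Y_j$. The only difference is that the paper conjoins these two disjunctions (directly forcing $a\ge 1$ and $b\ge 1$), whereas you take their biconditional and then use the weight constraint $k^2+2k>0$ to eliminate the extra $a=b=0$ solution; both variants have weft~$1$, size $\poly(n)$, and feed into Lemma~\ref{lem:w1:1} in the same way.
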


\begin{proof}
Consider the complete bipartite graph with $n$ nodes on each side. The variables $X_1,\dots,X_n$
represent the nodes on the one side, the variables $Y_1,\dots,Y_n$ represent
the nodes on the other side and the variables $Z_{i,j}$ represent the edges.
Setting a variable to $1$ means that we select the corresponding node or edge.

The Boolean formula $B$ expresses the fact that we select a complete bipartite subgraph, that is:

$B$ is given by
\[
   \bigwedge_{i,j = 1}^n (Z_{i,j} \leftrightarrow X_i \wedge Y_j) \wedge \bigvee_{i = 1}^n X_i \wedge \bigvee_{i = 1}^n Y_i,
\]
i.e., an edge is selected if and only if both endpoints are selected. The two ORs ensure that
we select at least one node on each side. Consider an assignment
that sets in total $k^2 + 2k$ variables to $1$,
$a$ variables from $X_1,\dots,X_n$ to $1$, $b$ variables from $Y_1,\dots,Y_n$ to $1$,
and $c$ variables of the $Z_{i,j}$ to $1$. That is, we have $a + b + c = k^2 + 2k$.
If this assignment is satisfying, then it encodes a complete bipartite subgraph and
we therefore have $a b = c$. Since $k+1$ is prime, it follows from
\cref{lem:w1:1} that $a = b = k$ and $c = k^2$.

For the second part, note that the partial assignment selects $k$ nodes on both sides.
If we choose the $k^2$ edges between these two sets, we get a complete subgraph, and this
is the only way.
\end{proof}

\Cref{lem:w1:2} can also be extended to arbitrary values of $k$. In this case, the size of $B$ also depends
on $k$ but only polynomially.

\begin{lemma}\label{lem:w1:3}
Let $n \ge k \ge 1$ be integers. There are integers $k \le \ell \le 2k$
and $m \le n + k$
and a Boolean expression $B$ of size $\poly(n,k)$ and weft~1 on $m^2 + 2m$
variables $X_1,\dots,X_m$, $Y_1,\dots,Y_m$, and $Z_{i,j}$, $i,j = 1,\dots,m$
such that:
\begin{itemize}
\item Every satisfying assignment that sets $\ell^2 + 2\ell$ variables to $1$
sets $k$ variables from $X_1,\dots,X_n$ to $1$, $\ell - k$ variables from
$X_{n + 1},\dots,X_m$ to $1$, $k$ variables from $Y_1,\dots,Y_n$ to $1$ and
$\ell - k$ variables from $Y_{n + 1},\dots,Y_m$ to $1$, and $\ell^2$ variables
of the $Z_{i,j}$ to $1$.
\item For every partial assignment that sets $k$ variables of $X_1,\dots,X_n$ to $1$,
and $k$ variables of $Y_1,\dots,Y_n$ to $1$, there is a unique extension of this
partial assignment to a satisfying assignment of $B$. This satisfying assignment
sets $\ell$ variables from $X_1,\dots,X_m$ and $Y_1,\dots,Y_m$ to $1$,
respectively, and $\ell^2$ variables of the $Z_{i,j}$ to $1$.
\end{itemize}
\end{lemma}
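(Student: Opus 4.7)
The plan is to reduce to the prime case (Lemma~\ref{lem:w1:2}) by padding. First, I would invoke Bertrand's postulate to fix an integer $\ell$ with $k \le \ell \le 2k$ such that $\ell + 1$ is prime: applied at $n = k$ (or handled by hand for $k=1$), Bertrand's postulate yields a prime $p$ with $k < p \le 2k$, and setting $\ell = p - 1$ works. Then I set $m \coloneqq n + (\ell - k)$, so $k \le m \le n + k$ and there are exactly $\ell - k$ ``padding'' variables on each side.

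Next I would apply Lemma~\ref{lem:w1:2} with $\ell$ in place of $k$ and $m$ in place of $n$, obtaining a weft-$1$ Boolean formula $B_0$ on $X_1,\dots,X_m$, $Y_1,\dots,Y_m$, $Z_{i,j}$ of size $\poly(m) = \poly(n,k)$. The key modification is to force the padding variables to be selected. I define
\[
   B \coloneqq B_0 \wedge \bigwedge_{i = n+1}^{m} X_i \wedge \bigwedge_{i = n+1}^{m} Y_i .
\]
Since these extra conjunctions are bounded-fan-in AND gates (or can be folded into the outer $\bigwedge$ of $B_0$), the weft stays at $1$ and the size remains $\poly(n,k)$.

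To verify the first bullet, take a satisfying assignment of $B$ that sets $\ell^2 + 2\ell$ variables to $1$. Since $B$ implies $B_0$ and $\ell + 1$ is prime, Lemma~\ref{lem:w1:2} gives that exactly $\ell$ variables from $X_1,\dots,X_m$, $\ell$ variables from $Y_1,\dots,Y_m$, and $\ell^2$ of the $Z_{i,j}$ are set to $1$. The added conjuncts force all $\ell - k$ padding variables $X_{n+1},\dots,X_m$ and $Y_{n+1},\dots,Y_m$ to be $1$, leaving exactly $k$ selected variables among $X_1,\dots,X_n$ and among $Y_1,\dots,Y_n$, as required. For the second bullet, given a partial assignment that selects $k$ variables from $X_1,\dots,X_n$ and $k$ from $Y_1,\dots,Y_n$, extend it by setting each padding variable $X_{n+1},\dots,X_m,Y_{n+1},\dots,Y_m$ to $1$; this produces $\ell$ selected variables on each side, and the uniqueness part of Lemma~\ref{lem:w1:2} then determines the $Z_{i,j}$ (select $Z_{i,j}$ precisely when both endpoints are selected), giving the unique $\ell^2$ selected edges.

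The only subtle point—and the one I want to be careful about—is uniqueness of the extension: without the forcing conjuncts one could distribute the $\ell - k$ extra selections arbitrarily among the padding coordinates, destroying uniqueness. Pinning all padding coordinates to $1$ by explicit conjunctions is what makes both the exact-count calibration ($\ell^2 + 2\ell$ ones split as $k + (\ell-k)$ on each side) and the uniqueness of the completion go through simultaneously, while preserving weft~$1$ and polynomial size.
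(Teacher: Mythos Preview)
Your proposal is correct and follows essentially the same route as the paper: pick $\ell$ in $[k,2k]$ via Bertrand's postulate, set $m = n + (\ell - k)$, invoke Lemma~\ref{lem:w1:2} with parameters $(m,\ell)$, and then conjoin clauses that force all padding variables $X_{n+1},\dots,X_m$ and $Y_{n+1},\dots,Y_m$ to $1$; the verification of the two bullets is also the same. In fact your treatment of the primality condition is more careful than the paper's own write-up: Lemma~\ref{lem:w1:2} requires $\ell+1$ to be prime, and you explicitly choose $\ell = p-1$ for a prime $p\in(k,2k]$, whereas the paper's phrasing ``choose the smallest prime $\ell\ge k$'' is a slip (it should be $\ell+1$ prime, not $\ell$).
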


\begin{proof}
Choose the smallest prime $\ell \ge k$. By Bertrand's postulate, $\ell \le 2k$.
Let $m = n + \ell - k$.
Let $B'$ be the expression constructed in \cref{lem:w1:2} with parameters $m$ and $\ell$ (in place of $n$ and $k$).
The expression $B$ is defined by
\[
   B = B' \wedge \bigwedge_{i = n + 1}^{n + \ell - k} (X_i \wedge Y_i).
\]
The expression $B'$ ensures that $\ell$ variables from $X_1,\dots,X_m$ and $Y_1,\dots,Y_m$
are set to $1$ and the remaining part ensures that we set exactly $k$
variables  from $X_1,\dots,X_n$ and $Y_1,\dots,Y_n$ to $1$.

For the second part, note that to satisfy $B$, we are forced to set
$X_{n+1},\dots,X_m$ and $Y_{n+1},\dots,Y_m$ to $1$.
\end{proof}

We will also need the following generalization of the previous lemma:

\begin{lemma}\label{lem:w1:4}
Let $n_1$ and $n_2$ be integers and $n = \max\{n_1,n_2\}$. Let $s\colon \Nset \to \Nset$
a function with $s(k) \ge k$ for all $k$.
Let $n \ge k \ge 1$ be an integer such that $s(k) \le n_2$.
There are integers $k \le \ell \le 2s(k)$ and $m \le n + 2s(k)$
and a Boolean expression $B$ of size $\poly(n,s(k))$ and weft~1 on $m^2 + 2m$
variables $X_1,\dots,X_m$, $Y_1,\dots,Y_m$, and $Z_{i,j}$, $i,j = 1,\dots,m$
such that:
\begin{itemize}
  \item Every satisfying assignment that sets $\ell^2 + 2\ell$ variables to $1$
sets $k$ variables from $X_1,\dots,X_{n_1}$ to $1$, $\ell - k$ variables from
$X_{n_1 + 1},\dots,X_m$ to $1$, $s(k)$ variables from $Y_1,\dots,Y_{n_2}$ to $1$ and
$\ell - s(k)$ variables from $Y_{n_2 + 1},\dots,Y_m$ to $1$, and $\ell^2$ variables
of the $Z_{i,j}$ to $1$.
\item For every partial assignment that sets $k$ variables of $X_1,\dots,X_{n_1}$ to $1$,
and $s(k)$ variables of $Y_1,\dots,Y_{n_2}$ to $1$, there is a unique extension of this
partial assignment to a satisfying assignment of $B$. This satisfying assignment
sets $\ell$ variables from $X_1,\dots,X_m$ and $Y_1,\dots,Y_m$ to $1$,
respectively and $\ell^2$ variables of the $Z_{i,j}$ to $1$.
\end{itemize}
\end{lemma}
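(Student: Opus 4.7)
The plan is to mimic the construction of \cref{lem:w1:3} but accommodate (i) the asymmetric quotas $k$ on the $X$-side and $s(k)$ on the $Y$-side, and (ii) the two different effective ranges $n_1$ and $n_2$. The key insight is that Bertrand's postulate only needs to be applied once, to the \emph{larger} quota, so that a single prime $\ell$ can serve as the common side-length of the complete-bipartite-subgraph gadget supplied by \cref{lem:w1:2}.

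First I would choose $\ell$ to be the smallest prime with $\ell \ge s(k)$, so that $s(k)\le \ell \le 2s(k)$ by Bertrand's postulate; combined with $s(k)\ge k$ this automatically gives $k\le\ell$. I would then set $m = n + 2s(k)$, which is large enough so that $m - n_1 \ge \ell - k$ and $m - n_2 \ge \ell - s(k)$, leaving room for the padding on both sides. Invoking \cref{lem:w1:2} with parameters $m$ and $\ell$ (using that $\ell$ is prime) yields a weft-$1$ Boolean expression $B'$ of size $\poly(m) = \poly(n, s(k))$ on $m^2 + 2m$ variables.

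Next I would define the target formula by adding forcing conjuncts to $B'$:
\[
   B = B' \wedge \bigwedge_{i=n_1+1}^{n_1 + \ell - k} X_i \wedge \bigwedge_{i=n_1+\ell - k+1}^{m} \neg X_i \wedge \bigwedge_{j=n_2+1}^{n_2 + \ell - s(k)} Y_j \wedge \bigwedge_{j=n_2+\ell - s(k)+1}^{m} \neg Y_j.
\]
The first and third groups designate the padding $X$- and $Y$-variables that must be $1$; the second and fourth force every other "padding" variable to $0$. Because these are single-literal conjuncts, they can be absorbed into the top-level AND already present in $B'$, so $B$ still has weft $1$ and size $\poly(n, s(k))$.

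For the first property I would argue: any satisfying assignment of $B$ setting exactly $\ell^2 + 2\ell$ variables to $1$ in particular satisfies $B'$ with $\ell^2+2\ell$ ones, so \cref{lem:w1:2} gives $\ell$ ones on each of the $X$- and $Y$-sides and $\ell^2$ edges. The padding conjuncts account for $\ell - k$ of the $X$-ones and forbid any $X_i$ with $i > n_1 + \ell - k$, forcing the remaining $k$ $X$-ones into $X_1,\dots,X_{n_1}$; the symmetric argument handles $Y$. For the second property, a partial assignment on $X_1,\dots,X_{n_1}$ and $Y_1,\dots,Y_{n_2}$ with the right weights, when combined with the uniquely determined values on the padding positions imposed by the forcing conjuncts, fully specifies all $X$- and $Y$-variables; the biconditionals $Z_{i,j} \leftrightarrow X_i \wedge Y_j$ inside $B'$ then uniquely fix the $Z$-variables, yielding exactly $\ell^2$ $Z$-ones. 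The only real bookkeeping subtlety (and the closest thing to an obstacle) is making sure we force the non-designated padding variables to $0$ rather than leaving them free, since otherwise the first property could be violated by "ones" leaking out of the intended range on either side.
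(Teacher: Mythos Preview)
Your proposal is correct and follows essentially the same route as the paper: pick $\ell$ as the smallest prime at least $s(k)$ via Bertrand, invoke \cref{lem:w1:2} with parameters $m$ and $\ell$ to get $B'$, and conjoin literal-forcing clauses that pin down exactly $\ell-k$ padding $X$-variables and $\ell-s(k)$ padding $Y$-variables while negating the remainder. The only cosmetic difference is that you take $m = n + 2s(k)$ (the stated upper bound) whereas the paper uses the tighter $m = n + \ell - k$; both choices are large enough for the padding ranges and small enough for the lemma's bound, so this is immaterial.
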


\begin{proof}
The construction is similar to the one of \cref{lem:w1:3}.
Choose the smallest prime $\ell \ge s(k)$. By Bertrand's postulate, $\ell \le 2s(k)$.
Let $m = n + \ell - k$.
Let $B'$ be the expression constructed in \cref{lem:w1:2} with parameters $m$ and $\ell$.
The expression $B$ is defined by
\[
   B = B' \wedge \bigwedge_{i = n_1 + 1}^{n_1 + \ell - k} X_i \wedge
                 \bigwedge_{i = n_1 + \ell - k + 1}^{n + \ell - k} \neg X_i \wedge
                 \bigwedge_{i = n_2 + 1}^{n_2 + \ell - s(k)} Y_i \wedge
                 \bigwedge_{i = n_2 + \ell - s(k) + 1}^{n + \ell -k} \neg Y_i.
\]
Now the second part of the expression $B$ ensures that $k$ variables of $X_1,\dots,X_{n_1}$
are set to $1$ and $s(k)$ of $Y_1,\dots,Y_{n_2}$ are set to $1$. Since we have different
values $n_1$ and $n_2$, some of the $X_i$ or $Y_i$ will be set to $0$.
\end{proof}

Now we can also extend \cref{prop:w1} to $t = 1$.

\begin{theorem}\label{thm:sum:clique}
Let $(f_n(X_1,\dots,X_n,Y_1,\dots,Y_{p(n)}, Z_1,\dots,Z_{q(n)}))$ be computed
by polynomial size constant depth circuits of weft $t \ge 1$ and
Let $s\colon \Nset \to \Nset$ with $s(k) \ge k$ for all $k$. Then
the family $(g_{n,k})$ defined by
\[
   g_{n,k} = \sum_{d \in \ones {p(n)} k} \sum_{e \in \ones {q(n)} {s(k)}}
            f_n(X_1,\dots,X_n,d_1,\dots,d_{p(n)},e_1,\dots,e_{q(n)})
\]
is in $\VW t$.
\end{theorem}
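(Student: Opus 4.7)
The plan is to extend the strategy of Proposition~\ref{prop:w1} from $t \ge 2$ to $t \ge 1$ by replacing the naive weft-$2$ arithmetic Hamming-weight constraint with the weft-$1$ combining gadget furnished by Lemma~\ref{lem:w1:4}. The key idea is to merge the two bounded sums, over $d \in \ones{p(n)}{k}$ and $e \in \ones{q(n)}{s(k)}$, into a single bounded sum over binary vectors of a fixed Hamming weight depending only on $k$, paying only an arithmetized weft-$1$ gadget as overhead.

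Concretely, apply Lemma~\ref{lem:w1:4} with $n_1 := p(n)$ and $n_2 := q(n)$ to obtain integers $k \le \ell \le 2s(k)$ and $m \le \max\{p(n),q(n)\} + 2s(k)$, together with a weft-$1$ Boolean formula $B$ of size $\poly(n,s(k))$ on $m^2 + 2m$ variables $U_1,\dots,U_m$, $V_1,\dots,V_m$, and $W_{i,j}$. Arithmetize $B$ into $\mathcal{B}$ by the usual replacements ($\wedge \mapsto \cdot$, $\neg \mapsto 1-\cdot$, and each unbounded $\bigvee \mapsto 1 - \prod(1 - \cdot)$); the result is an arithmetic circuit of weft~$1$, constant depth, and polynomial size. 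Then form
\[
   h_n(X,U,V,W) \;:=\; \mathcal{B}(U,V,W)\cdot f_n\bigl(X_1,\dots,X_n, U_1,\dots,U_{p(n)}, V_1,\dots,V_{q(n)}\bigr),
\]
and consider
\[
   \tilde g_{n,k} \;:=\; \sum_{(u,v,w) \in \ones{m^2+2m}{\ell^2+2\ell}} h_n(X,u,v,w).
\]
By the first bullet of Lemma~\ref{lem:w1:4}, every satisfying assignment (i.e., one with $\mathcal{B}(u,v,w) = 1$) has exactly $k$ ones among $u_1,\dots,u_{p(n)}$ and exactly $s(k)$ ones among $v_1,\dots,v_{q(n)}$; by the second bullet, each such pair $(d,e) \in \ones{p(n)}{k} \times \ones{q(n)}{s(k)}$ extends uniquely. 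Non-satisfying assignments contribute $0$ because $\mathcal{B}$ vanishes. Hence $\tilde g_{n,k} = g_{n,k}$. The new parameter $\ell^2 + 2\ell \le 4s(k)^2 + 4s(k)$ depends on $k$ alone, and the number of variables $m^2 + 2m$ is polynomial in $n$ times a function of $k$, so as in Proposition~\ref{prop:w1} the $k$-dependence of $h_n$ is absorbed by the fpt-substitution in the definition of $\VW t$, placing $g_{n,k}$ in $\VW t$.

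The main obstacle is the weft accounting: one must confirm that multiplying the weft-$1$ gadget $\mathcal{B}$ with the weft-$t$ circuit for $f_n$ still yields weft exactly $t$ rather than $t+1$. This works because the combination is effected by a single bounded-fan-in $\cdot$ at the root, so any root-to-leaf path lies entirely within one of the two subcircuits; since $t \ge 1$ we get $\max(1,t) = t$. The arithmetization is also routine at weft~$1$: the outer $\bigwedge_{i,j=1}^m$ in $B$ becomes one unbounded product, each unbounded $\bigvee$ becomes one unbounded $\prod$ wrapped in bounded gates, and these unbounded gates lie in parallel rather than stacked, so weft~$1$ is preserved. The remaining verifications (polynomial size, constant depth, and the bijection between satisfying assignments and pairs $(d,e)$) follow directly from Lemma~\ref{lem:w1:4}.
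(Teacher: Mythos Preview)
Your proof is correct and follows essentially the same approach as the paper: both apply Lemma~\ref{lem:w1:4} with $n_1=p(n)$, $n_2=q(n)$, arithmetize the resulting weft-$1$ Boolean gadget $B$, multiply it against $f_n$, and rewrite the double sum as a single sum over $\ones{m^2+2m}{\ell^2+2\ell}$. Your version is in fact more explicit than the paper's on two points the paper leaves implicit: the weft accounting (that the bounded top-level product combining $\mathcal{B}$ with $f_n$ gives weft $\max(1,t)=t$), and the observation that the arithmetization of $B$ keeps its unbounded gates in parallel so weft~$1$ is preserved.
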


\begin{proof}
Let $B$ be an arithmetization\footnote{By an arithmetization of a Boolean formula
we mean an arithmetic formula that is obtained by replacing every AND-gate by a multiplication
and every negation $\neg X$ by $1-X$. OR-gates are replaced using de Morgan's law.
When we restrict the inputs of the arithmetized formula to $\{0,1\}$, then it behaves like the Boolean one.} of
the expression from \cref{lem:w1:4} with $n_1 \coloneqq p(n)$,
$n_2 \coloneqq q(n)$. $B$ has weft~1.
Let $m$ and $\ell$ be defined as in \cref{lem:w1:4},
$B$ has $m^2 + 2m$ variables. We can write
\[
   g_{n,k} = \sum_{e \in \ones{m^2 + 2m}{\ell^2 + 2\ell}} B(e)
    f_n(X_1,\dots,X_n,e_1,\dots,e_{p(n)},e_m,\dots,e_{m + q(n)}).
\]
The properties of $B$ ensure that only the vectors $e$ survive for which $(e_1,\dots,e_{p(n)})$ and
$(e_m,\dots,e_{m + q(n)})$ have $k$ and $s(k)$ $1$s, respectively,
and that for each such pair of vectors there is exactly one such $e$.
\end{proof}

\subsection{Weft 1 Arithmetic Circuits}\label{app:weft}

We prove a normal form for circuits of weft 1.

\begin{lemma}\label{lem:struct:0}
Let $C$ be a circuit of depth $d$, size $s$ and weft $w$.
Then there is a formula of size $O(2^{d - w}s^w)$ computing the same functions as $C$.
\end{lemma}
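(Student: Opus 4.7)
The plan is to do the standard DAG-to-tree unrolling of $C$: for every gate of out-degree greater than one, we make a separate copy for each consuming edge, and proceed recursively. The resulting object is a formula $F$ computing the same polynomials as $C$ (one copy of $F$ per output, if there is more than one; but the standard definition gives a single output). The only thing to verify is the size bound, which reduces to bounding the number of leaves of the unrolled tree.

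I would prove by induction on the structure of $C$ (say, on $d_g$, the length of the longest path from a leaf to $g$) the following invariant: for every gate $g$ of depth $d_g$ whose sub-DAG has weft $w_g$, the unrolled tree rooted at $g$ has at most $2^{d_g - w_g} s^{w_g}$ leaves. The base case $d_g = 0$ gives an input node, which becomes a single leaf, and $2^0 s^0 = 1$. For the inductive step, let $g$ have children $g_1,\dots,g_k$ in $C$. If $g$ is a bounded fan-in gate ($k \le 2$), then $w_g = \max_i w_{g_i}$ and $d_{g_i} \le d_g - 1$, so the number of leaves is at most $2 \cdot 2^{(d_g-1) - w_g} s^{w_g} = 2^{d_g - w_g} s^{w_g}$. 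If $g$ is an unbounded fan-in gate ($k \le s$, since $s$ counts edges in $C$), then $w_g = 1 + \max_i w_{g_i}$, and the number of leaves is at most $s \cdot 2^{(d_g-1) - (w_g - 1)} s^{w_g - 1} = 2^{d_g - w_g} s^{w_g}$.

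Applying the invariant to the output gate gives a formula tree with at most $2^{d-w} s^w$ leaves. Since every internal gate of $C$ (and hence of $F$) has fan-in at least $2$, a rooted tree with $\ell$ leaves has at most $\ell - 1$ internal nodes and thus $O(\ell)$ edges in total. Therefore $F$ has size $O(2^{d-w} s^w)$, measured either in gates or in edges, as required.

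There is no real obstacle here beyond choosing the right induction statement; the one subtlety is to make sure the bookkeeping on $d_g$ and $w_g$ is correct in both the bounded and unbounded case, and to remember that ``size $s$'' refers to the edge count of $C$ (as stated in Section~2), which is exactly what bounds the fan-in of any single unbounded gate, giving the factor $s$ in the unbounded step.
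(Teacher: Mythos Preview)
Your argument is correct and follows the same approach as the paper: unroll the circuit into a formula by duplicating shared subcircuits, then bound the number of leaves. The paper gives the leaf bound in one line by observing that every root-to-leaf path passes through at most $d-w'$ bounded gates and $w'\le w$ unbounded gates of fan-in at most $s$, whereas you make this precise by induction on $d_g$; the only implicit step in your induction is that $2^{d-w}s^{w}$ is monotone in $w$ (which holds once $s\ge 2$, hence is harmless inside the $O(\cdot)$).
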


\begin{proof}
We use the standard technique of duplicating gates whenever they are used more than once:
Whenever we encounter a gate $g$ that has fan-out $o > 1$, then
we create a total of $o$ copies of the subcircuit induced by $g$. When this process stops,
we end up with a formula computing the same polynomial. Since we only create copies of the subcircuits,
the depth and the weft do not change.

Any path of the resulting formula from the root to a leaf has $d-w'$ gates of fan-in~$2$
and $w'$ gates of unbounded fan-in---which can be at most $s$---on it for some $w' \le w$.
From this, the size bound follows.
\end{proof}

\begin{lemma}\label{lem:struct:1}
Let $(C_{n,k})$ be a family of circuits of constant depth and fan-in~$2$. Then each
$C_{n,k}$ computes a polynomial of constant degree that depends on a constant number
of variables.
\end{lemma}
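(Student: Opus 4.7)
The plan is to prove, by straightforward induction on the depth $d$, that any arithmetic circuit with all gates of fan-in $2$ and depth at most $d$ computes a polynomial of degree at most $2^{d}$ that depends on at most $2^{d}$ variables. Since the family $(C_{n,k})$ has constant depth by hypothesis, $2^{d}$ is a constant independent of $n$ and $k$, which is exactly the conclusion.

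For the base case $d = 0$, the circuit is a single input node, which is either a field constant (degree $0$, no variables) or an indeterminate (degree $1$, one variable), so the claim is immediate. For the inductive step, the output gate has two predecessors, each being the output of a subcircuit of depth at most $d-1$. By the induction hypothesis, each of these subcircuits computes a polynomial of degree at most $2^{d-1}$ depending on at most $2^{d-1}$ variables. If the output gate is an addition, its degree is at most the maximum of the two children's degrees, hence at most $2^{d-1} \le 2^{d}$. If it is a multiplication, the degree is at most the sum, hence at most $2 \cdot 2^{d-1} = 2^{d}$. In either case, the set of variables on which the output polynomial depends is contained in the union of the two children's variable sets, so it has size at most $2 \cdot 2^{d-1} = 2^{d}$.

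One small point worth noting is that the circuit is a DAG, so gates may be shared; however, both the degree and the set of variables on which a gate depends are intrinsic properties of the polynomial that gate computes, independent of how the gate is physically realized. The induction therefore goes through regardless of sharing. There is no substantial obstacle here: the lemma is essentially a reformulation of the standard fact that only boundedly many inputs are reachable from the root of a bounded fan-in circuit within a constant depth budget.
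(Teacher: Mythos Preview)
Your argument is correct. The induction on depth cleanly gives the bounds $\deg \le 2^{d}$ and at most $2^{d}$ variables, and your remark about sharing is well taken: since you are reasoning about the polynomial computed at each gate (which depends only on its depth, not on its fan-out), the DAG structure causes no trouble.

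The paper proceeds slightly differently: it first invokes the preceding lemma (\cref{lem:struct:0}) to turn the circuit into a formula of size $O(2^{d})$ (using that the weft is $0$), and then observes that such a formula has only constantly many leaves (hence constantly many variables) and constantly many multiplication gates (hence constant degree). Your route is a bit more self-contained, since it avoids the detour through \cref{lem:struct:0} and gives explicit bounds directly; the paper's route has the mild advantage of reusing an already established tool. Both arguments are essentially unpacking the same counting fact, so the difference is purely stylistic.
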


\begin{proof}
By \cref{lem:struct:0}, we can assume that $C_{n,k}$ is a formula of constant size,
constant depth, and fan-in~2. (Note that the weft of $C_{n,k}$ is $0$.)
Such formula has a constant number of leaves and a constant number of multiplication gates.
\end{proof}

\begin{lemma}\label{lem:struct:2}
Let $(C_{n,k})$ be a family of arithmetic circuits of fpt size, constant depth,
and weft~1. Then there is a family of arithmetic formulas $(F_{n,k})$ of fpt size and a constant $b$
such that:
\begin{enumerate}
\item The top gate of $F_n$ is an addition gate of fan-in $\le b$,
\item the second layer of $F_n$ are multiplication gates of fan-in $\le b$,
\item the third layer are addition and multiplication gates of unbounded fan-in,
\item the fourth layer are addition gates of fan-in $\le b$, and
\item the fifth layer are multiplication gates of fan-in $\le b$.
\end{enumerate}
Furthermore, every multiplication gate in the fifth layer has exactly one input that is labeled with
a constant and every addition gate of the fourth layer has at most one child that
computes a constant.
\end{lemma}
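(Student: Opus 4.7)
The plan is to first invoke \cref{lem:struct:0} to convert the given weft~$1$ circuit $C_{n,k}$ (of constant depth $d$ and fpt size $s$) into an equivalent formula $F_{n,k}$ of size $O(2^{d-1}s)$, which is still fpt. I then reorganise this formula into the prescribed five-layer shape. The key structural observation is that in a weft~$1$ formula the unbounded-fan-in gates form an antichain, since two of them on a common root-to-leaf path would give weft at least~$2$. This lets me view $F_{n,k}$ as an ``upper'' bounded-fan-in sub-formula $U$ sitting on top of the unbounded gates $u_1,\dots,u_m$, whose input wires in turn are rooted at ``lower'' bounded-fan-in sub-formulas $L_{i,j}$.

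Each of these bounded-fan-in pieces ($U$ and the $L_{i,j}$) has constant depth at most~$d$, so by \cref{lem:struct:1} it computes a polynomial of constant degree depending on only a constant number of its inputs. I next expand each $L_{i,j}$ as an explicit sum of monomials in its leaf variables: since the number of variables and the degree are both constant, there are only constantly many monomials, and each is a product of one constant (the coefficient) and at most $\deg(L_{i,j})$ variables. Writing this sum out directly yields, feeding into $u_i$, a layer~$4$ bounded-fan-in addition above layer~$5$ bounded-fan-in multiplications in which the constant coefficient of each monomial plays the role of the single constant input required by the statement; any purely constant summand is attached to the layer~$4$ addition as its one allowed constant child.

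The upper part $U$ is handled symmetrically: expand the polynomial it computes as a sum of monomials in its inputs, which are the outputs of the $u_i$ together with---for the few root-to-leaf paths in $F_{n,k}$ that cross no unbounded gate at all---some original leaves. This gives the required layer~$1$ bounded-fan-in addition above layer~$2$ bounded-fan-in multiplications of $u_i$-outputs. The direct-leaf inputs are accommodated by inserting a fan-in one ``pass-through'' gate at the unbounded layer so that every input of a layer~$2$ product is formally the output of a layer~$3$ gate. One similarly inserts identity padding gates (and constant inputs of value~$1$) whenever the ``exactly one constant input'' or ``at most one constant child'' conditions would otherwise fail, or when $U$ or some $L_{i,j}$ is degenerate. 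Each local expansion increases the local size by only a constant multiplicative factor, and there are at most $\lvert F_{n,k}\rvert$ many unbounded gates, so the resulting formula still has fpt size.

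The main obstacle is not conceptual but bookkeeping: one has to verify carefully that the two side conditions on constants in layers~$4$ and~$5$ survive in all corner cases---in particular when the original formula carries constants attached to multiplication or addition gates in awkward positions, or when constants appear inside the upper part $U$---and that none of the padding tricks disturbs the constant-depth or weft-$1$ property. Correctness of the polynomial computed at each stage is immediate, since every replacement performed is an equality of polynomials.
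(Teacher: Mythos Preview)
Your proposal is correct and follows essentially the same route as the paper: convert to a formula via \cref{lem:struct:0}, observe that the unbounded gates form an antichain so that the bounded-fan-in portions above and below each compute constant-size polynomials by \cref{lem:struct:1}, expand both the upper part and each lower part explicitly as a sum of monomials (yielding layers~1--2 and layers~4--5 respectively), and pad with fan-in-one dummy gates to handle paths that miss the unbounded layer and to enforce the constant-input conventions. The paper phrases the upper part slightly differently---it replaces each unbounded gate $g$ by a fresh indeterminate $Z_g$ and expands the resulting polynomial $q$---but this is exactly your ``expand $U$ as a sum of monomials in the $u_i$-outputs.''
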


\begin{proof}
By \cref{lem:struct:0}, we can assume that $C_{n,k}$ is a formula.
The subcircuit $S_v$ that is induced by any child $v$ of an unbounded fan-in gate
is a formula of constant depth and bounded fan-in. By \cref{lem:struct:1},
$S$ computes a polynomial $p_v$ of constant degree depending on a constant number of variables.
Such a polynomial has only a constant number of monomials of constant degree. Each monomial
can be computed by a multiplication gate of bounded fan-in and these monomials can be added
by an addition gate of bounded fan-in. Each multiplication gate has one input which is not
a variable. There might be one monomial which is a nonzero constant.
For this monomial, we will still insert a multiplication
gate of fan-in $1$ for consistency reasons.

Consider the circuit $D$ that is obtained from $C_{n,k}$ by removing all subcircuits
induced by the gates of unbounded fan-in and replacing each gate $g$ by a new indeterminate $Z_g$.
The circuit $D$ computes a polynomial $q$ of constant degree depending on a constant number of variables.
Like above, this polynomial can be computed by a summation gate followed by a layer of multiplication gates,
all having bounded fan-in.

We get the claimed formula $F_{n,k}$ by taking the depth-2-formula computing $q$ and then replacing
every variable $Z_g$ by the corresponding gate $g$ and every child $v$ of $g$ by the depth-2-formula
computing $p_v$.

There might be paths in $F_{n,k}$ from a leaf $\ell$
to the root that have length $2$, this happens when there is no unbounded fan-in gate on the path
from $\ell$ to the root in $C_{n,k}$. We can make this path longer by adding dummy gates of
fan-in $1$ to it.
\end{proof}

\subsection{Boolean-arithmetic Formulas}\label{app:bool}

A \emph{Boolean-arithmetic formula} is a formula of the form
\[
    B(X_1,\dots,X_n) \cdot \prod_{i = 1}^n (R_i X_i + 1 - X_i)
\]
where $B$ is an arithmetization of some Boolean formula and the $R_i$ some polynomial
or even rational function (over a different set of variables). For each satisfying $\{0,1\}$-assignment $e$
to $B$, the right hand side produces one product  and the $e_i$ switch the factors $R_i$ on or off.
Boolean-arithmetic formulas will play an important role for the hardness proof of the clique family. Note that if $B$
has weft~1, then the whole formula has weft 1, since the right hand side has weft~1.

The \emph{support} of a monomial $m$ is the set of variables appearing in $m$ (with positive degree).
The \emph{support size} of $m$ is the number of variables in its support. For some polynomial $f$,
$\spc k(f)$ is the sum of all monomials in $f$ that have support size equal to $k$. Like for the homogeneous parts,
we have that $f = \sum_{i = 0}^s \spc k(f)$, where $s$ is the largest support size of any monomial.
Note that for any monomial, the support size is always bounded by the degree and we have equality iff
the monomial is multilinear.

The aim of this section is to prove that for a bounded depth arithmetic circuit $C$ of weft~1 computing a polynomial $f$ in $n$ variables,
we can write the part with support size $k$ of $f$ as a bounded sum
over a Boolean arithmetic expression.
In this section, $F$ is a formula as in \cref{lem:struct:2} computing the polynomial $f \in K[X_1,\dots,X_n]$.
For $A \subseteq \{X_1,\dots,X_n\}$, we consider the subtree that is obtained from
$F$ by selecting all monomials $m$ that are computed at the multiplication gates of the
fifth layer such that the support of $m$ is contained in $A$. We denote this subformula
by $F|_A$ and the polynomial computed by it by $f|_A$.

\begin{observation}
$f|_A$ is the sum of all monomials of $f$ that contain only variables from $A$.
$f|_A$ is obtained from $f$ by setting all variables not in $A$ to $0$.
\end{observation}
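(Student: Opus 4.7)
The plan is to observe that the two assertions in the observation are essentially equivalent, and then to verify the second (computational) formulation by tracking what the normal-form formula $F$ of \cref{lem:struct:2} does when variables outside $A$ are set to $0$. The equivalence goes as follows: if one sets to $0$ every variable $X_i \notin A$ in $f$, then a monomial of $f$ survives if and only if none of its variables is zeroed, i.e.\ iff its support is contained in $A$. So it suffices to prove that $f|_A$, as computed by the subformula, equals the evaluation of $f$ under the substitution $X_i \mapsto 0$ for $X_i \notin A$.

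For this, the key input is the very rigid structure of $F$ guaranteed by \cref{lem:struct:2}. At layer~$5$, each multiplication gate has bounded fan-in with exactly one constant input and the remaining inputs being variables, so every such gate computes a single monomial $c \cdot X_{i_1} \cdots X_{i_r}$. Under the substitution $X_i \mapsto 0$ for $X_i \notin A$, such a layer-$5$ gate evaluates to its original value if $\{X_{i_1},\dots,X_{i_r}\} \subseteq A$ and to $0$ otherwise. Thus the layer-$5$ gates whose values are preserved are precisely the ones whose monomials have support in $A$, i.e.\ exactly the ones retained in the definition of $F|_A$, while the discarded ones are replaced by $0$.

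It remains to propagate this through the upper four layers. Layers $1$--$4$ of $F$ and $F|_A$ have the same gates and the same wiring; they differ only in which layer-$5$ gates feed the layer-$4$ additions. Since an addition gate is insensitive to inputs equal to $0$, zeroing the "bad" layer-$5$ gates in $F$ produces at each layer-$4$ addition the same partial sum as the subformula $F|_A$ (an empty sum being interpreted as $0$, which is exactly the value of that gate in $F$ under the substitution). The layer-$3$, layer-$2$, and layer-$1$ gates then process identical inputs in the two circuits, so the outputs agree. This completes the argument.

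I do not expect any serious obstacle here: the only point that needs care is the base case at layer~$5$, where it is essential that each such gate is genuinely a product of variables (with a single constant factor) rather than a more general polynomial; this is exactly the content of the last sentence of \cref{lem:struct:2}. Once that is in hand, everything else is bookkeeping on a fixed formula structure.
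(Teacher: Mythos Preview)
Your argument is correct. The paper states this as an observation without proof, presumably because once the normal form of \cref{lem:struct:2} is in place the claim is immediate: each layer-5 gate computes a single monomial, and substituting $0$ for every $X_i\notin A$ kills exactly those layer-5 gates whose support is not contained in $A$; from there the upper four layers are identical in $F$ and $F|_A$, so the outputs coincide. Your write-up spells out precisely this reasoning and handles the one point that needs care (that a layer-5 gate really is a constant times a product of variables), so there is nothing to add or correct.
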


By the last observation, we can view $|_A$ as an operator that can be applied independently
from a given formula $F$.

\begin{observation}
The operator $|_A$ is additive and multiplicative, that is, for polynomials $f$ and $g$, we have
\begin{enumerate}
\item $(f+g)|_A = f|_A + g|_A$ and
\item $(fg)|_A = (f|_A)(g|_A)$.
\end{enumerate}
\end{observation}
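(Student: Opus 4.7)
The plan is to leverage the characterization in the preceding observation: $f|_A$ equals the polynomial obtained from $f$ by substituting $0$ for every variable not in $A$. Both claims then reduce to the fact that such a substitution is an algebra endomorphism of $K[X_1,\dots,X_n]$.

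First I would define the substitution $\sigma_A\colon K[X_1,\dots,X_n] \to K[X_1,\dots,X_n]$ by $\sigma_A(X_i) = X_i$ if $X_i \in A$ and $\sigma_A(X_i) = 0$ otherwise, extended uniquely to an algebra endomorphism (this is exactly the kind of substitution already introduced before the definition of projections). The previous observation tells us that $f|_A = \sigma_A(f)$ for every polynomial $f$. Both the additivity $(f+g)|_A = f|_A + g|_A$ and the multiplicativity $(fg)|_A = (f|_A)(g|_A)$ are then immediate instances of the fact that $\sigma_A$ is a ring homomorphism.

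If one prefers a direct combinatorial argument (since the observation is phrased in terms of monomials with support contained in $A$), one can also argue monomial-by-monomial. Writing $f = \sum_m c_m m$ and $g = \sum_m d_m m$ as sums over monomials, additivity follows because a monomial $m$ has support in $A$ iff it contributes to both sides, so collecting coefficients yields $(f+g)|_A = \sum_{\operatorname{supp}(m)\subseteq A}(c_m+d_m)m = f|_A + g|_A$. For multiplicativity, one uses that the support of a product of two monomials $m_1 m_2$ is $\operatorname{supp}(m_1)\cup\operatorname{supp}(m_2)$, so $\operatorname{supp}(m_1 m_2) \subseteq A$ if and only if both $\operatorname{supp}(m_1)\subseteq A$ and $\operatorname{supp}(m_2)\subseteq A$. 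Distributing the product $fg = \sum_{m_1,m_2} c_{m_1} d_{m_2} m_1 m_2$ and restricting to monomials with support in $A$ then separates into the product of the two restricted sums.

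There is no real obstacle here; the only thing to be a bit careful about is the equivalence between the formula-based definition of $|_A$ (select the multiplication gates of the fifth layer whose monomial has support in $A$) and the semantic characterization via the substitution $\sigma_A$. But this equivalence is exactly the content of the previous observation, which we may assume.
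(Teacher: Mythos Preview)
Your proposal is correct and matches the paper's intent: the observation is stated without proof precisely because it is immediate from the preceding observation that $f|_A$ is the result of substituting $0$ for the variables outside $A$, i.e., $f|_A = \sigma_A(f)$ for the algebra endomorphism $\sigma_A$ you describe. Your write-up simply makes explicit what the paper leaves implicit.
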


\begin{lemma}\label{lem:S:inclusion}
For any $A \subseteq \{X_1,\dots,X_n\}$, $\lvert A\rvert = k$, we have
\[
   \spc {k}(f|_A) = \sum_{B \subseteq A} (-1)^{k - \lvert B\rvert} f|_B.
\]
\end{lemma}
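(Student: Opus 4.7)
The plan is a direct monomial-by-monomial calculation using inclusion-exclusion on the Boolean lattice of subsets of $A$. I would first reformulate both sides as sums of monomials of $f$, weighted by appropriate coefficients, and then show that these coefficients agree.

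Concretely, fix any monomial $m$ appearing in $f$ and let $S = \supp(m)$. First observe that by the definition of $|_B$, the monomial $m$ contributes (with its original coefficient in $f$) to $f|_B$ if and only if $S \subseteq B$; otherwise it contributes $0$. On the left-hand side, $m$ contributes to $\spc{k}(f|_A)$ iff $S \subseteq A$ and $|S| = k$, and since $|A| = k$ this is equivalent to $S = A$.

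On the right-hand side, the coefficient of $m$ (relative to its coefficient in $f$) is
\[
  \sum_{\substack{B \subseteq A\\ S \subseteq B}} (-1)^{k - |B|}.
\]
If $S \not\subseteq A$ this sum is empty, so $0$, matching the left-hand side. If $S \subseteq A$, set $T = A \setminus S$ and write $B = S \cup B'$ for $B' \subseteq T$; then
\[
  \sum_{B' \subseteq T} (-1)^{k - |S| - |B'|} \;=\; (-1)^{k - |S|} \sum_{B' \subseteq T} (-1)^{-|B'|} \;=\; (-1)^{k-|S|} (1-1)^{|T|},
\]
which vanishes unless $T = \emptyset$, i.e., $S = A$, in which case the value is $(-1)^{0} = 1$. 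Thus the right-hand side also equals the sum of monomials of $f$ with support exactly $A$, which is $\spc{k}(f|_A)$.

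There is no real obstacle here: once one notes that $\spc{k}$ restricted to $f|_A$ picks out exactly the monomials of support $A$ (because $|A| = k$ forces the containment to be an equality), the identity is the standard Möbius inversion on the subset lattice $2^A$. The only mild care needed is to invoke the additivity of $|_B$ (recorded in the observation preceding the lemma) so that exchanging the sum over $B$ with the sum over monomials of $f$ is legitimate.
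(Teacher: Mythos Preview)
Your proof is correct and is essentially the same inclusion--exclusion argument as the paper's, just written out more explicitly: the paper observes that the monomials of $f|_A$ with support strictly smaller than $k$ are exactly those lying in some $f|_{A_j}$ with $A_j = A \setminus \{X_{i_j}\}$ and then appeals to inclusion--exclusion, whereas you carry out the same computation monomial-by-monomial via the M\"obius function on $2^A$.
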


\begin{proof}
Let $A = \{X_{i_1},\dots,X_{i_k}\}$ and let $A_j = A \setminus \{X_{i_j}\}$ for all $1\le j \le k$.
$f|_A$ only has monomials with support size at most $k$.
$\spc k(f|_A)$ are all monomials of $f|_A$ that have exactly support size $k$.
The monomials of smaller support size are the monomials of $f|_{A_1},\dots,f|_{A_k}$.
Now the claim follows by the inclusion-exclusion principle. 
\end{proof}

\begin{lemma}\label{lem:spcfA}
For any $0 \le k \le n$, we have
\[
   \spc k(f) = \sum_{\ell = 0}^k (-1)^{k - \ell} \binom{n - \ell}{k - \ell}
               \sum_{\substack{A \subseteq \{X_1,\dots,X_n\}\\ {\lvert A\rvert = \ell}}} f|_A.
\]
\end{lemma}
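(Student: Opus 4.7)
The plan is to derive this formula from \cref{lem:S:inclusion} via a straightforward double-counting argument. The key observation I would start with is that
\[
   \spc k(f) = \sum_{\substack{A \subseteq \{X_1,\dots,X_n\}\\ \lvert A\rvert = k}} \spc k(f|_A).
\]
Indeed, every monomial $m$ of $f$ of support size exactly $k$ has a uniquely determined support set $S$ of size $k$, and it contributes to $\spc k(f|_A)$ (for $A$ of size $k$) if and only if $A = S$. Hence each such monomial is counted exactly once on the right, and monomials of other support sizes contribute nothing.

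Next, I would apply \cref{lem:S:inclusion} to each inner term $\spc k(f|_A)$ to obtain
\[
   \spc k(f) = \sum_{\substack{A \subseteq \{X_1,\dots,X_n\}\\ \lvert A\rvert = k}} \sum_{B \subseteq A} (-1)^{k - \lvert B\rvert} f|_B.
\]
Then I would swap the order of summation, grouping by the inner set $B$: for a fixed $B$ with $\lvert B\rvert = \ell \le k$, the number of $A$ with $\lvert A\rvert = k$ and $B \subseteq A$ equals the number of $(k-\ell)$-subsets of $\{X_1,\dots,X_n\} \setminus B$, which is $\binom{n-\ell}{k-\ell}$. Collecting by $\ell = \lvert B \rvert$ yields
\[
   \spc k(f) = \sum_{\ell=0}^{k} (-1)^{k-\ell} \binom{n-\ell}{k-\ell} \sum_{\substack{B \subseteq \{X_1,\dots,X_n\}\\ \lvert B \rvert = \ell}} f|_B,
\]
as desired.

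There is really no serious obstacle here: the whole proof is bookkeeping on top of \cref{lem:S:inclusion}. The only point that needs a moment of care is the initial identity, which depends on the fact that for $\lvert A \rvert = k$ the operator $\spc k(\cdot)$ applied to $f|_A$ isolates exactly the monomials whose support equals $A$ (not merely is contained in $A$); this is what makes the sum over $A$ of size $k$ a disjoint decomposition of $\spc k(f)$ rather than an overcount.
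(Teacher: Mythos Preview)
Your proof is correct and follows essentially the same approach as the paper: start from $\spc k(f) = \sum_{\lvert A\rvert = k} \spc k(f|_A)$, apply \cref{lem:S:inclusion}, and then reorder the double sum by counting, for each $B$ of size $\ell$, the $\binom{n-\ell}{k-\ell}$ supersets $A$ of size $k$. Your justification of the initial identity is in fact slightly more explicit than the paper's.
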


\begin{proof}
We have
\[
   \spc k(f) = \sum_{\substack{A \subseteq \{X_1,\dots,X_n\}\\ {\lvert A\rvert = k}}} \spc k(f|_A),
\]
since every monomial on the right hand side appears exactly once one the left hand side and vice
versa. By \cref{lem:S:inclusion}, we get
\[
   \spc k(f) = \sum_{\substack{A \subseteq \{X_1,\dots,X_n\}\\ {\lvert A\rvert = k}}} \sum_{B \subseteq A} (-1)^{k - \lvert B\rvert } f|_B.
\]
If $\lvert B\rvert = \ell$, then $B$ is a subset of $\binom {n - \ell}{k - \ell}$ different $A \subseteq \{X_1,\dots,X_n\}$ of size $k$.
Therefore,
\[
   \spc k(f) = \sum_{\ell = 0}^k (-1)^{k - \ell} \binom{n - \ell}{k - \ell}
               \sum_{\substack{A \subseteq \{X_1,\dots,X_n\}\\ {\lvert A\rvert  = \ell}}} f|_A. \qquad \qed
\]

\end{proof}

This means in order to compute $\spc k(f)$, it is sufficient to compute the sums
\[
   \sum_{\substack{A \subseteq \{X_1,\dots,X_n\}\\ {\lvert A\rvert = \ell}}} f|_A.
\]
This is a bounded summation, so we need to understand how to compute $f|_A$ given some particular $A$.
Let $v$ be a summation gate in the fourth layer of $F$ computing a polynomial $p_v$.
Let $T_v$ be the support of $p_v$, which is the union of the supports of all monomials in $p_v$.

\begin{observation}
Given some $A \subseteq \{X_1,\dots,X_n\}$ of size $k$. Then $p_v|_A = p_v|_{A \cap T_v}$.
\end{observation}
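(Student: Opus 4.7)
My plan is to argue directly from the characterization of the restriction operator given right before the observation. Recall that $f|_A$ equals the sum of all monomials of $f$ whose (monomial) support is contained in $A$, or equivalently the polynomial obtained from $f$ by setting every variable outside $A$ to $0$. I will use the first characterization for $p_v$.

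First I would unpack notation carefully to avoid a confusion between the support of a single monomial and the support $T_v$ of the whole polynomial $p_v$, which is the union of the supports of its monomials. Write $p_v = \sum_m c_m \cdot m$ where the sum runs over monomials $m$; for each such $m$ we have $\supp(m) \subseteq T_v$ by the very definition of $T_v$.

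Next I would observe the elementary set-theoretic equivalence: for any set $S$ with $S \subseteq T_v$ and any $A$, we have $S \subseteq A$ if and only if $S \subseteq A \cap T_v$. The ``only if'' direction uses $S \subseteq T_v$, and the ``if'' direction is immediate since $A \cap T_v \subseteq A$. Applying this with $S = \supp(m)$ for each monomial $m$ appearing in $p_v$ shows that the collection of monomials of $p_v$ with support contained in $A$ coincides exactly with the collection of monomials of $p_v$ with support contained in $A \cap T_v$. Summing these identical collections of monomials gives $p_v|_A = p_v|_{A \cap T_v}$.

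No step here looks genuinely difficult; the only thing I expect could trip up a reader is the overloaded use of ``support'', and I would make the distinction explicit at the start of the proof. The argument uses nothing beyond the observation that setting to zero variables that do not occur in $p_v$ has no effect on $p_v$, so the restriction $|_A$ only ``sees'' those variables of $A$ that actually appear in $p_v$, namely $A \cap T_v$.
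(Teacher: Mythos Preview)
Your argument is correct and is exactly the intended justification; the paper states this as an observation without proof, since it follows immediately from the characterization of $|_A$ (setting variables outside $A$ to $0$) together with the fact that every monomial of $p_v$ has support contained in $T_v$.
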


\begin{lemma}\label{lem:restrict:1}
Let $u$ be a multiplication gate in the third layer computing a polynomial $q$.
Then
\[
  q|_A = \prod_{v} p_v|_{A \cap T_v}
\]
where the product is taken over all children $v$ of $u$ in the fourth layer
and $p_v$ is the polynomial computed at $v$.
\end{lemma}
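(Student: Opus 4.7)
The plan is to deduce the identity by two successive applications of observations established just before the statement, so almost no new work is needed. First I would use the structural description of $F$: the gate $u$ sits in the third layer and is a multiplication gate, so its children $v_1,\dots,v_r$ all live in the fourth layer, and the polynomial computed at $u$ factors as $q=\prod_{v}p_v$ where the product ranges over the children of $u$.

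Next I would invoke the multiplicativity of the restriction operator $|_A$ (the second part of the observation on additivity/multiplicativity). A straightforward induction on the number of factors extends that binary identity to arbitrary finite products, giving
\[
   q|_A \;=\; \Bigl(\prod_v p_v\Bigr)\Big|_A \;=\; \prod_v \bigl(p_v|_A\bigr).
\]
Finally I would rewrite each factor using the observation that $p_v|_A=p_v|_{A\cap T_v}$: this holds because $|_A$ kills every variable not in $A$, while $p_v$ only involves variables from its support $T_v$, so the variables of $A$ that lie outside $T_v$ make no difference when they are set to zero. Substituting yields $q|_A=\prod_v p_v|_{A\cap T_v}$, which is exactly the claim.

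There is really no obstacle here; the only small care is to check that the multiplicativity of $|_A$ carries from two factors to arbitrary finite products, which is an immediate induction. Since $u$ has only finitely many children in the formula $F$, this is harmless, and the identity follows without any additional combinatorics.
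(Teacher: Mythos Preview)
Your argument is correct and matches the paper's proof essentially line for line: factor $q=\prod_v p_v$, apply multiplicativity of $|_A$, then replace $A$ by $A\cap T_v$ via the preceding observation. The only addition you make is the explicit remark that multiplicativity extends by induction to finitely many factors, which the paper leaves implicit.
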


\begin{proof}
We have $q = \prod_v p_v$. Since $|_A$ is multiplicative,
$q|_A = \prod_v p_v|_A$. By the observation above, we can replace $A$ by $A \cap T_v$.
\end{proof}

In the same way, we can prove:

\begin{lemma}\label{lem:restrict:2}
Let $u$ be an addition gate in the third layer computing a polynomial $q$.
Then
\[
  q|_A = \sum_{v} p_v|_{A \cap T_v}
\]
where the sum is taken over all children $v$ of $u$ in the fourth layer and $p_v$ is the polynomial
computed at $v$. \qed
\end{lemma}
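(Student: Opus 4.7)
The plan is to mirror the proof of \cref{lem:restrict:1} almost verbatim, with multiplication replaced by addition. First I would write $q = \sum_v p_v$, where the sum runs over the children $v$ of $u$ in the fourth layer; this is just the defining relation for the addition gate $u$. Then I would apply the previously observed additivity of the operator $|_A$, which gives $q|_A = \sum_v p_v|_A$ directly.

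Next, I would invoke the observation preceding \cref{lem:restrict:1}, namely that $p_v|_A = p_v|_{A \cap T_v}$ for each $v$, where $T_v$ is the support of $p_v$. This is legitimate because monomials of $p_v$ involve only variables in $T_v$, so setting variables outside $A$ to $0$ has the same effect as setting the variables outside $A \cap T_v$ to $0$. Substituting this into the previous sum yields the claimed identity.

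The proof involves no essential obstacle: the only point worth flagging is that we really do need the additive version of the substitution observation (the operator $|_A$ being additive), but this has already been recorded as an observation and is immediate from the fact that $|_A$ is induced by the ring homomorphism that sends the variables outside $A$ to $0$. Hence the entire argument is a one-line calculation
\[
  q|_A \;=\; \Bigl(\sum_v p_v\Bigr)\Big|_A \;=\; \sum_v p_v|_A \;=\; \sum_v p_v|_{A \cap T_v},
\]
and the lemma follows.
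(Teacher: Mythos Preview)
Your proposal is correct and matches the paper's intended argument exactly: the paper omits the proof, stating only ``In the same way, we can prove'' after \cref{lem:restrict:1}, and your one-line calculation using additivity of $|_A$ together with the observation $p_v|_A = p_v|_{A \cap T_v}$ is precisely that same way.
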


Let $u$ be a multiplication gate in the second layer. There is only a constant number of
such gates, therefore, it will turn out that we can treat them separately. We want to
write the polynomial $q$ computed at $u$ as a bounded sum over a Boolean-arithmetic formula.
Let $v_1,\dots,v_s$ and $w_1,\dots,w_t$ be the
children of $u$, where $v_1,\dots,v_s$ are unbounded fan-in multiplication gates and
$w_1,\dots,w_t$ are unbounded fan-in addition gates. The polynomial $q$ computed at $u$ is
\[
   q = \prod_{i = 1}^s \prod_{\text{$c$ child of $v_i$}} p_c \cdot \prod_{i = 1}^t \sum_{\text{$c$ child of $w_i$}} p_c.
\]
By \cref{lem:restrict:1,lem:restrict:2}, we have
\begin{equation} \label{eq:BA}
   q|_A = \prod_{i = 1}^s \, \prod_{\text{$c$ child of $v_i$}} p_c|_{A \cap T_c} \cdot \prod_{i = 1}^t \, \sum_{\text{$c$ child of $w_i$}} p_c|_{A \cap T_c}.
\end{equation}

When we write $q|_A$ as a bounded sum over a Boolean-arithmetic expression, we write it essentially as
a bounded sum over a product. This means that from the first outer product of the expression in (\ref{eq:BA}), every
factor can potentially contribute and from the second outer product, we choose one summand from
each sum and sum over all possibilities. There is one complication: Since the sum is bounded,
we are only allowed to set a number of Boolean variables to $1$ that is bounded by some
function in $k \coloneqq \lvert A\rvert$. For instance, $\prod_{i = 1}^n X_1$ has support size $1$, however, the degree is unbounded.

We will have $n$ Boolean variables $y_1,\dots,y_n$. Setting such a variable $y_\nu$  to $1$ indicates
that the algebraic variable $X_\nu$ is in the support $A$ that we are currently considering. In the following $a$
will be an Boolean assignment to the $y_1,\dots,y_n$ such that assigns the value $1$ to exactly $k$ variables.
We will extend this assignment to further variables. Once the values to the $y_\nu$ are fixed, there will only
very few ways to extend the assignment and we will have full control over it.

In our construction, we will treat the gates $v_i$ and $w_j$ differently. From each $w_j$ (addition gate), only one child can contribute at a time.
For each $v_i$ (multiplication gate), all children can contribute simultaneously.
Let $c$ be a child of one of the $v_i$.
For every subset $B$ of $T_c$, we will have a Boolean variable $x_B$.
If a set $B$ is a subset of more than one $T_c$, there will only be one variable $x_B$.
For every $x_B$, we will have a Boolean expression that sets $x_B$ to $1$ iff
$B \subseteq \{X_{i} \mid a(y_i) = 1, 1 \le i \le n\}$. We can easily express this by a Boolean expression
of weft 1:
\begin{equation}
   \cB_1 = \bigwedge_B  \Bigl(x_B \leftrightarrow \bigwedge_{i:\, X_i \in B} y_i  \Bigr).
\end{equation}
Here the first AND is over all $B$ that appear in some $T_c$. Note that the inner
AND gate is bounded. The number of variables that will be set to $1$ by this
expression is at most $2^k$.

For every $w_j$ and every child $c$ of $w_j$, we have a ``switch'' variable $s_{j,c}$.
For each $j$, exactly one of them shall be set to $1$. This corresponds to choosing
the corresponding summand $c$. Let $\cB_2$ be a weft 1 Boolean expression
realising this, that is
\[
   \cB_2 = \bigwedge_{j = 1}^t \Bigl( \Bigl[ \bigvee_{\text{$c$ child of $w_j$}} s_{j,c} \Bigr]
            \wedge \neg \Bigl[ \bigwedge_{\text{$c \not= c'$ children of $w_j$}} (s_{j,c} \wedge s_{j,c'}) \Bigr]  \Bigr).
\]

Now let $c$ be a child of one of the $w_j$. For each such $c$ and every $B \subseteq T_c$,
we have a variable $z_{c,B}$. For every $z_{c,B}$ there will be a Boolean expression $\cB_{3,c,B}$
that sets $z_{c,B}$ to $1$ if $s_{j,c}$ is set to $1$ and $B \subseteq \{X_{i} \mid a(y_i) = 1, \enspace 1 \le i \le n\}$.
Note that $\cB_{3,c,B}$ has constant size, since $T_c$ has, and therefore
\[
  \cB_3 = \bigwedge_{j = 1}^t \bigwedge_{\text{$c$ child of $w_j$}} \bigwedge_{B \subseteq T_c} \cB_{3,c,B}
\]
has weft $1$, since the AND gate in the middle is the only one that has unbounded fan-in.
Furthermore, note that we can afford to have a separate variable $z_{c,B}$ for every $c$,
since in a summation gate, only one child is selected and therefore, we will only
set a number of variables to $1$ that is bounded by $2^{k}$.

\begin{lemma}\label{lem:moebius}
For any polynomial $p$ and any set $B$, let
\[
   Q_B = \prod_{C \subseteq B} (p|_C)^{{(-1)}^{\lvert B\rvert - \lvert C\rvert }},
\]
where we assume that all $p|_C$ are nonzero.
(The factors are either $p|_C$ or $(p|_C)^{-1}$, depending on the parity of $\lvert B\rvert - \lvert C\rvert$.)
Then for any $A$,
\[
   \prod_{B \subseteq A} Q_B = p|_A.
\]
\end{lemma}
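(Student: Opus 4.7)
The plan is to prove the identity by a standard M\"obius-style inversion, carried out multiplicatively, on the Boolean lattice of subsets of $A$. I will expand, interchange the order of the two nested products, and evaluate the resulting exponent of each factor $p|_C$.

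First I would substitute the definition of $Q_B$ into the left hand side and use commutativity of multiplication (together with the assumption that all $p|_C$ are nonzero, so that negative exponents make sense) to rewrite
\[
   \prod_{B \subseteq A} Q_B
   = \prod_{B \subseteq A} \prod_{C \subseteq B} (p|_C)^{(-1)^{\lvert B\rvert - \lvert C\rvert}}
   = \prod_{C \subseteq A} (p|_C)^{\alpha(C,A)},
\]
where $\alpha(C,A) = \sum_{B:\, C \subseteq B \subseteq A} (-1)^{\lvert B\rvert - \lvert C\rvert}$.

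Next I would evaluate $\alpha(C,A)$. Setting $m = \lvert A \setminus C\rvert$ and reindexing the sum by $i = \lvert B \setminus C\rvert$, one obtains
\[
   \alpha(C,A) = \sum_{i = 0}^{m} \binom{m}{i} (-1)^i = (1-1)^m,
\]
which equals $0$ whenever $m > 0$ and equals $1$ when $m = 0$, i.e.\ when $C = A$. Plugging this back, every factor $(p|_C)^{\alpha(C,A)}$ with $C \subsetneq A$ collapses to $1$, and the only surviving factor is $(p|_A)^1 = p|_A$, giving the desired equality.

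The argument is essentially routine M\"obius inversion on the subset lattice, so there is no real obstacle; the only point that deserves care is the nonzero assumption on the $p|_C$, which ensures the inverse powers are well defined in the field of rational functions and that the interchange of products is legitimate. This is exactly the hypothesis built into the statement of the lemma.
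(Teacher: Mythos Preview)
Your proof is correct and follows essentially the same approach as the paper: both expand the double product, collect the total exponent of each $p|_C$, and show it vanishes for $C \neq A$ via the alternating sum of binomial coefficients. Your use of $(1-1)^m$ is slightly slicker than the paper's odd/even case split, but the argument is the same M\"obius inversion on the subset lattice.
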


\begin{proof}
We can write
\[
   \prod_{B \subseteq A} Q_B = \prod_{B \subseteq A} \prod_{C \subseteq B} (p|_C)^{{(-1)}^{\lvert B\rvert - \lvert C\rvert}}
\]
Fix a particular $C \not= A$. Let $m = \lvert A\rvert  - \lvert C\rvert$.
First assume that $m$ is odd. $p|_C$ appears $\sum_{i = 0}^{(m-1)/2} \binom{m}{2i}$ times in  $\prod_{B \subseteq A} Q_B$
and $(p|_C)^{-1}$ appears $\sum_{i = 0}^{(m-1)/2} \binom{m}{2i + 1}$ times. Both contributions cancel.
If $m$ is even, then $p|_C$ appear $\sum_{i = 0}^{m/2} \binom{m}{2i}$ times in  $\prod_{B \subseteq A} Q_B$
and $(p|_C)^{-1}$ appears $\sum_{i = 0}^{m/2 - 1} \binom{m}{2i + 1}$ times. Again both contributions cancel.
Only $p|_A$ remains.
\end{proof}

\begin{remark}\label{rem:moebius}
If some $p|_C$ is zero in the lemma above, then the lemma is still true when we omit every
$p|_C$ that is $0$ from every $Q_B$.
The only exception is when $p|_A = 0$. Then the expression will be $1$.
\end{remark}

We want to use expressions as in \cref{lem:moebius} to design our Boolean arithmetic formula.
However, these expressions give the wrong result when $p|_A$ is zero.\footnote{Note that for a polynomial,
if $p|_B \not= 0$, then $p|_{B'} \not= 0$ for all $B' \supseteq B$. But we will later consider
the case when instead of $p|_B$, we will consider $p|_B$ evaluated at some point. Therefore, we will not make
use out of this property.}
We will have a Boolean variable $w$
that is set to $1$ when for some child $c$ of some $v_i$, $p_c|_A$ is $0$. This is achieved
by the following expression:
\[
   \cB_4 = \left(\bigvee_{i = 1}^s \bigvee_{\text{$c$ child of $v_i$}} \bigvee_{\substack{B \subseteq T_c\\  {p_c|_B = 0}}} (\bigwedge_{i \in B} y_i \wedge \bigwedge_{i \in T_c \setminus B} \neg y_i) \right) \leftrightarrow w.
\]
$\cB_4$ has weft $1$, since all gates have constant fan-in except the second one. Note that we can precompute whether $p_c|_B = 0$.

Let $a$ be some assignment that sets $k$ variables $y_1,\dots,y_n$ to $1$, defining some set $A$
of size $k$. We extend $a$ to an assignment to the other variables
such that $\cB_1 \wedge \cB_2 \wedge \cB_3 \wedge \cB_4$ is satisfied. The number of variables
set to $1$ is bounded by a function in $k$.
Note that the only degree of
freedom we have is to choose which $s_{j,c}$ we set to $1$. Once this choice has been made,
there is only one way to extend the partial assignment to a satisfying one.

For some child $c$ of some $v_i$ and $B \subseteq T_c$, let
\[
   Q_{c,B} = \prod_{C \subseteq B} (p_c|_C)^{(-1)^{\lvert B\rvert - \lvert C\rvert}}.
\]
If some $p_c|_C$ are zero, we simply leave them out. Consider the first outer product of (\ref{eq:BA}).
We claim that
\begin{equation} \label{eq:ba:new:1}
    \prod_{i = 1}^s \, \prod_{\text{$c$ child of $v_i$}} p_c|_{A \cap T_c}
      = (1 - a(w)) \prod_{\substack{B: \, \text{$\exists c$ with}\\ \text{$B \subseteq T_c$}} }
        \Bigl( a(x_B) \Bigl[ \prod_{i = 1}^s \prod_{\substack{\text{$c$ child of $v_i$}\\ {B \subseteq T_c}}} Q_{c,B} \Bigr] + 1 - a(x_B) \Bigr),
\end{equation}
where the outer product on the right-hand side is over all $B$ such that there is a child $c$ of some $v_i$ and $B \subseteq T_c$.
To see this, note that when all $p_c|_A = p_c|_{T_c \cap A} \not= 0$, then
\begin{align*}
 \prod_{\substack{B: \, \text{$\exists c$ with}\\ \text{$B \subseteq T_c$}} }
        \Bigl( a(x_B) \Bigl[ \prod_{i = 1}^s \prod_{\substack{\text{$c$ child of $v_i$}\\ {B \subseteq T_c}}} Q_{c,B} \Bigr] + 1 - a(x_B) \Bigr)
    & = \prod_{\substack{B: \, \text{$\exists c$ with}\\ \text{$B \subseteq T_c$ and $B \subseteq A$}} }
             \prod_{i = 1}^s \prod_{\substack{\text{$c$ child of $v_i$}\\ {B \subseteq T_c}}} Q_{c,B}  \\
    & = \prod_{i = 1}^s \prod_{\text{$c$ child of $v_i$}} \prod_{B \subseteq A \cap T_c} Q_{c,B} \\
    & = \prod_{i = 1}^s \prod_{\text{$c$ child of $v_i$}} p_c|_{A \cap T_c},
\end{align*}
where the last equality follows
by \cref{lem:moebius} and the remark after the lemma. Recall that $a(x_B)$ is true if $a$ sets all variables from $y_\nu$ to $1$ for which $X_\nu \in B$.
When $p_c|_A = p_c|_{T_c \cap A} = 0$
for some $c$, then the whole product shall be zero. This is achieved by the first factor $(1 - a(w))$.

We write this rather complicated expression to get a Boolean-arithmetic formula.
Since the outer product in (\ref{eq:ba:new:1}), we do not take product over all $B$, but only
over subsets of all $T_{c}$, so the product has polynomial size, since each $T_{c}$ has constant size.

For the addition gates, we can do something similar. The construction is even easier. We can write
\begin{equation} \label{eq:ba:new:2}
   \prod_{i = 1}^t \, \sum_{\text{$c$ child of $w_i$}} p_c|_{A \cap T_c}
      = (1 - a(w')) \prod_{i = 1}^t \prod_{\text{$c$ child of $w_i$}} \prod_{B \subseteq T_c} \Bigl( a(z_{c,B}) Q_{c,B} +
                      1 - a(z_{c,B}) \Bigr).
\end{equation}
Here, the product is not taken over all $B$ but again only over subsets of $T_c$. $w'$ is again a new variable
that checks whether one of the chosen $p_c|_A$ is zero. The corresponding Boolean expression is given by
\[
   \cB_5 = \left(\bigvee_{j=1}^t \bigvee_{\text{$c$ child of $w_j$}} \bigvee_{\substack{B \subseteq T_c\\ {p_c|_B = 0}}} \Bigl[ z_{c,b} \wedge (\bigwedge_{i \in B} y_i \wedge \bigwedge_{i \in T_c \setminus B} \neg y_i) \Bigr] \right) \leftrightarrow w'.
\]

Now we can write
\begin{align}
 \sum_{\substack{A \subseteq \{X_1,\dots,X_n\}\\ {\lvert A\rvert = k}}} q|_A & = \sum_a \cB(a) (1 - a(w))
    \prod_B \Bigl( a(x_B) \prod_{i = 1}^s \prod_{\text{$c$ child of $v_i$}}  a(x_B) Q_{c,B} + 1 - a(x_B) \Bigr)  \notag \\
  & \qquad \qquad \quad (1-a(w')) \prod_{i = 1}^t \prod_{\text{$c$ child of $w_i$}} \prod_{B \subseteq T_c} \Bigl( a(z_{c,B}) Q_{c,B} + 1 - a(z_{c,B}) \Bigr). \label{eq:lalala}
\end{align}
Here the sum is taken over all $a$ that set $k$ of the $y_i$ to $1$ and its extension satisfies the expression $\cB \coloneqq \cB_1 \wedge \dots \wedge \cB_5$.
Every such assignment sets at most $q(k)$ other variables to $1$ for some function $q$.
To make it the same number for all assignments $a$,
we add $q(k)$ dummy variables $d_1,\dots,d_{q(k)}$, which can always be set to $1$
(but which do not affect the formula $\cB$). However, there should be only one unique way to
extend a satisfying assignment of $\cB$ to a satisfying assignment with exactly $q(k)$ $1$s.
Therefore, we add the expression $\bigwedge_{i = 1}^{q(k)} (d_i \ge d_{i+1})$ to $\cB$,
which has weft 1. Now the sum over all $a$ that set $k$ variables from $y_1,\dots,y_n$ to $1$
and $q(k)$ other variables
is a doubly bounded sum and we can combine it into one by \cref{lem:w1:4}.

Finally, we can write $f|_A$ as a constant sum over constantly many $q|_A$.
By \cref{lem:spcfA}, we can write $\spc k (f)$ as a bounded sum over $f|_A$.
By applying the lemma below for a bounded number of times (in $k$), we obtain \cref{thm:main:1:yyyy}.

\begin{lemma}\label{lem:sum:ba}
Let $f$ and $g$ be computed by bounded sums over Boolean-arithmetic formulas of weft 1, that is,
\[
   f = \sum_{d \in \ones m k} B(d_1,\dots,d_m)  \prod_{i = 1}^m (R_i d_i + 1 - d_i) \quad \text{and} \quad
   g = \sum_{e \in \ones n k} C(e_1,\dots,e_n)  \prod_{j = 1}^n (M_j e_j + 1 - e_j).
\]
Then $f + g$ can also be expressed by a bounded sum over a Boolean-arithmetic formula of weft 1.
\end{lemma}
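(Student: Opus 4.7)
I would express $f+g$ as a single bounded sum over a weft-$1$ Boolean-arithmetic formula on $m+n$ disjoint variables by using a Boolean selector that routes each summation vector to either the $f$-summand or the $g$-summand depending on whether its ones land in the ``$f$-block'' or the ``$g$-block''. The (trivial) case $k=0$ is handled separately: both $f$ and $g$ reduce to the constants $B(\mathbf{0})$ and $C(\mathbf{0})$, so $f+g$ is itself a constant Boolean-arithmetic formula. Henceforth assume $k\ge 1$.

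\textbf{Construction.} Take fresh variables $Y_1,\ldots,Y_m,Z_1,\ldots,Z_n$ and sum over $v\in\ones{m+n}{k}$. Define the auxiliary subformulas
\[
\alpha_Y \coloneqq \bigwedge_{j=1}^n \neg Z_j, \qquad \alpha_Z \coloneqq \bigwedge_{i=1}^m \neg Y_i,
\]
each of weft $1$. Because $k\ge 1$, any $v\in\ones{m+n}{k}$ has at least one variable set to $1$, so $\alpha_Y$ and $\alpha_Z$ cannot both be true on $v$: $\alpha_Y(v)=1$ forces all $k$ ones into the $Y$-block, and $\alpha_Z(v)=1$ forces them all into the $Z$-block. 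I would then take the Boolean formula
\[
\tilde B_{\text{Bool}} \coloneqq \bigl(\alpha_Y \wedge B_{\text{Bool}}(Y)\bigr) \vee \bigl(\alpha_Z \wedge C_{\text{Bool}}(Z)\bigr),
\]
whose standard arithmetization $\tilde B$, obtained by combining the weft-$1$ arithmetized subformulas $\alpha_Y,\alpha_Z,B,C$ by bounded-fan-in multiplications and additions on top, is again of weft $1$. Finally set $R'_i \coloneqq R_i$ for $1\le i\le m$ and $R'_{m+j}\coloneqq M_j$ for $1\le j\le n$.

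\textbf{Verification.} The claim is
\[
f+g \;=\; \sum_{v\in\ones{m+n}{k}} \tilde B(v)\cdot\prod_{i=1}^{m+n}\bigl(R'_i\, v_i + 1 - v_i\bigr),
\]
which I would verify by a case analysis on $v$. If all $k$ ones of $v$ lie in the $Y$-block, then $\alpha_Y(v)=1$, $\alpha_Z(v)=0$, the mutual exclusivity kills the $\alpha_Y\alpha_Z BC$ cross-term from the arithmetized OR, and every $Z$-factor $(M_j\cdot 0 + 1)$ is $1$, so the summand collapses to $B(v|_Y)\prod_{i=1}^m(R_i v_i + 1 - v_i)$; these contributions reassemble into $f$. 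Symmetric reasoning for vectors with all ones in the $Z$-block produces $g$. For any ``mixed'' $v$, both indicators vanish, so $\tilde B(v)=0$. The right-hand side is, by construction, a bounded sum of width $k$ over a Boolean-arithmetic formula of weft $1$.

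\textbf{Main obstacle.} There is no deep difficulty here; the whole content of the proof is picking indicator expressions that are simultaneously mutually exclusive on the summation domain and of weft at most $1$. The only subtlety is the degenerate case $k=0$, where $\alpha_Y$ and $\alpha_Z$ are both true on the all-zeros vector and the standard arithmetization of the OR would introduce a spurious $-\alpha_Y B\alpha_Z C$ cross-term; treating $k=0$ separately, as above, sidesteps this entirely.
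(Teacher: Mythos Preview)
Your argument is correct. It follows the same underlying idea as the paper---route between the $f$-contribution and the $g$-contribution with a Boolean selector that forces the ``inactive'' block of variables to be all zero, so that the corresponding factors of the arithmetic product collapse to $1$---but it does so more directly. The paper introduces an explicit switch bit $s\in\{0,1\}$, writes $f+g$ as a \emph{triple} sum $\sum_{d}\sum_{e}\sum_{s}$ over the $d$-block, the $e$-block, and the switch, and then appeals to the composition machinery of \cref{lem:w1:4} to fold these back into a single bounded sum. You instead sum once over $v\in\ones{m+n}{k}$ and let the mutual exclusivity of $\alpha_Y$ and $\alpha_Z$ (automatic for $k\ge 1$) do the work of the switch. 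This is cleaner in two ways: it avoids invoking the merging lemma at all, and it keeps the weight parameter at exactly $k$ rather than inflating it through the bipartite encoding of \cref{lem:w1:4}. Your separate treatment of $k=0$ is also a nice touch; the paper's version, as written, does not isolate this case.
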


\begin{proof}
We can write
\begin{align*}
    f + g & = \sum_{d \in \ones nk} \sum_{e \in \ones m{k}} \sum_{s \in \{0,1\}}
               \left( s \wedge \left(B(d_1,\dots,d_m) \wedge \bigwedge_{j = 1}^n \neg e_j\right) \vee \bar s \wedge \left(\bigwedge_{i = 1}^m \neg d_i \wedge C(e_1,\dots,e_n)\right)\right)
               \\
     & \qquad \quad \cdot \prod_{i = 1}^m (R_i d_i + 1 - d_i) \prod_{j = 1}^n (M_j e_j + 1 - e_j).
\end{align*}
The Boolean expression is only true if the $d_i$ or $e_i$ are all $0$. In this case, the corresponding algebraic product is $1$.
The switch $s$ selects which expression currently contributes.
Now we can merge the three sums into one by \cref{lem:w1:4}.
\end{proof}

\begin{theorem}\label{thm:main:1:yyyy}
Let $C$ be a circuit of bounded depth and weft~1 computing a polynomial $f$ in $n$ variables.
Then there is a Boolean formula $\cB$ of weft 1 and fpt size (in $k$ and the size of $C$) having $p(n,k)$
inputs for some fpt-bounded function $p$ and rational functions $R_i$ such that for all $k$,
\begin{equation} \label{eq:last1}
   \spc k(f) = \sum_{e \in \ones {p(n,k)}{q(k)}}
               \cB (e) \cdot \prod_{i = 1}^{p(n,k)} (R_i e_i + 1 - e_i)
\end{equation}
for some function $q$.
\end{theorem}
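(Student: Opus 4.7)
The plan is to assemble the machinery developed earlier in this subsection into a single construction. First I would apply \cref{lem:struct:2} to replace $C$ by an equivalent weft~1 formula $F$ of fpt size in the five-layer normal form. Because the top gate is a bounded-fan-in addition gate and the next layer consists of bounded-fan-in multiplication gates, we may write $f = \sum_{u} \alpha_u \cdot q_u$, where the sum ranges over the constantly many second-layer multiplication gates $u$ (with constant multipliers $\alpha_u$ coming from the top two bounded layers). Since $|_A$ is additive, this reduces the task to expressing $\sum_{|A|=\ell} q_u|_A$ as a bounded sum over a weft-$1$ Boolean-arithmetic formula, for a constant number of $u$ and for each $\ell \le k$.

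Next I would fix one second-layer multiplication gate $u$ and invoke (\ref{eq:BA}) together with the identities (\ref{eq:ba:new:1}) and (\ref{eq:ba:new:2}), which were proved from \cref{lem:moebius} and its remark. Interpreting them via the Boolean control expressions $\cB_1,\dots,\cB_5$, we obtain for each satisfying assignment $a$ extending a choice of $k$ indicator variables $y_1,\dots,y_n$ an arithmetic product, namely the product appearing in (\ref{eq:lalala}). Each inner factor is of the shape $a(x) \cdot Q_{c,B} + 1 - a(x)$, i.e.\ exactly of the Boolean-arithmetic form, and the $Q_{c,B}$ are rational functions of the original variables depending only on constant-size subformulas $p_c$ of $F$, hence of constant size. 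Combining this with (\ref{eq:lalala}) gives
\[
  \sum_{\substack{A \subseteq \{X_1,\dots,X_n\}\\ |A|=k}} q_u|_A
  \;=\; \sum_{a} \cB(a)\cdot\prod_{i=1}^{N}\bigl(\tilde R_i\, a_i + 1 - a_i\bigr),
\]
where $\cB = \cB_1\wedge\cdots\wedge\cB_5$ has weft $1$ by construction, and $N$ is polynomial in $n$ and $k$.

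The sum on the right hand side is almost of the form claimed in (\ref{eq:last1}), except that different satisfying assignments set different numbers of variables to $1$. I would repair this exactly as described after (\ref{eq:lalala}): introduce $q(k)$ dummy variables $d_1,\dots,d_{q(k)}$ forced into a unique pattern by the weft-$1$ clause $\bigwedge_i (d_i\ge d_{i+1})$, so every satisfying assignment now has exactly $k + q(k) + \text{const}$ ones. The resulting expression is the composition of two bounded sums (over the $y_i$ and over the auxiliary/dummy variables), and \cref{lem:w1:4} collapses these into a single bounded summation with $q(k)$ total ones, while preserving weft $1$.

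Finally, to compute $\spc k(f)$ itself, I would use \cref{lem:spcfA} to write it as a constant-coefficient linear combination of $O(n^k)$ terms $f|_A$ with $|A| \le k$—but each of these contributions $\sum_{|A|=\ell} q_u|_A$ has just been expressed as a bounded weft-$1$ sum over a Boolean-arithmetic formula, and the outer $(-1)^{k-\ell}\binom{n-\ell}{k-\ell}$ factors can be absorbed as constants $\tilde R_i$. Iterating \cref{lem:sum:ba} a number of times bounded by $k$ and the constant number of gates $u$ merges all these pieces into a single Boolean-arithmetic formula of weft $1$ and fpt size, yielding (\ref{eq:last1}). The main obstacle I expect is the bookkeeping for equation (\ref{eq:ba:new:1}): one must verify that summing $a$ with the constraints of $\cB_1,\dots,\cB_4$ really does select exactly the product $\prod_{B\subseteq A\cap T_c} Q_{c,B}$ for each $c$ while suppressing the contributions when some $p_c|_A = 0$ (this is where the $w,w'$ variables in $\cB_4,\cB_5$ are critical), and that all auxiliary expressions have weft at most $1$ so that the final formula, after the \cref{lem:w1:4} merge, remains weft $1$.
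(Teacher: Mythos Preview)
Your proposal is correct and follows essentially the same route as the paper: normalize via \cref{lem:struct:2}, express each $\sum_{|A|=\ell} q_u|_A$ through the identities (\ref{eq:ba:new:1})--(\ref{eq:ba:new:2}) and the weft-$1$ Boolean control $\cB_1\wedge\cdots\wedge\cB_5$ as in (\ref{eq:lalala}), pad with dummy variables and merge via \cref{lem:w1:4}, then combine the finitely many pieces coming from \cref{lem:spcfA} and the constantly many gates $u$ by iterating \cref{lem:sum:ba}. Your closing remark about the bookkeeping around $\cB_4,\cB_5$ and the weft bound is exactly the delicate point the paper also flags.
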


\begin{corollary}\label{cor:main:1:yyyy}
Let $C$ be a circuit of bounded depth and weft~1  computing a polynomial $f$ in $n$ variables.
Then there is a Boolean formula $\cB$ of weft 1 and fpt size (in $k$ and the size of $C$) having $p(n,k)$
inputs for some fpt-bounded function $p$ and rational functions $R_i$ with $R_i(1,\dots,1) \not= 0$
such that for all $k$,
\begin{equation} \label{eq:last2}
   \spc k(f)(1,\dots,1) = \sum_{e \in \ones {p(n,k)}{q(k)}}
               \cB (e) \cdot \prod_{i = 1}^{p(n,k)} (R_i(1,\dots,1) e_i + 1 - e_i)
\end{equation}
for some function $q$.
\end{corollary}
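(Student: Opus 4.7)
The plan is to apply \cref{thm:main:1:yyyy} to $f$ to obtain identity (\ref{eq:last1}) and then carefully evaluate it at $X = (1,\dots,1)$. Two issues arise: (i) a rational function $R_i$ produced by the theorem's proof may vanish or have a pole at $(1,\dots,1)$ because some restriction polynomial $p_c|_C$ used to form $Q_{c,B}$ vanishes there; (ii) the construction contains ``kill'' factors $(1 - a(w))$ and $(1 - a(w'))$ in (\ref{eq:ba:new:1}) and (\ref{eq:ba:new:2}) whose role in the form $(Re + 1 - e)$ would force $R = 0$, contradicting the nonvanishing requirement.

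For issue~(i), I would strengthen \cref{rem:moebius}. The combinatorial cancellation in \cref{lem:moebius} that makes $\prod_{B\subseteq A} Q_{c,B}$ equal $p_c|_A$ depends only on the identity $\sum_{B: C \subseteq B \subseteq A} (-1)^{|B|-|C|} = [C = A]$, not on whether any omitted $p_c|_C$ is identically zero. Therefore one may uniformly omit the whole set $\mathcal Z = \{(c,C) : p_c|_C(1,\dots,1) = 0\}$ from every $Q_{c,B}$, producing modified factors $\hat Q_{c,B}$ whose product still equals $p_c|_A$ when $(c,A)\notin\mathcal Z$ and equals $1$ otherwise. Correspondingly, $\cB_4$ and $\cB_5$ from the proof of \cref{thm:main:1:yyyy} are replaced by variants that test the precomputable Boolean constants ``$p_c|_B(1,\dots,1) = 0$'' in place of ``$p_c|_B \equiv 0$''; this preserves the weft and the fpt size of $\cB$. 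By construction each $\hat Q_{c,B}(1,\dots,1)$ is a product of nonzero values, so each resulting $R_i$ (a product of such $\hat Q_{c,B}$) satisfies $R_i(1,\dots,1) \neq 0$.

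For issue~(ii), I would simply absorb the factors $(1 - a(w))$ and $(1 - a(w'))$ into the Boolean formula $\cB$ by adding the conjuncts $\neg w \wedge \neg w'$. The resulting $\cB$ keeps weft $1$ and fpt size, and the two offending factors disappear from the product.

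The main obstacle is verifying that after both modifications the pointwise identity (\ref{eq:last2}) still evaluates to $\spc k(f)(1,\dots,1)$. For each second-layer multiplication gate with output $q$, when $p_c|_A(1,\dots,1)\neq 0$ the extended M\"obius computation above yields $\prod_B \hat Q_{c,B}(1,\dots,1) = p_c|_A(1,\dots,1)$; when $p_c|_A(1,\dots,1) = 0$, the modified $\cB_4$ forces $w = 1$ so the contribution is killed by the $\neg w$ conjunct of $\cB$, matching the vanishing of $p_c|_A(1,\dots,1)$ on the left. Combining this gate-wise verification via \cref{lem:sum:ba} and the inclusion-exclusion of \cref{lem:spcfA}, exactly as in the proof of \cref{thm:main:1:yyyy}, yields (\ref{eq:last2}).
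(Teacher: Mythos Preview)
Your proposal is correct and follows essentially the same approach as the paper: modify the $Q_{c,B}$ by omitting factors with $p_c|_C(1,\dots,1)=0$ (invoking \cref{rem:moebius}), and replace the tests ``$p_c|_B=0$'' in $\cB_4,\cB_5$ by ``$p_c|_B(1,\dots,1)=0$''. Your additional observation about issue~(ii)---absorbing the factors $(1-a(w))$ and $(1-a(w'))$ into $\cB$ as the conjunct $\neg w\wedge\neg w'$---is a point the paper leaves implicit (it simply treats these Boolean factors as part of $\cB$ rather than as degenerate $R_i=0$ factors), and your explicit handling of it is correct and harmless.
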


\begin{proof}
We have to ensure that all $R_i$ are defined at $(1,\dots,1)$. The $R_i$ are of the form
$\prod_{C \subseteq B} (p_c|_C)^{(-1)^{\lvert B\rvert - \lvert C\rvert}}$. Now they are replaced by
$\prod_{C \subseteq B} (p_c|_C(1,\dots,1))^{(-1)^{\lvert B\rvert - \lvert C\rvert}}$. If one of the
$p_c|_C(1,\dots,1)$ becomes $0$, we can leave it out, as before by \cref{rem:moebius}.
This can only produce a wrong result when $p|_B(1,\dots,1)$ becomes $0$.
We can repair this as before, by an appropriate modifications of the expression $\cB_4$ and $\cB_5$,
namely, we replace the condition $p_c|_B = 0$ in the second OR gate by $p_c|_B(1,\dots,1) = 0$.
\end{proof}

\subsection{$\VW 1$ and $\Cl$}\label{app:cliq}

In this section, we finally prove that $\Cl$ is $\VW 1$-complete.
We first show
that bounded sums over Boolean arithmetic formulas can be written as evaluations
of the clique polynomial.

\begin{lemma}\label{lem:cl:w1}
Let $f$ be computed by a Boolean-arithmetic expression
\begin{equation}\label{eq:cl:0}
   f(X_1,\dots,X_n) = \sum_{e \in \ones {p}{k}} B(e) \prod_{i = 1}^p (R_i e_i + 1 - e_i)
\end{equation}
where $B$ has size $s \ge p$ and weft~1.
Then there is a graph $G$ with adjacency matrix $A$ of size
$a(k)q(s)$ for some p-bounded function $q$ and an arbitrary
function $a$ and some $k' \le b(k)$ for some arbitrary function $b$
such that
\begin{equation}\label{eq:cl:00}
   f(X_1,\dots,X_n) = \Cl_{a(k) q(s),k'}(A,R_1,\dots,R_p,1,\dots,1).
\end{equation}
\end{lemma}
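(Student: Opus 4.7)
The plan is to adapt the classical Downey--Fellows reduction establishing $W[1]$-hardness of $k$-clique from weight-$k$ satisfiability of weft-1 formulas, carrying the algebraic weights $R_i$ as substitutions into the vertex variables of the clique polynomial. We construct a graph $G$ with ``variable vertices'' $v_1,\dots,v_p$ (indexed first, so they receive the substituted values $R_1,\dots,R_p$) plus auxiliary ``gadget vertices'' labeled by $1$, so that $k'$-cliques of $G$ (for $k'$ depending only on $k$) are in weight-preserving bijection with weight-$k$ satisfying assignments of $B$. Each clique then contributes label product $\prod_{i:e_i=1}R_i\cdot 1^{k'-k}=\prod_{i=1}^p(R_ie_i+1-e_i)$, matching the summand of \eqref{eq:cl:0}.

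First I would normalize $B$ using the Boolean analogue of \cref{lem:struct:2}: at polynomial blow-up in $s$, $B$ becomes a conjunction of bounded-width small clauses together with large disjunctions $D_j=\bigvee_\ell c_{j,\ell}$ where each $c_{j,\ell}$ is a bounded-width conjunction of literals over an $O(1)$-sized set $Z_{j,\ell}$. Then I would build $G$: one vertex $v_i$ for each Boolean variable $y_i$ (labeled $R_i$), and for each large disjunction $D_j$ and each partial assignment $\sigma\colon Z_{j,\ell}\to\{0,1\}$ that satisfies some $c_{j,\ell}$ and every small clause wholly contained in $Z_{j,\ell}$, one gadget vertex $u_{j,\ell,\sigma}$ (labeled $1$). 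Edges enforce pairwise consistency: $v_i\sim v_{i'}$ iff no small clause on $\{y_i,y_{i'}\}$ is violated by setting both to $1$; $v_i\sim u_{j,\ell,\sigma}$ iff either $y_i\notin Z_{j,\ell}$ or $\sigma(y_i)=1$; and $u_{j,\ell,\sigma}\sim u_{j',\ell',\sigma'}$ iff $j\ne j'$ and the two partial assignments agree on $Z_{j,\ell}\cap Z_{j',\ell'}$. A $k'$-clique must then consist of $k$ variable vertices encoding a weight-$k$ assignment $e$ together with a canonical gadget witness per large disjunction, and a lex-smallest tie-breaking rule gives a bijection with $\{e:B(e)=1,\ |e|=k\}$. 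Evaluating the clique polynomial at $X_i=R_i$ for $i\le p$ and $X_j=1$ otherwise, summing over $k'$-cliques reproduces $f$.

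The hard part is controlling $k'$: since $B$ after normalization may have $\poly(s)$ large disjunctions, a naive one-gadget-per-disjunction construction inflates $k'$ to $k+\poly(s)$, exceeding the bound $k'\le b(k)$. The standard Downey--Fellows remedy, which I would adopt, consolidates the large-clause certifications into a fixed (function-of-$k$) number of ``super-witness'' gadgets whose content is entirely pinned down by the $k$ selected variables; this works because every $c_{j,\ell}$ depends on only $O(1)$ variables, so any weight-$k$ assignment uniquely determines the canonical witness for every large disjunction. A secondary subtlety is the exactness of the bijection between cliques and assignments (a single $e$ may be witnessed by several $(\ell,\sigma)$ per $D_j$), resolved again by lex-smallest canonicalization inside the gadget set. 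With these in place, $|V(G)|\le p+\poly(s)\le a(k)\,q(s)$ and $k'\le b(k)$, yielding the claimed identity \eqref{eq:cl:00}.
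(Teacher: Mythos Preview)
Your proposal is correct and follows essentially the same approach as the paper: both rely on the classical Downey--Fellows/Flum--Grohe reduction from weight-$k$ satisfiability of weft-1 formulas to parameterized clique, and both exploit the key structural property that the Boolean variables $y_1,\dots,y_p$ map injectively to vertices of $G$ in such a way that $y_i=1$ iff the corresponding vertex lies in the clique (what the paper elsewhere calls ``witness projectable''). Labeling these $p$ vertices by $R_1,\dots,R_p$ and all remaining vertices by $1$ then makes the clique label product equal to $\prod_i(R_ie_i+1-e_i)$, exactly as you compute.

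The only difference is presentational. The paper's proof is a two-line black-box invocation of the known $\sharpW 1$-completeness of clique, simply asserting the existence of $G$, the bijection, and the injective map $I$. You instead try to open the black box and reconstruct the reduction explicitly. In doing so you correctly isolate the one nontrivial point---that $k'$ must be bounded by a function of $k$ alone even though the normalized $B$ may have $\poly(s)$ many large gates---but your resolution (``the standard Downey--Fellows remedy \dots\ consolidates the large-clause certifications into a fixed number of super-witness gadgets'') is itself a citation rather than an argument. So in the end you land in the same place as the paper: the bound $k'\le b(k)$ is delegated to the literature. That is fine for the purposes of this lemma; just be aware that your sketch of the gadget graph (one witness vertex per large disjunction) is precisely the naive version that does \emph{not} give $k'\le b(k)$, and the actual fix in the literature goes through an intermediate normal form (e.g.\ antimonotone weighted 2-SAT and then independent set) rather than a direct ``consolidation'' of your gadgets.
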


\begin{proof}
From the $\sharpW 1$-completeness proof~\cite{DBLP:series/txtcs/FlumG06, DBLP:series/txcs/DowneyF13}
of the parameterized clique problem, it follows
that there is a graph $G$ with adjacency matrix $A$ such that the satisfying assignments
of $B$ with $k$ $1$s stand in one-to-one correspondence with cliques of $G$ of size $k'$.
Furthermore, there is an injective mapping $I$ from the variables $x_1,\dots,x_p$ of the Boolean formula $B$ to the nodes
of $G$ such that whenever a variable $x_i$ is set to $1$ by a satisfying assignment $e$,
then $I(x_i)$ is contained in the clique corresponding to $e$. We give this node the weight $R_i$. We order
the nodes of $G$ such that these nodes are the first $p$ nodes. All other nodes will be labeled with $1$.
It follows that (\ref{eq:cl:0}) can be written as (\ref{eq:cl:00}), where we substitute the values of $A$ for the edge
variables and $R_1,\dots,R_p,1,\dots,1$ for the node variables.
\end{proof}

\begin{lemma}\label{lem:cl:spc}
Let $g \in K[Y_1,\dots,Y_m]$ and let
\[
   g_k = \sum_{e \in \ones m k} g(e).
\]
Then
\[
  g_k = \sum_{\ell = 0}^k \binom {m - \ell}{k - \ell} \spc \ell(g)(1,\dots,1).
\]
\end{lemma}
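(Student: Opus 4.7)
The plan is a direct counting/bookkeeping argument grouping monomials of $g$ by their support size. The key point is that when we evaluate any monomial at a $\{0,1\}$-vector, the exponents collapse, so only the support matters.

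First I would observe that for any $e \in \{0,1\}^m$ and any $a \ge 1$ we have $e_i^a = e_i$. Hence for any monomial $c\cdot Y_{i_1}^{a_1}\cdots Y_{i_\ell}^{a_\ell}$ of $g$ with support $S = \{i_1,\dots,i_\ell\}$, its evaluation at $e$ equals $c$ if $S \subseteq \supp(e) := \{i : e_i = 1\}$ and $0$ otherwise. Writing $g = \sum_\mu c_\mu\, Y^\mu$ as a sum of monomials, this gives
\[
g(e) \;=\; \sum_{\mu} c_\mu \cdot [\supp(Y^\mu) \subseteq \supp(e)].
\]

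Next I would sum over $e \in \ones{m}{k}$. For a fixed monomial with support $S$ of size $\ell$, the number of vectors $e \in \ones{m}{k}$ with $S \subseteq \supp(e)$ equals the number of ways to choose the remaining $k-\ell$ ones among the other $m-\ell$ coordinates, which is $\binom{m-\ell}{k-\ell}$ (this is zero when $\ell > k$). Interchanging the two summations yields
\[
g_k \;=\; \sum_{e \in \ones{m}{k}} g(e) \;=\; \sum_{\mu} c_\mu \binom{m-\lvert \supp(Y^\mu)\rvert}{k - \lvert \supp(Y^\mu)\rvert}.
\]

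Finally I would group the monomials by their support size $\ell$. By the very definition of $\spc{\ell}(g)$, the sum of the coefficients of all monomials of $g$ of support size $\ell$ equals $\spc{\ell}(g)(1,\dots,1)$, since each such monomial evaluates to its own coefficient at the all-ones input. Grouping yields
\[
g_k \;=\; \sum_{\ell \ge 0} \binom{m-\ell}{k-\ell} \spc{\ell}(g)(1,\dots,1),
\]
and since $\binom{m-\ell}{k-\ell}=0$ for $\ell > k$ the sum truncates at $\ell = k$, giving the claimed identity. There is no real obstacle here: the only mildly delicate point is to note that $g$ need not be multilinear, but this is harmless since $e_i^{a}=e_i$ on $\{0,1\}$, so only the support (not the full exponent vector) of each monomial matters.
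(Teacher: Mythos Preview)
Your proof is correct and follows essentially the same approach as the paper's own argument: both fix a monomial, observe that its value on a $\{0,1\}$-vector is nonzero precisely when the support of the vector contains the support of the monomial, count the number of such vectors in $\ones{m}{k}$ as $\binom{m-\ell}{k-\ell}$, and then group by support size $\ell$. Your write-up is in fact a bit more explicit (e.g.\ the remark that $e_i^a=e_i$ handles the non-multilinear case), but there is no substantive difference.
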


\begin{proof}
Let $\mu$ be a monomial of $g_k$. $\mu(e)$ contributes to $g_k$ if for all $Y_i$ in the support of $\mu$,
$e_i = 1$. If the support size of $\mu$ is $k$, then there is exactly one such vector $e$
and $\spc k(g)(1,\dots,1)$ is the sum over all $e$ of all $\mu(e)$ where $m$ has support size exactly $k$.
More general, if the support size of $\mu$ is $\ell \le k$, then there are $\binom {m - \ell}{k - \ell}$
such vectors.
\end{proof}

\begin{theorem}\label{thm:cl:w1}
$\Cl_{n,k}$ is complete for $\VW 1$ (under fpt-c-reductions).
\end{theorem}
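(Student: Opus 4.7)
Membership $\Cl \in \VW 1$ is already in hand, by the explicit weft-1 summation formula given right after the definition of $\VW 1$, so I focus on hardness. Starting from an arbitrary $f = (f_{n,k}) \in \VW 1$, I would unfold the definition and the reduction $f \lefs (g_{n,k})$ with
\[
   g_{n,k} = \sum_{e \in \ones{q(n)}{k}} g_n(X_1,\dots,X_{p(n)}, e_1,\dots,e_{q(n)})
\]
for a constant-depth weft-1 polynomial-size circuit computing $g_n$. Since fpt-substitutions compose with fpt-c-reductions (the fpt-sized witness circuits for the substituted polynomials simply prefix the oracle inputs) and $\lefc$ is transitive, it is enough to prove $(g_{n,k}) \lefc \Cl$.

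The key move is to regard $g_n$ as a polynomial in the $Y$-variables over the coefficient ring $K[X_1,\dots,X_{p(n)}]$ and sort the sum by $Y$-support size. \Cref{lem:cl:spc}, applied over this extended ring, yields
\[
   g_{n,k} = \sum_{\ell = 0}^{k} \binom{q(n) - \ell}{k - \ell}\, \spc{\ell}^{Y}(g_n)(X, 1, \dots, 1),
\]
where $\spc{\ell}^{Y}$ extracts the monomials whose $Y$-support has size exactly $\ell$. For each $\ell \le k$ I apply the Boolean-arithmetic machinery of \cref{app:bool} to the weft-1 circuit for $g_n$, performing the support extraction and the evaluation at $Y = (1,\dots,1)$ only in the $Y$-coordinates. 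This produces, via \cref{thm:main:1:yyyy} and \cref{cor:main:1:yyyy}, a weft-1 Boolean formula $\cB_\ell$ of fpt size and rational functions $R_{\ell, i}(X)$ with
\[
   \spc{\ell}^{Y}(g_n)(X, 1, \dots, 1) = \sum_{d \in \ones{p_\ell(n,k)}{q_\ell(k)}} \cB_\ell(d)\,\prod_{i = 1}^{p_\ell(n,k)} \bigl(R_{\ell,i}(X)\, d_i + 1 - d_i\bigr).
\]
\Cref{lem:cl:w1} then packages each such bounded Boolean-arithmetic sum as a single oracle evaluation $\Cl_{N_\ell, k_\ell}(A_\ell, R_{\ell,1}(X),\dots,R_{\ell,p_\ell}(X), 1,\dots,1)$ with $N_\ell$ fpt-bounded and $A_\ell$ a precomputed $\{0,1\}$-matrix. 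Summing the $O(k)$ resulting oracle calls with the scalar prefactors $\binom{q(n)-\ell}{k-\ell}$ yields an fpt-sized circuit with $\Cl$-oracle gates computing $g_{n,k}$; the rational subexpressions are either allowed (the weaker notion mentioned in the outline) or removed using the Strassen trick, giving the desired fpt-c-reduction.

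The main obstacle is the partial-evaluation version of \cref{thm:main:1:yyyy} required in the second step: the theorem is written for a single set of variables, whereas I need the construction carried out in $Y$ only, with the $X$-variables surviving intact inside the coefficients $R_{\ell,i}$. Tracing through \cref{app:bool}, the Boolean expressions $\cB_1,\dots,\cB_5$, the M\"obius identity of \cref{lem:moebius}, and the zero-handling of \cref{rem:moebius} all manipulate the values $p_c|_C$ as abstract ring elements, so the whole argument transports without change from $K$ to $K(X_1,\dots,X_{p(n)})$. The one genuine subtlety is that the vanishing tests ``$p_c|_B(1,\dots,1) = 0$'' baked into $\cB_4$ and $\cB_5$ become identical vanishing conditions in $X$; but these are still decidable from the circuit structure of $g_n$ and only cause a blow-up bounded in terms of the circuit, so the composition of \cref{lem:cl:spc}, the Boolean-arithmetic representation, and \cref{lem:cl:w1} goes through cleanly and yields the reduction.
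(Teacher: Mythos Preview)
Your outline is essentially the paper's own proof: view $g_n$ over $K(X)$, decompose by $Y$-support size via \cref{lem:cl:spc}, feed each piece through the Boolean-arithmetic machinery of \cref{app:bool}, and finish with \cref{lem:cl:w1}. The one structural difference is that you keep the $k+1$ summands separate and invoke $\Cl$ once per $\ell$, whereas the paper first merges them into a single Boolean-arithmetic sum via \cref{lem:sum:ba} and then calls \cref{lem:cl:w1} once; for fpt-c-reductions your variant is equally valid (and slightly simpler), though the paper's single-call version is what yields the stronger \cref{cor:main:1:yyyy}-style statement about rational fpt-substitutions.

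The one place where you are too terse is ``removed using the Strassen trick.'' The paper points out that Strassen's division elimination does not apply verbatim: the third step of Strassen requires computing homogeneous components gate by gate, but here the $\Cl$-oracle gate is a black box fed by the rational $R_{\ell,i}(X)$, so you cannot simply propagate homogeneous parts through it. The paper's fix is to substitute $X_i \mapsto T X_i$, evaluate the oracle expression at fpt-many values of $T$, and recover the homogeneous components of the (polynomial) output by interpolation, after which Strassen's remaining steps go through. You should make this interpolation detour explicit; without it the division-removal step is a genuine gap, not just a routine citation.
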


\begin{proof}
Let $(f_{n,k})$ be a parameterized p-family in $\VW 1$. Let
$(g_n)$ be a p-family,
$g_{n} \in K[X_1,\dots,X_{p(n)},Y_1,\dots,Y_{q(n)}]$,
such that $g_{n}$ is computable by a constant depth unbounded fan-in circuit of weft $1$
and $(f_{n,k})$ is an fpt-substitution of the family $(g_{n,k})$ defined by
\begin{equation} \label{eq:cl:1}
   g_{n,k} = \sum_{e \in \ones{q(n)}{k} } g_{n}(X_1,\dots,X_{p(n)},e_1,\dots,e_{q(n)}),
\end{equation}
that is, there is a p-bounded function $m$, arbitrary functions $r,s$, and a $k' \le s(k)$
such that
\begin{align}
  f_{n,k} & = \sum_{e \in \ones{q(r(k)m(n))}{k'} } g_{r(k)m(n)}(h_1,\dots,h_{p(r(k)m(n))},e_1,\dots,e_{q(r(k)m(n))}) \notag \\
          & = g_{r(k)m(n),k'}(h_1,\dots,h_{p(r(k)m(n))})   \label{eq:cl:2}
\end{align}
for some polynomials $h_1,\dots,h_{p(r(k)m(n))}$ computable
by circuits of size $r(k)m(n)$ and of degree bounded by $r(k)m(n)$.

We now want to apply \cref{cor:main:1:yyyy} to the family $g_{n,k}$.
We interpret the $g_n$ and $g_{n,k}$ as a polynomial in $Y_1,\dots,Y_{q(n)}$ and treat
the other variables as constants (from the field $K(X_1,\dots,X_{p(n)})$).
By \cref{lem:cl:spc}, we can write
\[
   g_{n,k} = \sum_{\ell = 0}^k \binom {q(n) - \ell}{k - \ell} \spc \ell(g)(X_1,\dots,X_{p(n)},1,\dots,1).
\]

By \cref{cor:main:1:yyyy}, there are a Boolean expression $\cB_{n}$ of weft 1, an fpt-bounded
function $\hat p$ and some function $\hat q$ and
rational functions $R_{n,i}$ such that
\[
   \spc {k}(g_{n})(X_1,\dots,X_{p(n)},1,\dots,1)
   = \sum_{\hat e \in \ones {\hat p(n,k)}{\hat q(k)}} \cB_{n}(\hat e) \cdot
     \prod_{i = 1}^{\hat p(n,k)} (R_{n,i}(1,\dots,1) \hat e_i + 1 - \hat e_i).
\]
$R_{n,i}(1,\dots,1)$ is a constant. But here a constant is a
rational function from $K(X_1,\dots,X_{p(n)})$. However such a constant cannot be arbitrary.
The constants appearing in the proof
of \cref{cor:main:1:yyyy} are
computed by circuits of fpt size:
We start with building blocks $p_c|_C(1,\dots,1)$ or $1/p_c|_C(1,\dots,1)$, which are computed by constant size circuits,
and each $L_{n,i}$ might be an unbounded product or sum over them.  Each $p_c|_C(1,\dots,1)$
is a sum of polynomials in $X_1,\dots,X_{p(n)}$. These polynomials however were created when transforming the original
circuit $C$ of bounded depth and weft~1 into the normalized formula $F$. Therefore, they are itself computed
by circuits of constant depth.

Hence, we can write
\begin{equation} \label{eq:cl:3}
g_{n,k}(X_1,\dots,X_{p(n)}) = \sum_{\ell = 0}^k \binom {q(n) - \ell}{k - \ell}
               \sum_{\hat e \in \ones {\hat p(n,k)}{\hat q(k)}}
               \cB_{n}(\hat e) \cdot \prod_{i = 1}^{\hat p(n,k)} (R_{n,i}(X_1,\dots,X_{p(n)}) \hat e_i + 1 - \hat e_i).
\end{equation}
By \cref{lem:sum:ba}, we can replace the multiple sum in (\ref{eq:cl:3}) by
one sum. Note that the parameters in the theorem grow in such a way that the corresponding single sum
still has fpt size when we apply the theorem a number of times that is bounded by some function in $k$.
Therefore, we can interpret (\ref{eq:cl:3}) as a Boolean arithmetic formula
of weft 1.

By \cref{lem:cl:w1}, we can write $g_{n,k}$ as an evaluation of
$\Cl_{N,k'}$, where $N = a(k) \poly(n)$ and $k' \le b(k)$ for two functions $a,b$.
The expression that we plug into $\Cl_{N,k'}$ are $0$, $1$, and
the $R_{n,i}$. The $R_{n,i}$ are rational expressions that are computable by fpt size
circuits (in $n$ and $k$) and have fpt degree (in $n$ and $k$).\footnote{Note that from this together with
(\ref{eq:cl:2}) it already follows that $f_{n,k}$ is an fpt-substitution of
$\Cl$ when we allow that we substitute rational expression.}
Note that although we plug in rational functions into $\Cl_{N,k'}$,
the result is still a polynomial, namely $g_{n,k}$.

Assume for the moment that we were given a circuit computing
$\Cl_{N,k'}$ evaluated at the these rational functions.
Strassen~\cite{Strassen:JRAM73} shows that we can find a circuit without divisions.
His construction consists of the following steps:
\begin{itemize}
\item First we perform a Taylor shift of the variables such that all denominators
have nonzero constant term, that is, are invertible as formal power series.
\item Then we replace the denominators by approximations of high enough degree (still polynomial)
to the corresponding formal power series.
\item We compute with the homogeneous components up to the desired degree.
\item Finally, we revert the Taylor shift from the first step.
\end{itemize}
The difference in our setting is that we do not have a circuit but
we only have a substitution of a clique polynomial. So we cannot compute
in the third step with the homogeneous components. We overcome this difficulty
as follows:
\begin{itemize}
\item After the first two steps, we replace every variable $X_i$ by $TX_i$,
where $T$ is a new variable.
Let $P(T,X_1,\dots,X_n)$ be the resulting polynomial.
We view the polynomial as a univariate polynomial in $T$.
In this way, we group the homogeneous components of our polynomial.
\item Then we plug in enough (that is, fpt many) values for $T$ such that
we can recover the coefficients of $P$ by interpolation. Interpolation can
be done by polynomial sized circuits. In this way, we get
the homogeneous components of $P$ can continue like in Strassen's construction.
\end{itemize}

In this way, we get a circuit of fpt size computing $f_{n,k}$ using $\Cl$-oracle gates
\end{proof}

From the footnote to the proof of the above theorem, the following
corollary immediately follows:

\begin{corollary}
$\Cl$ is complete for $\VW 1$ under fpt-substitutions that are allowed to
substitute rational expressions.
\end{corollary}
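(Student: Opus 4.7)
The plan is to inspect the proof of \cref{thm:cl:w1} and observe that the division-elimination step at the very end is the only reason we needed to pass to fpt-c-reductions; once rational substitutions are permitted, this step becomes unnecessary.

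First, I would recall from equation (\ref{eq:cl:2}) that any $f_{n,k}\in\VW 1$ is already, by the definition of $\VW 1$, an fpt-substitution of the family $(g_{n,k})$ defined in (\ref{eq:cl:1}). By transitivity of $\lefs$ (\cref{lem:s:trans}), it therefore suffices to show that $(g_{n,k})$ is an fpt-substitution of $\Cl$ in which the substituted polynomials are allowed to be rational functions. This reduces the problem to a single family, namely the defining bounded exponential sum over a weft $1$ circuit.

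Next, I would carry out exactly the construction from the first half of the proof of \cref{thm:cl:w1}: apply \cref{lem:cl:spc} to split $g_{n,k}$ into a constant (in $n$) number of $\spc \ell$ summands, apply \cref{cor:main:1:yyyy} to each to rewrite them as bounded sums over Boolean-arithmetic formulas of weft 1 (with rational coefficients $R_{n,i}$), merge these sums into a single bounded sum by repeated use of \cref{lem:sum:ba}, and finally invoke \cref{lem:cl:w1} to express the result as an evaluation $\Cl_{N,k'}(A,R_{n,1},\dots,R_{n,p(n)},1,\dots,1)$ with $N$ p-bounded in $n$, $k'$ bounded by a function of $k$, and each $R_{n,i}$ a rational function with fpt-bounded circuit size and fpt-bounded degree. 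This is precisely the footnote referenced in the proof of \cref{thm:cl:w1}.

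The key point is that the whole ``Strassen's division elimination'' argument at the end of the previous proof exists solely to convert an fpt-substitution by rational expressions into a genuine polynomial-circuit construction (hence an fpt-c-reduction). Since the present corollary relaxes the notion of reduction to allow rational substitutions directly, that step can simply be dropped, and the construction above already witnesses $(g_{n,k})\lefs \Cl$ in the extended sense. Membership $\Cl\in\VW 1$ was shown immediately after the definition of $\VW 1$ via the explicit weft $1$ formula, so $\Cl$ is indeed complete. No new technical obstacle arises; the only thing to be careful about is that the degrees of the substituted rational functions $R_{n,i}$, as well as their circuit sizes, are fpt-bounded in $n$ and $k$, which was already verified during the proof of \cref{thm:cl:w1} when analyzing the origin of the constants $R_{n,i}(1,\dots,1)$ from the building blocks $p_c|_C(1,\dots,1)$.
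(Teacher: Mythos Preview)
Your proposal is correct and follows essentially the same approach as the paper: the paper's proof consists solely of pointing to the footnote in the proof of \cref{thm:cl:w1}, which is precisely the observation you spell out---once $g_{n,k}$ has been expressed as $\Cl_{N,k'}$ evaluated at fpt-bounded rational expressions, the Strassen division-elimination step is unnecessary if rational substitutions are permitted, and combining with (\ref{eq:cl:2}) gives completeness.
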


\section{Hardness of the Permanent and Cycle Covers}\label{sec:permanent}
In this section we will highlight the vast difference between the provable complexity of the following two problems. Having cycle covers where one cycle is of length $k$ and all other cycles are self loops and the complexity of all cycle covers where one cycle is length $k$ and all other cycle covers are of length some fixed constant $c$. As always in this paper, we will look at corresponding polynomials to this problem.

We will extensively use that formulas are closed under extracting of homogeneous components. Let $K$ be a large enough field. Then given a polynomial $f=\sum_{i=1}^d x^i p_i(x_1,\dots,x_n)$ for some variable $x_i$, degree $d$ and polynomials $p_i\in K[x_1,\dots,x_n]$, not including the variable $x$. We can compute $p_i$ for any $i$ with the help of a summation, namely
\[
   p_i=\sum_{i=0}^d \beta_i f(\alpha_i x_1,\dots,\alpha_i x_n,\alpha_i x)\text{ for some constants }\alpha_i,\beta_i.
\]
We define a transfer function between sets on graphs and polynomials. Given a set $S$ of subgraphs of a graph $G$ we define the given polynomial $\mathcal{P}(S)$ to be
\[
   \mathcal{P}(S) = \sum_{G\in S} \prod_{v\in V(G)} X_v \prod_{e\in E(G)} E_e
\]
 where $X,E$ are variables. Notice, that as expected $\mathcal{P}(C_{n,k})$ where $C_{n,k}$ is the set of all $k$-cliques on a complete graph is exactly our previously defined clique polynomial $\Cl_{n,k}$.

\subsection{Hardness of the $k$-permanent}
We are adapting the proof from Curticapean and Marx~\cite{DBLP:conf/focs/CurticapeanM14} to show the hardness of a parameterized permanent. We give most of the details in this section but will skip correctness proofs unless they are necessary for us.
In the following, we will talk as if we were constructing graphs for clarity of presentation but in reality, we take a large enough $n$ and set variables to zero for edges that we did not construct.

Notice that the theorem from~\cite{DBLP:conf/focs/CurticapeanM14} is not enough for us. It is in general unclear how and in which way the cycles transfer in the theorem while we need an explicit fpt-projection. In essence, we can see this as the following: Our polynomial will contain a set of all $k$-matchings which we need to be able to transfer to a set of $k$-cliques with an fpt-projection. Transferring this will not be difficult but certain care has to be taken in the proofs.

\begin{definition}[Definition 2.4]
   Let $\Gamma$ be a set of colors. A \emph{colored graph} is a graph $G$ together with a coloring $c_G\colon V(G)\rightarrow \Gamma$. We call a graph colorful if $c_G$ is bijective. Two colorful graphs $G, G'$ are \emph{color-preserving} isomorphic if there is a isomorphism $\phi$ from $H$ to $H'$ such that it maps every $\gamma$-colored vertex in $H$ to a $\gamma$-colored vertex in $H'$.
\end{definition}

\begin{definition}[Definition 2.5]\mbox{}
   \begin{itemize}
      \item For colored graphs $H,G$ with $H$ being colorful. We define $\Psub(H\rightarrow G)$ to be the set of all subgraphs $F\subseteq G$ such that $F$ is color preserving isomorphic to $H$.
      \item For a class $\mathcal{H}$ of uncolored graphs, $\Psub(\mathcal{H})$ is the set of all $\Psub(H\rightarrow G)$ where $H$ is vertex colorful graph whose underlying uncolored graph is in $\mathcal{H}$ and $G$ is vertex colored. The parameter is $\lvert V(H)\rvert$.
   \end{itemize}
\end{definition}

Remember that the clique polynomial is defined by $\sum_{C\subseteq [n], \lvert C\rvert =k}\prod_{i,j\in C, i<j} E_{i,j} \prod_{i\in C}X_i$.
\begin{lemma}[Theorem 3.2]\label{lem:partitionsubgrid}
    $\mathcal{P}(\Psub(\mathcal{H}_\text{grid}))$ is $\VW{1}$ hard under fpt-c-reductions.
\end{lemma}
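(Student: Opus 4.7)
The plan is to reduce from $\Cl$, which is $\VW 1$-complete under fpt-c-reductions by \cref{thm:cl:w1}. So I want to produce an fpt-c-reduction $\Cl \lefc \mathcal{P}(\Psub(\mathcal{H}_\text{grid}))$. The blueprint I would follow is the Boolean $\sharpW 1$-hardness of partitioned grid-subgraph from Curticapean and Marx: given a graph $G$ on $[n]$ and parameter $k$, one builds a colored host graph $G'$ on $s(n,k)$ vertices (with $s$ p-bounded) whose color set is $[k]\times[k]$ and whose colorful subgraphs isomorphic to the $k\times k$ grid $H$ are in correspondence with $k$-cliques in $G$. The diagonal cells $(i,i)$ correspond to the $k$ vertices of the chosen clique, the off-diagonal cells $(i,j)$ encode (choices of) the edge between the $i$-th and $j$-th clique vertex, and the adjacencies of $G'$ enforce that these choices are consistent across rows and columns.

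The first step is to verify that this Boolean reduction is witness-projectable in the sense of the definition preceding \cref{prop:wp}: there must be a map $p_{n,k}\colon [n]\to [s(n,k)]$ such that every colorful grid subgraph of $G'$ restricts (under $p_{n,k}$) to a clique of $G$. Inspecting the Curticapean--Marx gadgets, the image of each vertex $i\in[n]$ will just be the corresponding diagonal copy, so $p_{n,k}$ is essentially the embedding of $G$'s vertex set into the $k$ diagonal color classes of $G'$, extended arbitrarily elsewhere. The key observation (which needs a short check) is that each entry of the adjacency matrix of $G'$ is a fixed Boolean function of the entries of the adjacency matrix of $G$, not depending on the particular instance.

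The second step lifts the reduction to polynomials, following the proof of \cref{prop:wp}. Arithmetize each Boolean function describing an edge of $G'$ and substitute the resulting polynomial into the corresponding edge variable of $\mathcal{P}(\Psub(H \to G')) = \sum_F \prod_{v\in V(F)} X_v \prod_{e\in E(F)} E_e$. By witness projectability, setting the $X$-variables of vertices outside $p_{n,k}([n])$ to $1$, and identifying $X$-variables along the diagonal with the $X_i$ of $\Cl_{n,k}$, turns each colorful grid subgraph $F$ into the monomial contributed by its associated $k$-clique, times a product of edge-variable arithmetizations that equals $1$ exactly when the clique exists. A single $k$-clique may be represented by several grid tilings (coming from row/column permutations and the constant number of off-diagonal witnesses per edge), but by the design of the reduction this overcount is a nonzero function of $k$ alone, so we divide it out; this division costs only an fpt factor.

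The main obstacle I anticipate is in the first step: making sure the Curticapean--Marx construction really is witness-projectable with a clean fpt overcount, and that the polynomial image does not accidentally contain spurious monomials coming from ``near-grids'' in $G'$. This is essentially a matter of bookkeeping on the gadgets. Once that is in place, transitivity of $\lefc$ (together with the Boolean $k$-clique reduction feeding into $\Cl$ via \cref{thm:cl:w1}) gives $\VW 1$-hardness of $\mathcal{P}(\Psub(\mathcal{H}_\text{grid}))$ under fpt-c-reductions.
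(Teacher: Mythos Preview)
Your high-level approach matches the paper's: both reduce from $\Cl$ (using \cref{thm:cl:w1}) via the Curticapean--Marx grid construction, and both must divide out an overcount coming from the $k!$ orderings of a clique (there are no additional off-diagonal witnesses, contrary to your parenthetical---for a fixed ordered tuple $(a_1,\dots,a_k)$ the cell $(i,j)$ admits exactly the single vertex $v_{i,j,a_i,a_j}$). The difference lies in how the substitution is set up. The paper takes the host graph $G'$ on all vertices $v_{i,j,x,y}$ with its fixed consistency edges, sets the edge variables of $G'$ to $1$, and sends the \emph{vertex} variables of $G'$ to the variables of $\Cl$: diagonal vertices $v_{i,i,x,x}\mapsto X_x$ and off-diagonal vertices $v_{i,j,x,y}\mapsto E_{\{x,y\}}$. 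Each colorful grid, i.e.\ each ordered $k$-tuple $(a_1,\dots,a_k)$, then contributes the intended clique monomial, and after summing over orderings and dividing by $k!$ one recovers $\Cl_{n,k}$.

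Your plan instead pushes the dependence on $G$ into the \emph{edge} variables of $G'$ via arithmetization, following \cref{prop:wp}. That is where the gap sits. In the CM construction an edge of $G'$ between cells $(i,j)$ and $(i,j{+}1)$ is present iff \emph{both} vertices exist, i.e.\ iff $\{x,y\}\in E(G)$ and $\{x,y'\}\in E(G)$; its arithmetization is therefore the product $E_{\{x,y\}}E_{\{x,y'\}}$. Multiplying these over all $2k(k-1)$ grid edges gives each $E_{\{a_p,a_q\}}$ a multiplicity that depends on where $p$ and $q$ sit in the chosen ordering, so after summing over orderings you get a genuine mixture of $E$-monomials of different degrees, not a scalar multiple of $\Cl_{n,k}$. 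Your sentence ``a product of edge-variable arithmetizations that equals $1$ exactly when the clique exists'' is the Boolean picture; as a polynomial identity it fails. The clean fix is precisely the paper's move: encode $G$ in the off-diagonal vertex variables of $G'$ rather than in its edges.
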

\begin{proof}
   We reduce to the parameterized clique polynomial.

   \begin{itemize}
      \item For every $i\in [k]$ and every $x\in V(G)$ we introduce a vertex $v_{i,i,x,x}$ of color $(i,i)$.
      \item For every $i,j\in [k], i\neq j$ and every $x,y\in V(G)$ such that $x\neq y$ and $\{x,y\}\in E(G)$, we introduce a vertex $v_{i,j,x,y}$ of color $(i,j)$.
      \item For every $i\in [k], j\in [k-1]$ and $x,y,y'\in V(G)$ which are pairwise distinct, if $v_{i,j,x,y}$ and $v_{i,j+1,x,y'}$ both exist in $G'$ then we make them adjacent.
      \item For every $i\in [k], j\in [k-1]$ and $x,x',y\in V(G)$ which are pairwise distinct, if $v_{i,j,x,y}$ and $v_{i,j+1,x',y}$ both exist in $G'$ then we make them adjacent.
   \end{itemize}
   Left to show is the correctness and existence of a projection.
   Let $a_1,\dots,a_k$ be a $k$-clique in the complete graph. This graph is mapped to a colorful grid in our graph $G'$. Now, we can notice that $h_{i,j}$ is mapped to a vertex $v_{i,j,a_i,a_j}\in G'$ and $h_{i,j},h_{i,j+1}$ are adjacent if $v_{i,j,a_i,a_j}$ and $v_{i,j+1,a_i,a_{j+1}}$ exist and are adjacent. Notice that vertices $v_{i,j,a_i,a_j}$ correspond to edges in our clique between $a_i,a_j$ and vertices in our clique correspond to $v_{i,i,a_i,a_i}$ for some $i$. Hence, projecting to these, gives us the required polynomial.

   The correctness of this mapping now follows from~\cite{DBLP:conf/focs/CurticapeanM14}.
\end{proof}

\begin{lemma}[Lemma 3.1]\label{lem:partitionsubminor}
   Let $\mathcal{H}, \mathcal{H'}$ be recursively enumerable graph classes such that for every $H\in \mathcal{H}$ there exists some $H'\in\mathcal{H'}$ with $H\preceq H'$. Then $\mathcal{P}(\Psub(\mathcal{H}))\lefp \mathcal{P}(\Psub(\mathcal{H'}))$.
\end{lemma}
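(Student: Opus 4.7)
The plan is to reduce $\mathcal{P}(\Psub(H \to G))$ to $\mathcal{P}(\Psub(H' \to G'))$, where $H'$ is a host graph in $\mathcal{H}'$ having $H$ as a minor and $G'$ is obtained from $G$ by a controlled blow-up. First, I exploit the recursive enumerability of $\mathcal{H}'$ to find, for each $H \in \mathcal{H}$, some $H' \in \mathcal{H}'$ with $H \preceq H'$, together with a minor model: pairwise disjoint connected branch sets $(B_v)_{v \in V(H)}$ covering $V(H')$, and a distinguished witness cross edge $e^*_{uv} = \{w_{uv}, w'_{uv}\} \in E(H')$ with $w_{uv} \in B_v$, $w'_{uv} \in B_u$ for every $\{u,v\} \in E(H)$. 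Then I color $H'$ so that each $w \in V(H')$ receives its own color, making $H'$ colorful.

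Next, given a colorful instance $(H, G)$ with color classes $V_v \subseteq V(G)$, I build $G'$ as a cluster graph: for each $v \in V(H)$ and each $x \in V_v$, introduce vertices $x^w$ of color $w$ for every $w \in B_v$. The edges of $G'$ come in three kinds: (i) \emph{branch edges} $\{x^w, x^{w'}\}$ for each $x \in V_v$ and each $\{w,w'\} \in E(H'[B_v])$; (ii) \emph{real cross edges} $\{x^w, y^{w'}\}$ for $\{u,v\} \in E(H)$, $\{x,y\} \in E(G)$ with $x \in V_v, y \in V_u$, and each $\{w,w'\} \in E(H')$ crossing between $B_v$ and $B_u$; and (iii) \emph{fake cross edges} $\{x^w, y^{w'}\}$ added unconditionally for every $x \in V_v, y \in V_u$ whenever $\{u,v\} \notin E(H)$ but some $\{w,w'\} \in E(H')$ crosses between $B_v$ and $B_u$. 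I then verify the bijection $\Psub(H \to G) \leftrightarrow \Psub(H' \to G')$: in one direction, any $F$ with vertex choice $(x_v)_v$ lifts to the unique $F'$ with vertices $(x_v^w)_{v \in V(H),\, w \in B_v}$ and edges dictated by $E(H')$; conversely, the branch edges together with the connectedness of each $B_v$ force every $F'$ to arise from some such $(x_v)_v$, and the real cross edges then enforce $\{x_v, x_u\} \in E(G)$ for each $\{u,v\} \in E(H)$.

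To realize the fpt-projection, I substitute the variables of $G'$ as follows. Fix a canonical $w^*(v) \in B_v$ for each $v$; set $X'_{x^{w^*(v)}} = X_x$ and $X'_{x^w} = 1$ for all other $w$. On branch edges, fake cross edges, and non-canonical real cross edges, set $E' = 1$; on the $G'$-image of each canonical witness $e^*_{uv} = \{w_{uv}, w'_{uv}\}$, namely on the edges $\{x^{w_{uv}}, y^{w'_{uv}}\}$, set $E' = E_{\{x,y\}}$. Under this substitution, each $F' \in \Psub(H' \to G')$ contributes exactly $\prod_v X_{x_v} \prod_{\{u,v\} \in E(H)} E_{\{x_u, x_v\}}$, matching the monomial of the corresponding $F \in \Psub(H \to G)$. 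Since $|V(G')| \le |V(H')| \cdot |V(G)|$ is polynomial in $|V(G)|$ times a function of the parameter $|V(H)|$, and since each variable of $G'$ is substituted by a single variable or constant of $G$, this yields $\mathcal{P}(\Psub(\mathcal{H})) \lefp \mathcal{P}(\Psub(\mathcal{H}'))$.

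The main obstacle is accommodating those edges of $H'$ that cross between branch sets $B_v, B_u$ with $\{u,v\} \notin E(H)$, which are permitted by the minor relation but are extraneous relative to $H$. Conditioning these $G'$-edges on $\{x,y\} \in E(G)$ would kill otherwise valid embeddings whenever $G$ lacks such an unrelated edge, so I add them unconditionally in step~(iii); because they carry weight $1$ under the substitution, they contribute nothing spurious to the polynomial. A related subtlety is that several cross edges of $H'$ may run between the same pair $(B_v, B_u)$ with $\{u,v\} \in E(H)$; designating one as canonical and sending the others to $1$ ensures that $E_{\{x,y\}}$ appears with multiplicity exactly one in each monomial.
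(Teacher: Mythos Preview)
Your argument follows the same blow-up strategy as the paper (which in turn adapts Curticapean--Marx): replace each colour class of $G$ by copies of the corresponding branch set of $H'$, wire in branch edges and cross edges according to the structure of $H'$, verify the bijection $\Psub(H\to G)\leftrightarrow\Psub(H'\to G')$, and then project all auxiliary vertex and edge variables to $1$ while keeping one designated representative per branch set and one canonical witness edge per $H$-edge. Your description of the projection is in fact more explicit than the paper's, which merely says that ``for every branch set $L_v$ there is only one designated vertex $v$ for which variable we project down to'' and that ``a similar argument is used for the edges.''

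There is one genuine omission. You assume the branch sets $(B_v)_{v\in V(H)}$ \emph{cover} $V(H')$, but a minor model in general leaves some vertices of $H'$ outside every branch set (the vertices that get deleted rather than contracted). The paper gathers these into a set $B_0$ and handles them by adding a \emph{single} copy of $H'[B_0]$ to $G'$, made adjacent to every other vertex of $G'$, with all its variables projected to $1$. Without this step your bijection fails whenever $H'$ has vertices not reachable from the branch sets; already $H=K_1$, $H'=2K_1$ is a counterexample. When $H'$ is connected you can absorb deleted vertices into neighbouring branch sets, so for the concrete downstream use (bicubic graphs) your construction suffices, but the lemma as stated requires the $B_0$ patch.
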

\begin{proof}
   Let us look at $\mathcal{P}(\Psub(\mathcal{H}\rightarrow G))$ and $\mathcal{P}(\Psub(\mathcal{H'}\rightarrow G'))$.
   As in~\cite{DBLP:conf/focs/CurticapeanM14} we can transform $G$ to a $G'$ with larger vertex and edge set. $V(\mathcal{H'})$, the set of vertices of $\mathcal{H'}$, admits a partition into so called branch sets $B_0,\dots,B_k$ such that: For $i\in [1,k]$ the induced graph of $\mathcal{H'}$ on $B_i$ is connected and deleting $B_0$ and contracting each $B_i$ to a single vertex yields some subgraph of $H$ on the vertex set $[k]$.

   Now the following transformation will suffice by the argument in~\cite{DBLP:conf/focs/CurticapeanM14}.
   \begin{enumerate}
      \item For $i\in [1,k]$ and $v\in V_i(G)$, the set of vertices with color $i$: Replace $v$ by the induced subgraph of $\mathcal{H'}$ on $B_i$. Call this graph $L_v$.
      \item For $\{ i,j \}\in E(H)$, the edge set of $H$, and $u\in V_i(G), v\in V_j(G)$: Insert all edges between $L_u$ and $L_v$ in $G'$.
      \item For $\{i, j\}\not\in E(H)$ and $u\in V_i, v\in V_j(G)$: Insert all edges between $L_u$ and $L_v$ in $G'$.
      \item Add a copy of the induced subgraph of $\mathcal{H'}$ on $B_0$ to $G'$, connect it to all other vertices of $G'$.
   \end{enumerate}

   With this, we only need to make sure that for every branch set $L_v$ there is only one designated vertex $v$ for which variable we project down to. As a whole branch set will always be selected this is not a problem. A similar argument is used for the edges. This shows that a simple projection reduction is sufficient.
\end{proof}

\begin{lemma}[Lemma 3.6]\label{lem:partitionsubhcubic}
   $\mathcal{P}(\Psub(\mathcal{H}_\text{bicub}))$ is $\VW{1}$ hard under fpt-c-reductions.
\end{lemma}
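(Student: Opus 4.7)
The plan is to apply Lemma~\ref{lem:partitionsubminor} with source class $\mathcal{H} = \mathcal{H}_\text{grid}$ and target class $\mathcal{H}' = \mathcal{H}_\text{bicub}$, and then to chain the resulting fpt-projection with Lemma~\ref{lem:partitionsubgrid}. By transitivity of $\lefp$ (Lemma~\ref{lem:p:trans}) and the fact that fpt-projections are a restriction of fpt-c-reductions, the $\VW{1}$-hardness of $\mathcal{P}(\Psub(\mathcal{H}_\text{grid}))$ under fpt-c-reductions transfers to $\mathcal{P}(\Psub(\mathcal{H}_\text{bicub}))$.

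The only combinatorial content is therefore to verify the hypothesis of Lemma~\ref{lem:partitionsubminor}: for every $k \times k$ grid $G_{k,k} \in \mathcal{H}_\text{grid}$, exhibit a bipartite cubic graph $H'_k \in \mathcal{H}_\text{bicub}$ of size polynomial in $k$ such that $G_{k,k} \preceq H'_k$. A direct "cubic blow-up" construction suffices. The grid is already bipartite with maximum degree $4$. For each vertex $v$ of $G_{k,k}$ of degree $d_v \in \{2,3,4\}$, introduce a \emph{branch set} $B_v$ consisting of an even cycle of length $2d_v$ (hence bipartite) whose vertices have degree $2$ inside $B_v$, and then distribute the $d_v$ external edges incident to $v$ to $d_v$ distinct vertices of $B_v$, chosen so that each external edge joins opposite bipartition classes (subdividing an external edge once if the parity is wrong). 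The remaining $d_v$ vertices of $B_v$ still need one more incident edge each to reach degree $3$; pair them up inside $B_v$ by a perfect matching that respects the bipartition, or add a constant-size bipartite cubic patch attached to them. The resulting graph $H'_k$ is bipartite and $3$-regular, and contracting each $B_v$ back to a single vertex recovers $G_{k,k}$, witnessing $G_{k,k} \preceq H'_k$. Its size is $O(k^2)$, so the class of these $H'_k$ is contained in $\mathcal{H}_\text{bicub}$ and enumerable.

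The only obstacle worth any care is designing the branch-set gadgets so that all three requirements---$3$-regularity at every new vertex, bipartiteness of the whole graph, and recoverability of $G_{k,k}$ by contraction---hold simultaneously; bookkeeping aside, this is elementary and essentially identical to the reduction used by Curticapean and Marx~\cite{DBLP:conf/focs/CurticapeanM14}. With the gadget in place, Lemma~\ref{lem:partitionsubminor} gives $\mathcal{P}(\Psub(\mathcal{H}_\text{grid})) \lefp \mathcal{P}(\Psub(\mathcal{H}_\text{bicub}))$, and composing with the hardness from Lemma~\ref{lem:partitionsubgrid} delivers the claim.
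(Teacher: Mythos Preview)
Your proposal is correct and follows essentially the same route as the paper: the paper's proof simply cites Lemma~3.5 of \cite{DBLP:conf/focs/CurticapeanM14} for the fact that every grid is a minor of some graph in $\mathcal{H}_\text{bicub}$ and then invokes Lemma~\ref{lem:partitionsubminor}, which together with Lemma~\ref{lem:partitionsubgrid} gives the hardness. You do the same, except that instead of citing the Curticapean--Marx lemma you sketch the branch-set gadget yourself; as you yourself note, your construction is essentially the one from \cite{DBLP:conf/focs/CurticapeanM14}, so there is no substantive difference.
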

\begin{proof}
   By Lemma 3.5~\cite{DBLP:conf/focs/CurticapeanM14} every $H$ appears as the minor of some graph $H'\in\mathcal{H}_\text{bicub}$. Hence, \cref{lem:partitionsubminor} finishes the proof.
\end{proof}

For an edge-colored bipartite graph $G$ over the colors $\Gamma$ and $X\subseteq \Gamma$ be a set of colors. Let $M_X(G)$ be the set of all matchings in $G$ that choose exactly one edge from each color in $X$.
\begin{lemma}[Theorem 4.1]\label{lem:colmatch}
   $\mathcal{P}(M_X(G))$ with parameter $\lvert X\rvert$ is $\VW{1}$ hard under fpt-c-reductions.
\end{lemma}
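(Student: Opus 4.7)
The plan is to fpt-c-reduce $\mathcal{P}(\Psub(\mathcal{H}_\text{bicub}))$, which is $\VW{1}$-hard by \cref{lem:partitionsubhcubic}, to $\mathcal{P}(M_X(G))$, following the template of Curticapean--Marx~\cite{DBLP:conf/focs/CurticapeanM14} but at the level of polynomial weights rather than plain counts. Given a colorful bicubic bipartite pattern $H$ on $k$ vertices with bipartition $L\sqcup R$ and a vertex-colored host $G$, the color set is taken to be $X\coloneqq E(H)$, so $|X|=\tfrac{3k}{2}$ depends only on $k$.

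The construction of the edge-colored bipartite graph $G'$ proceeds as follows. For every $H$-edge $e=\{a,b\}$ and every $G$-edge $\{u,u'\}$ whose endpoint colors agree with $c_H(a),c_H(b)$, I create a pair of port vertices $p_{v,u,e}$ and join them by an edge of color $e$ carrying weight $E_{\{u,u'\}}$. For each $v\in V(H)$, whose three incident edges are $e_1,e_2,e_3$ (here we use bicubicity), and each candidate $u\in V(G)$ with $c_G(u)=c_H(v)$, I add a small rigid consistency gadget on the three ports $p_{v,u,e_1},p_{v,u,e_2},p_{v,u,e_3}$ together with a constant number of auxiliary vertices and edges, where these gadget edges get fresh colors that are also added to $X$. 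The gadget is engineered so that in any colorful matching the three ports of a single representative $u$ are matched either all through the color-$e_i$ edges going out to neighbors or all through internal gadget edges, which forces consistent vertex selections and turns a colorful matching into an encoding of a copy of $H$ in $G$. Vertex weights are chosen so that, after setting auxiliary vertex variables to $1$, the contribution of each such copy collapses to the desired monomial $\prod_{u\in V(F)}X_u\prod_{e\in E(F)}E_e$.

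To obtain the polynomial identity $\mathcal{P}(\Psub(H\to G))=\mathcal{P}(M_X(G'))$ up to a known correction, I decompose the right-hand side over all choices of representatives: the contribution of each consistent choice is the intended monomial times a gadget factor, while inconsistent choices are killed by the matching constraints inside the gadget. If the gadget factor is not identically $1$ on valid configurations, a small inclusion--exclusion over the constantly-many admissible gadget color patterns (realised by calling the $M_X$-oracle a bounded-in-$k$ number of times on auxiliary instances with some edge variables set to $0$ or $1$, then combining the results by a polynomial-size circuit) recovers the correct sum. This postprocessing is precisely what makes the reduction an fpt-c-reduction rather than an fpt-projection.

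The main obstacle is the design of the consistency gadget: it must simultaneously (i) be realisable using only matching edges in a bipartite graph, (ii) introduce only $O(1)$ new colors per gadget so that $|X|$ remains a function of $k$ alone, and (iii) contribute a multiplicative factor that is either trivial or can be removed by a constant-size linear combination over oracle calls. The bicubic assumption is essential here: with exactly three incident edges at each vertex of $H$ the gadget can be made of constant size (for instance an even cycle or $K_{3,3}$-like structure on the three ports and a few internal vertices) whose perfect matchings are few in number and enumerable explicitly, so that steps (ii) and (iii) are controllable. Once the gadget is in place the rest of the argument is bookkeeping: verifying that the polynomial weights transfer correctly, that the parameter stays bounded by a function of $k$, and invoking transitivity of $\lefc$ via \cref{lem:partitionsubhcubic}.
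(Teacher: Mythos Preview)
Your high-level plan (reduce from $\mathcal{P}(\Psub(\mathcal{H}_\text{bicub}))$, attach a constant-size gadget at each candidate vertex to enforce consistency, and clean up with oracle calls) matches the paper's strategy, but the central gadget step contains a real gap.

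The problem is the tension between ``fresh colors'' and ``$|X|$ is a function of $k$''. You place one gadget for every pair $(v,u)$ with $v\in V(H)$ and $u\in V(G)$ of matching color, so the number of gadgets grows with $|V(G)|$. If each gadget receives genuinely fresh colors, then $|X|$ grows with $|V(G)|$ and the parameter is no longer bounded in $k$. The only way to keep $|X|=O(k)$ is to let all gadgets belonging to the \emph{same} $v\in V(H)$ share their internal colors. But once colors are shared across many gadgets, the gadget can no longer be ``rigid'': a matching that uses exactly one edge of each shared color may (and typically will) pick those edges from \emph{different} gadgets, so nothing in the matching structure forces the three ports of a single candidate $u$ to be selected together. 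Your property (i)+(ii) thus cannot be achieved simultaneously, and your vague ``inclusion--exclusion over constantly-many admissible gadget color patterns'' does not address this, because the bad configurations are not indexed by a bounded set of patterns---they are indexed by which candidates $u$ contribute which ports.

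The paper resolves exactly this tension, and it is worth seeing how. It replaces each candidate $u$ by a $C_6$ on vertices $w_{u,1},z_{u,1},\dots,w_{u,3},z_{u,3}$, with the six cycle edges colored by $\{c_G(u)\}\times[6]$ (so shared across all candidates of the same color class), and routes the cross edges through the $w$-vertices. It then \emph{accepts} that a colorful matching may mix ports from several candidates, classifies each color class $i$ by the partition (``type'') its three matched $w$-vertices induce according to which original $u$ they come from, and observes that only the type-$(1,\dots,1)$ matchings correspond bijectively to copies of $H$. The filtering to type $(1,\dots,1)$ is done not by inclusion--exclusion over color patterns but by extracting homogeneous components in the vertex variables (the degree pattern in the $X$-variables distinguishes the types), iterated over the $k$ color classes; this costs a factor $c^k$ in circuit size and is what makes the reduction an fpt-c-reduction. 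To repair your argument you would need to make this color-sharing explicit, drop the rigidity claim, and replace the unspecified inclusion--exclusion by a concrete degree-based filtering step.
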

\begin{proof}
   We want to compute $\mathcal{P}(\Psub(\mathcal{H}_\text{bicub}))$ with oracle calls to the polynomial $\mathcal{P}(M_X(G'))$. Let $H,G$ be $k$-vertex colored graphs with $H$ being 3-regular.
   We build the following graph for $G$:
   \begin{enumerate}
      \item Replace each $v\in V(G)$ by a cycle $C_6$ (a cycle of length 6) on the vertices $w_{v,1}$, $z_{v,1}$, $w_{v,2}$, $z_{v,2}$, $w_{v,3}$, $z_{v,3}$ in this order. The edges of the cycle are colored with ${i}\times [6]$.
      \item For $e\in E(H)$ with $e=\{ u,v \}$ let $i,j\in [3]$ be such that $e$ is the $i$th edge incident with $u$ and the $j$th edge incident with $v$. Replace each edge $\{ u',v'\}\in E(G)$ where $u'$ is $u$-colored and $v'$ is $v$ colored by the edge $\{ w_{u,a}, w_{v,a} \}$ of color $\gamma(e)$.
   \end{enumerate}
   Assume now that for every cycle $C_6$, there are exactly 3 edges outgoing matched and no edges in $C_6$ matched (type $(1,\dots,1)$ in the language of Curticapean and Marx).
   We want to show that every such matching corresponds to an isomorphic graph $H'$ on the set of colors. As in Theorem 4.1~\cite{DBLP:conf/focs/CurticapeanM14} every such matching describes a copy of $H$ and every copy of $H$ describes a unique matching of this type. Notice that the edges and vertices are in an easily projectable relation to each other by the construction.

   For each color $i\in [k]$, we define the independent set $I(v)=\{ w_{v,1}, w_{v,2}, w_{v.3}\}$ and $\mathcal{I}(i) = \bigcup_{v\in V_i(G)} I(v)$ where $V_i(G)$ are all vertices of color $i$ in $G$.
   We call two vertices $u,v\in V(M)\cap \mathcal{I}(i)$ equivalent if there exists some $w\in V(G)$ such that $u,v\in I(w)$. In essence, this translates to that vertices are equivalent if they originate from a single original $w\in G$. Notice that $\lvert\mathcal{I}(i)\rvert=3$. This implies that we get a partition that has at most 5 settings for every color $i$ where the first type is the class of all 3 vertices being equivalent. We call these having type 1 for a color $i$ and having a type vector $(1,\dots,1)$ that describes the types per color.

   It turns out that the matchings are in a bijective relation to $\Psub(H\rightarrow G)$ if we restrict ourselves to matchings having type vector $(1,\dots,1)$. Notice that here we select exactly three vertices for every color $i\in [k]$.
   Now we can use simple extracting of homogeneous components. Every extracting of homogeneous components increases the circuit size by $O(sd^2)$ where $d$ is the degree and $s$ the size. As we have to iterate this, we get a overall bound of $O(sd^{2^k})$ as $d$ is constant in the color variables, we get a total circuit size of $O(sc^k)$.

   Using \cref{lem:partitionsubhcubic} finishes the proof.
\end{proof}

\begin{lemma}[Lemma 2.7]\label{lem:colour}
   Let $M(G)$ be the set of all matchings in a graph $G$. Then $\mathcal{P}(M(G))$ is $\VW{1}$ hard under fpt-c-reductions.
\end{lemma}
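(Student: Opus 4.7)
The plan is to reduce from $\mathcal{P}(M_X(G))$, which is $\VW 1$-hard by Lemma~\ref{lem:colmatch}, and to extract its value from $\mathcal{P}(M(G))$ by polynomial interpolation in fresh ``color'' variables. Given an edge-colored graph $(G, c\colon E(G)\to\Gamma)$ with a distinguished subset $X = \{1,\dots,k\} \subseteq \Gamma$, I introduce fresh indeterminates $t_1,\dots,t_k$ and apply the substitution $\sigma$ that replaces each edge variable $E_e$ by $E_e \cdot t_{c(e)}$ when $c(e)\in X$, and leaves $E_e$ unchanged otherwise. Under $\sigma$ each matching $M$ of $G$ contributes
\[
   \prod_{v\in V(M)} X_v \cdot \prod_{e\in E(M)} E_e \cdot \prod_{c \in X} t_c^{N_c(M)},
\]
to $\sigma(\mathcal{P}(M(G)))$, where $N_c(M)$ is the number of color-$c$ edges of $M$. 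Viewed as a polynomial in $t_1,\dots,t_k$ over the edge and vertex variables, the coefficient of the multilinear monomial $t_1 \cdots t_k$ is precisely $\mathcal{P}(M_X(G))$, since it picks out exactly those matchings that use one edge of each color in $X$.

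To realize this coefficient extraction as an fpt-c-reduction, I will iterate the homogeneous-component extraction tool recalled at the start of Section~\ref{sec:permanent} one $t_c$ at a time. The degree of each $t_c$ in $\sigma(\mathcal{P}(M(G)))$ is bounded by the number of edges of color $c$, hence by $\poly(n)$; so extracting $[t_c^1]$ from a polynomial in $t_c$ is a $\poly(n)$-sized linear combination of evaluations at $\poly(n)$ distinct constants drawn from $K$ (assumed large enough). Cascading the $k$ univariate extractions yields a circuit of size $\poly(n)^k \le f(k)\poly(n)$ whose gates are either arithmetic over the edge and vertex variables of $G$ or $\mathcal{P}(M(G))$-oracle gates fed inputs that are edge variables scaled by constants from $K$ together with the untouched vertex variables. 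This is exactly the shape permitted by $\lefc$. If $\mathcal{P}(M(G))$ is interpreted as the size-$k$ matching polynomial, the extracted coefficient already lives in the intended homogeneous component; otherwise one further homogeneous-component extraction in the vertex variables isolates the degree-$2k$ piece at fpt cost.

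Composing with Lemma~\ref{lem:colmatch} and using transitivity of $\lefc$ then yields the $\VW 1$-hardness of $\mathcal{P}(M(G))$. The main point to verify carefully is that feeding symbolic expressions into the $\mathcal{P}(M(G))$-oracle (edge variables scaled by interpolation constants) behaves correctly: this is exactly what the definition of an oracle gate provides, and the interpolation matrices are Vandermonde, hence invertible once $|K|$ admits enough distinct values per color. No new hardness machinery beyond Lemma~\ref{lem:colmatch} is required—the colored-matching polynomial already does the combinatorial counting, and ``uncoloring'' is purely a change of basis executed by interpolation.
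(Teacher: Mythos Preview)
Your interpolation strategy is sound in principle, but the complexity analysis contains a real error: the claimed inequality $\poly(n)^k \le f(k)\poly(n)$ is simply false---$n^{ck}$ is not fpt-bounded for any function $f$. The fix is to sharpen your degree bound. The oracle in this lemma is the parameterized $k$-matching polynomial (this is the family used throughout the section and in the subsequent $\per_k$ corollary), so every term carries exactly $k$ edge factors; after the substitution $E_e\mapsto E_e\,t_{c(e)}$ the total degree in $t_1,\dots,t_k$ is therefore at most $k$, and in particular $\deg_{t_c}\le k$ for each $c$. Hence each univariate extraction needs only $k+1$ evaluations, and the cascade costs at most $(k+1)^k$ oracle calls, which \emph{is} fpt-bounded. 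Your bound ``number of edges of color $c$, hence $\poly(n)$'' is a gross overestimate that then forces you into an incorrect inequality. (If one insisted on an all-sizes matching oracle, the homogeneous degree-$2k$ extraction in the vertex variables would have to be performed \emph{before} the $t$-interpolation, not after as you write, since otherwise the $t$-interpolation step itself already fails to be fpt.)

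With this correction your argument goes through, but it is not the route the paper takes. The paper uses inclusion--exclusion rather than interpolation: for each $S\subseteq X$ it calls the $k$-matching oracle on the subgraph $G[E_S]$ (obtained by setting $E_e=0$ for every edge with $c(e)\notin S$), and then recovers $\mathcal{P}(M_X(G))=\sum_{S\subseteq X}(-1)^{\lvert X\rvert-\lvert S\rvert}\mathcal{P}(A_S)$, where $A_S$ is the set of $k$-matchings in $G[E_S]$. This uses only $2^k$ oracle calls, requires no interpolation constants, and works over any field, whereas your version needs $\lvert K\rvert>k$ to supply enough distinct scalars. Both approaches yield the same fpt-c-reduction from $\mathcal{P}(M_X(G))$.
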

\begin{proof}
   \newcommand{\sub}{\ensuremath{\texttt{Sub}}}
   We define $A_S$ as $\sub(H\rightarrow G[E_S])$, i.e., the set of all graphs in $G[E_S]$, the induced graph of the edge set $E_S$, that are isomorphic to $H$. Further notice that for $H$ being $k$ disconnected edges, $\sub(H\rightarrow G)$ are exactly the $k$-matchings in $G$. By the same argument as in Lemma 2.7 in~\cite{DBLP:conf/focs/CurticapeanM14}, we only need to compute the matchings $M$ such that
   \[
      M\in A_X\setminus \bigcup_{S\subsetneq X} A_S.
   \]
   Via an inclusion-exclusion algorithm we can achieve this. Notice, that the monomials, resulting from $A_{S'}$ for any $S'$ are the same as occurring in $A_S$ and hence, we do not need any projection. We need $2^k$ many calls to our oracle.
\end{proof}

We define the $k$-permanent polynomial as follows. Let $S'_n$ be the set of all permutations on $n$ elements that set $n-k$ elements to the identity. Then
\[
   \per_k = \sum_{\sigma\in S'_n} \prod_{i\in [n]} x_{i,\sigma(i)}.
\]
Notice, we do not include the selected vertices, as all vertices are in the cycle cover and hence the two problems are equivalent.
\begin{corollary}
   $(\per_k)$ is $\VW{1}$ hard under fpt-c-reductions.
\end{corollary}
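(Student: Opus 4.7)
The plan is to fpt-c-reduce the matching polynomial $\mathcal{P}(M(G))$, which was shown to be $\VW{1}$-hard in \cref{lem:colour}, to the family $(\per_k)$. For a graph $G=(V,E)$ with $|V|=n$ and parameter $k$, I would construct an $n \times n$ matrix $A_G$ encoding $G$ so that the $k$-matchings of $G$ correspond to the ``all-$2$-cycles'' contribution to $\per_{2k}(A_G)$. Concretely, set $A_{ii} = 1$ on the diagonal and, for each $\{i,j\} \in E$, set $A_{ij} = E_{\{i,j\}} X_i$ and $A_{ji} = X_j$, so that a $2$-cycle $(i,j)$ in a permutation $\sigma$ contributes exactly the matching-edge weight $E_{\{i,j\}} X_i X_j$ to $\prod_i A_{i,\sigma(i)}$. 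Non-edges give zero, so only permutations whose moved part lives in $G$ contribute.

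Expanding $\per_{2k}(A_G)$ over permutations with exactly $2k$ non-fixed elements, I group by the cycle type of the moved part (a partition of $2k$ with all parts at least $2$). The cycle type $(2,2,\dots,2)$ yields precisely the $k$-matching polynomial of $G$, while longer or mixed cycle types yield unwanted contributions. Since the number of such cycle types is $p(2k)$, a function of $k$ alone, I would extract the desired contribution by inclusion-exclusion (a Möbius inversion) over cycle types: for each partition $\lambda \vdash 2k$, the cycle-type-$\lambda$ contribution can be identified from a bounded number of oracle calls to $\per_{k'}$ ($k'\le 2k$) on modified matrices in which off-diagonal entries carry auxiliary scaling variables that separate cycles of different lengths; interpolation in these auxiliary variables then isolates each $\lambda$. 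This is analogous to the $2^k$ inclusion-exclusion calls of \cref{lem:colour} and to the color-coding argument in \cref{lem:colmatch}, and the total number of oracle calls stays bounded by a function of $k$.

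The main obstacle will be designing the family of modified matrices (or equivalently, the auxiliary scalings) so that the contributions of different cycle types are linearly independent as polynomials in the auxiliary variables, allowing the inversion to succeed. A clean approach is to attach to each off-diagonal entry a formal marker whose exponent in a product along a cycle is uniquely determined by the cycle's length; iterating the homogeneous-component-extraction technique recalled at the start of this section over these markers isolates individual cycle types. Once the $(2,2,\dots,2)$ part is extracted, a final substitution recovers $\mathcal{P}(M_k(G))$, so the entire circuit with $\per_k$-oracle gates has fpt-bounded size. Together with \cref{lem:colour} and transitivity of $\lefc$, this establishes that $(\per_k)$ is $\VW{1}$-hard under fpt-c-reductions.
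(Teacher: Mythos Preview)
Your reduction starts from the right place — the matching polynomial $\mathcal{P}(M(G))$, which is $\VW 1$-hard by \cref{lem:colour} — but the step where you isolate the cycle type $(2,\dots,2)$ inside $\per_{2k}(A_G)$ does not go through as written. You propose to ``attach to each off-diagonal entry a formal marker whose exponent in a product along a cycle is uniquely determined by the cycle's length.'' For a single cycle of length $\ell$ this indeed produces marker-exponent $\ell$; but for a permutation with $2k$ moved points the \emph{total} marker-exponent is $\sum_\ell \ell\, a_\ell = 2k$ regardless of the cycle type $(2^{a_2},3^{a_3},\dots)$. Hence homogeneous-component extraction in such a marker cannot distinguish $(2,\dots,2)$ from $(4,2,\dots,2)$ or $(2k)$. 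The same obstruction defeats the obvious attempts via total degree in edge or vertex variables, and your M\"obius-inversion sketch does not say over which lattice one inverts, nor how $\per_{k'}$-oracle calls on modified matrices would produce the per-cycle-type quantities one would need as input. As stated, this is a genuine gap.

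The paper avoids this difficulty by exploiting that the hard instances supplied by \cref{lem:colmatch} (and hence by \cref{lem:colour}) are \emph{bipartite}. For bipartite $G=(L,R,E)$ one orients the original edges from $L$ to $R$, adds all back-edges $R\to L$ with weight $1$, and adds weight-$1$ self-loops everywhere. In this digraph every nontrivial cycle alternates sides, and the $L\to R$ edges it uses form a matching in $G$; conversely, each $k$-matching of $G$ is realized by exactly $k!$ permutations with $2k$ moved points (one for each way to route the $R\to L$ back-edges). Thus $\per_{2k}$ of this matrix equals $k!$ times the $k$-matching polynomial, and a single oracle call followed by division by $k!$ suffices — no cycle-type separation is needed. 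The easiest repair of your argument is to drop the symmetric $A_G$ and use this bipartite construction instead.
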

\begin{proof}
   We use the well known identity between matching on bipartite graph and the permanent. Given a bipartite graph (given in a left $L$ and right $R$ side) where we want to find the $k$-matching polynomial. We interpret this graph as a directed graph with edges going from $L$ to $R$. We add self loops to every vertex. Additionally, we connect every vertex on the right side back to all vertices on the left side (in this direction).

   Now every cycle can only form including edges going from $L$ to $R$. We project exactly on these edges and set all other variables to zero to get the required polynomial. The proof is obvious from this simple construction. Note that we need to increase the $k$ to $2k$ to create the correct sized cycles.
   Finishing with \cref{lem:colmatch} and \cref{lem:colour} shows the hardness.
\end{proof}

\subsection{Bounded Length Cycle Covers}
\begin{definition}
   We define \ksparseperm, the bounded length $k$-permanent, to be the polynomial of all cycle covers where one cycle has length $k$ and all other cycles have constant length i.e., length bounded by some constant $c$.
   \[
      \ksparseperm = \sum_{\sigma\in S'_n} \prod_{i\in [n]} x_{i,\sigma(i)}
      \]
      where $S'_n$ is the above mentioned set of permutations on $n$ elements.
   \end{definition}
   By the relationship between permanent and the sum of the weighted cycle cover, we will also speak about weighted bounded $k$-cycle cover.

   \begin{figure}
      \centering
         \begin{tikzpicture}[scale=0.75,->]
            \node[vertex] (a2) at (0,8) {$c_2$};
            \node[vertex] (a) at (0,4) {$c_1$};
            \node[vertex] (b) at (-2,1) {$a$};
            \node[vertex] (c) at (2,1) {$b$};

            \node (u1) at (-4,8) {};
            \node (u2) at (4,8) {};
            \path[dashed] (u1) edge node {} (a2)
            (a2) edge node {} (u2);
            \node (u3) at (-4,4) {};
            \node (u4) at (4,4) {};
            \path[dashed] (u3) edge node {} (a)
            (a) edge node {} (u4);
            \node (u5) at (-4,1) {};
            \node (u6) at (4,1) {};
            \path[dashed] (u5) edge node {} (b);
            \path[dashed] (c) edge node {} (u6);

            \path (b) edge[bend left=20] node [right] {$\frac{1}{2}$} (a);
            \path[thick] (c) edge[bend left=10] node [right] {} (b);
            \path[thick] (a) edge[bend left=20] node [left] {} (b);
            \path[thick] (a) edge[bend right=20] node [right] {} (c);
            \path[thick] (b) edge[bend left=10] node [right] {} (c);
            \path (c) edge[bend right=20] node [left] {$-\frac{1}{2}$} (a);
            \path (c) edge[bend right=50] node [right] {$-\frac{1}{2}$} (a2);
            \path[thick] (a2) edge[bend left] node [right] {} (c);
            \path (b) edge[bend left=50] node [left] {$\frac{1}{2}$} (a2);
            \path[thick] (a2) edge[bend right] node [right] {} (b);
            \path (b) edge[loop below] node [left] {$-1$} (b);
            \path[thick] (c) edge[loop below] node [right] {} (c);
            \path (a) edge[loop above] node [above] {$-\frac{1}{2}$} (a);
            \path (a2) edge[loop above] node [above] {$-\frac{1}{2}$} (a2);
         \end{tikzpicture}
         \caption{A $2$-iff coupling\label{fig:productgadget}}
   \end{figure}
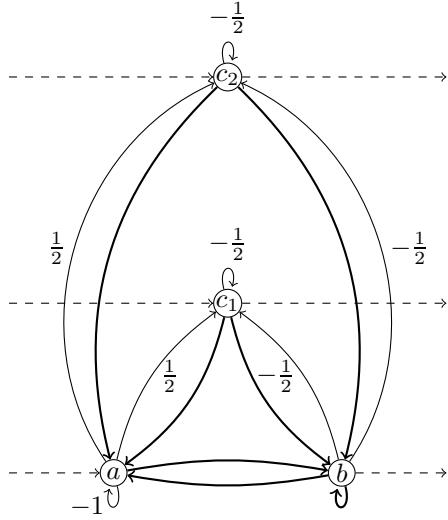
   We call the gadget given in \cref{fig:productgadget} a $2$-iff coupling. The dashed edges allow us to splice this gadget into two given edges of some graph. We will say that the $2$ iff coupling connects these edges. Note that all thick edges have weight $1$. This can be easily extended to a $m$-iff coupling by having vertices $c_1,\dots,c_m$ similarly connected to $a$ and $b$.
   We define the following two notions depending on edges selected in a given cycle cover and not part of the gadget itself.
   \begin{definition}
      We call an $m$-iff coupling \emph{active} if $a$ has an incoming edge, $b$ has an outgoing edge and every $c_i$ has edges $(u,c_i)$ and outgoing edge $(c_i,v)$ where $u\neq v$. We call it \emph{inactive} if all the above edges do not exist and \emph{borderline} otherwise.
   \end{definition}
   We will now proof that we can essentially ignore borderline states if we look at bounded length $k$-cycle covers.
   \begin{lemma}
      Given a directed graph $G$ let $G'$ be a graph where some set of edges $(u,v)$ are connected with a $m$-iff coupling, then the bounded length $k$-cycle cover of $G'$ is equal to the sum of the weighted bounded length $k$-cycle cover on $G$ where in every cycle cover either all edges in the $m$-iff coupling are active or inactive.
   \end{lemma}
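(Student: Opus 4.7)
The plan is to reduce the claim to a purely local identity satisfied by the gadget. Let $U=\{a,b,c_1,\ldots,c_m\}$ be the set of gadget vertices and let $\partial U$ denote the set of dashed boundary edges. Any cycle cover $\pi$ of $G'$ determines a \emph{boundary state} $\beta(\pi)$: for each of the $m+1$ splicing positions, $\beta(\pi)$ records whether the external incoming and outgoing dashed edges are used, and which external vertex they connect to. Grouping the sum defining $\ksparseperm(G')$ by $\beta$, one factors
\[
   \ksparseperm(G') \;=\; \sum_\beta W_{\mathrm{ext}}(\beta)\, Z(\beta),
\]
where $W_{\mathrm{ext}}(\beta)$ collects the contribution from cycle covers on $V(G')\setminus U$ consistent with $\beta$ (together with the used boundary edge weights, which equal $1$), and $Z(\beta)$ is the \emph{local partition function} obtained by summing $\prod_{(u,v)\in\pi_U} w(u,v)$ over all completions $\pi_U$ of the gadget's internal edges and self-loops that make every vertex of $U$ have both in- and out-degree exactly one.

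The heart of the proof is the identity $Z(\beta)=\mathbf{1}[\beta\text{ is active or inactive}]$. I would prove this by case analysis on $\beta$, exploiting the symmetry among the $c_i$'s. When $\beta$ is active, $a$ must receive flow and $b$ must emit flow externally, while each $c_i$ acts as a pass-through; the only internal completion is the single path $a\to b$ through thick (weight $1$) edges, combined with trivial in/out at each $c_i$, giving $Z(\beta)=1$. When $\beta$ is inactive, every vertex of $U$ is covered internally, and the weighted sum over the possible configurations of self-loops (weights $-1,1,-1/2,-1/2$) together with the $\pm 1/2$-weighted connecting edges is designed to collapse to $1$. For a borderline $\beta$, some splicing position is ``half-active'': either $a$ has an external in but $b$ has no external out (or vice versa), or some $c_i$ has only one external incidence, or the incoming and outgoing external edges at some $c_i$ share a vertex violating the $u\neq v$ clause. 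In each such situation, the completions that route through the offending vertex cancel against completions that use its self-loop, thanks to the carefully chosen $\pm 1/2$ weights, forcing $Z(\beta)=0$.

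Given the identity, I would match the two surviving boundary states with the lemma's two alternatives. In the active case, the forced $a\to\cdots\to b$ subpath and the forced pass-throughs at each $c_i$ can be contracted back to the coupled edges of $G$, so each active-boundary cycle cover of $G'$ corresponds bijectively to a cycle cover of $G$ using all coupled edges, the length overhead being at most $|U|$ which is absorbed by the constant bound $c$ (and does not affect the long $k$-cycle's designation because only one cycle has length~$k$). In the inactive case, $\pi_U$ consists of constant-length cycles entirely inside $U$, which the budget $c$ accommodates, and $W_{\mathrm{ext}}$ sums exactly over cycle covers of $G$ avoiding every coupled edge.

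The main obstacle is the verification of $Z(\beta)=0$ on borderline boundaries: the number of sub-cases (parameterized by the subset of half-active splicing positions and by the identifications among external endpoints at the $c_i$'s) is substantial, and although every individual case is an elementary calculation, one must organize them to expose the cancellation between self-loop completions and routed completions. I would handle this by first proving a ``single-$c_i$'' cancellation lemma for each $c_i$ independently, then composing these cancellations using the multiplicative structure of $Z$ over the internal configurations at each $c_i$, and finally handling the $a$/$b$ half-active cases by analogous direct computation.
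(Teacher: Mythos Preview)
Your boundary-state decomposition and reduction to a local partition function $Z(\beta)$ is the right framework and is more explicit than what the paper writes, but the two approaches diverge, and your proposed organisation of the borderline cancellation has a gap.

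The paper does not verify $Z(\beta)$ from scratch. Instead it invokes Buergisser's computation for the standard $1$-iff coupling (the equality gadget from the $\VNP$-hardness proof of the permanent) as a black box: once all but one of the $c_i$ are pinned to their self-loops, what remains is exactly Buergisser's three-vertex gadget on $\{a,b,c_i\}$, and the borderline cancellation is inherited from that known result (the extra self-loops contribute only a global constant factor). The residual situations---one $c_j$ externally passed through while another $c_i$ is handled by Buergisser (the ``split'' case), or the fully active state---are each dispatched in a line. So the paper's argument is a reduction to the $m=1$ case rather than a fresh case analysis, which is why it is so short.

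The gap in your plan is the sentence ``composing these cancellations using the multiplicative structure of $Z$ over the internal configurations at each $c_i$.'' There is no such product structure: all the $c_i$ share $a$ and $b$ as their only internal neighbours, so their internal routings compete for the single in-slot and the single out-slot at $a$ and at $b$. Concretely, with $m=2$ and the inactive boundary, the completions in which neither $c_i$ self-loops are the two $2$-cycle pairs $\{a\leftrightarrow c_1,\ b\leftrightarrow c_2\}$, $\{a\leftrightarrow c_2,\ b\leftrightarrow c_1\}$ together with the two $4$-cycles $a\to c_1\to b\to c_2\to a$ and $a\to c_2\to b\to c_1\to a$; these do not factor over $c_1$ and $c_2$. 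The paper's observation that this sharing tightly constrains how many $c_i$ can route internally is exactly what replaces your hoped-for multiplicativity and what permits the collapse to Buergisser's gadget. If you want a self-contained proof along your lines, the case analysis should be organised around how the in/out-slots of $a$ and $b$ are consumed (equivalently, which at most two $c_i$ route internally), not around the $c_i$'s independently.
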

   \begin{proof}
      Notice that by proof given by Buergisser~\cite{Buergisser:00}, for every cycle cover having self loops at $c_1,\dots,c_{i-1},c_{i+1},\dots,c_m$ for some $i$ and an arbitrary cover the statement holds. Notice that the extra self-loops only add a constant factor of weight to the cycle cover. By the construction, every cycle cover can have at most one $c_i$ where the self-loop is not taken as otherwise $a$ or $b$ would occur multiple times, The only cases left to prove are the following two.

      We define the gadget being \emph{split} if there exists indices $i,j$ such that $c_i$ is connected in the gadget but $c_j$ is connected outside the gadget, i.e., the edges $(c_i,c_i)$ and $(u,c_j)$ and $(c_j,v)$ for $u\neq v$ are in the cycle cover, $u,v\not\in \{ a,b,c_1,\dots,c_m \}$.
      \begin{description}
         \item[The gadget is split.] If the gadget is split, we can repeat the argument by Buergisser without the vertex $c_j$ as this will just add a constant factor to the sum and hence the lemma is fulfilled.
         \item[The gadget is active.] If the gadget is active, we cannot have any additional edges, hence, the value is $1$.\qedhere
      \end{description}
   \end{proof}

   \begin{definition}
      A \emph{parse tree} $T$ of an arithmetic circuit is the following tree defined iteratively.
      \begin{itemize}
         \item The root is in the tree.
         \item For every summation gate $v$, if  $v\in T$ then exactly one of its children is in $T$.
         \item For every multiplication gate $v$, if $v\in T$ then all of $v$s children are in $T$.
      \end{itemize}
      The weight of a parse tree is the product of all its leafs.
   \end{definition}
   It is a well known result that the sum over the weight of all parse trees of a formula is the polynomial computed by the formula.

With this, we can prove the following theorem.
\begin{theorem}
   For all $t$, there exists a constant $c$ such that $\ksparseperm$ is hard for $\VW{t}$ under fpt-c-reductions.
\end{theorem}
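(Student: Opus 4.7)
My plan is to follow the same template as the $\VW{1}$-hardness proof for the $k$-permanent, but replace the simple reduction from clique with a reduction from the defining weft-$t$ problem and use iterated gadget constructions to simulate each layer of unbounded fan-in. The $m$-iff coupling supplied in the section is already an unbounded-AND gadget at the level of cycle covers, and the key observation is that it only uses cycles of length at most $3$ (plus self-loops), so it can be used inside a bounded-length cycle-cover polynomial without affecting the bound by more than a small additive constant per application.

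First I would start from an instance of a $\VW{t}$-complete problem, namely a bounded exponential sum $\sum_{e\in\ones{q(n)}{k}} g_n(X,e)$ where $g_n$ is computed by a constant-depth, weft-$t$, polynomial-size arithmetic circuit. Applying the analogue of \cref{lem:struct:2} (straightforwardly generalised from weft $1$ to weft $t$ by the same duplication/formula-conversion argument in \cref{lem:struct:0}), I may assume the circuit has $2t{+}3$ alternating layers: bounded sums and products on the outside, and $t$ layers of unbounded sums and products in the middle. Next I would reduce this sum to a weighted cycle-cover problem: one base cycle of length $k$ carries the ``witness'' variables $e_1,\dots,e_{q(n)}$, each selected by choosing one of two local sub-cycles at position $i$ (representing $e_i=0$ or $e_i=1$), exactly as in the passage from the clique polynomial to cycle covers in the $\VW{1}$ proof. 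All other cycles are small gadget cycles whose job is to enforce consistency with the arithmetic circuit.

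For the gadgets themselves I would proceed by induction on the depth of the circuit, bottom up. Each unbounded-fan-in multiplication gate becomes an iterated $m$-iff coupling that links the representatives of all its inputs, forcing either ``all active'' or ``all inactive''; each unbounded-fan-in addition gate is implemented by a standard choice gadget (a small cycle with one weighted loop per input, as in B\"urgisser's cycle-cover simulation of $\VP$), which contributes the sum of the input weights to the cover's value. Bounded-fan-in gates, being of constant arity, are simulated by fixed-size gadgets whose cycles have constant length. Composing all layers yields a graph in which every cycle other than the designated length-$k$ witness cycle has length bounded by a constant $c=c(t)$ that depends only on $t$ (since there are only $t$ layers of unbounded gates, each contributing $O(1)$ to the maximum cycle length through the iff-coupling construction). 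The $k$-cycle-cover polynomial of the resulting weighted graph equals, up to a fixed nonzero factor, the original bounded exponential sum, which gives the fpt-c-reduction.

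The main obstacle is combinatorial soundness of the gadget composition: I must rule out spurious cycle covers that use the gadget vertices in unintended ways. For the $m$-iff coupling this is already handled by the ``split/active/inactive'' case analysis given in the section, and the borderline cases were shown to vanish. The delicate point is that when gadgets are nested, a cycle cover can in principle combine edges of a lower-level gadget with edges of a higher-level one to form a cycle of length greater than the advertised $c(t)$; I would therefore design each gadget so that its external interface consists of exactly one ``in-edge'' and one ``out-edge'' per input, and verify that any externally-consistent cycle using gadget vertices stays entirely inside a single gadget. Once this is in place, the weight analysis reduces to a product of local factors at each gate, and the standard parse-tree argument recalled at the end of the section shows that the cycle-cover sum equals the circuit's evaluation summed over witnesses of weight $k$, completing the reduction.
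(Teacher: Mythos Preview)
Your overall strategy matches the paper's: simulate a constant-depth weft-$t$ formula by cycle-cover gadgets, use $m$-iff couplings for unbounded multiplication, and let the single long cycle encode the weight-$k$ witness while all circuit gadgets contribute only constant-length cycles (whence the constant $c$). The paper, however, organises the argument differently, and one point in your plan is underspecified enough to be a gap.

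First, the paper does not pass through a weft-$t$ normal form; it simply converts the constant-depth circuit to a tree, makes the layers alternating, and then builds the graph \emph{globally} in one shot: edges of the tree become directed edges toward the root, addition gates get only a self-loop, and each multiplication gate of fan-in $\ell$ gets a single $\ell$-iff coupling together with back-edges from new vertices $v'_i$ to all ``additive children'' in the corresponding subtree. The correctness argument is a direct bijection between parse trees of the formula and bounded-length cycle covers, not an induction over layers. This global parse-tree correspondence avoids exactly the nesting worry you raise in your last paragraph: there is no inductive composition of gadgets and hence no interface invariant to maintain.

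Second, and more importantly, the paper does not embed the witness in the circuit graph. It builds a \emph{separate} directed clique $G'$ on $n$ vertices (each split into an in- and out-copy with a ``selector'' edge between them), and couples each $Y_i$-leaf of the formula to the $i$th selector edge via a $1$-iff coupling. The single long cycle lives entirely in $G'$; homogeneous-component extraction then forces exactly $k$ selector edges (hence $k$ ones among the $Y_i$) and a total of $2k$ clique edges. Your description---``one base cycle of length $k$ carries the witness variables $e_1,\dots,e_{q(n)}$, each selected by choosing one of two local sub-cycles at position $i$''---does not obviously yield a cycle of length $k$: if the cycle visits every position $i\le q(n)$ it has length $\Theta(q(n))$, and if it visits only $k$ positions you still need a mechanism ensuring that exactly those $k$ positions are the ones set to $1$ while the remaining $q(n)-k$ are set to $0$. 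The clique-with-selector-edges trick is precisely what supplies this mechanism, and it is the piece your plan is missing. Once you adopt it, the rest of your outline goes through and coincides with the paper's proof.
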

\begin{proof}
   We will show an equivalence between cycle covers of a certain kind and parse trees of the circuit.
   We have a constant depth circuit $C$ for every polynomial in $\VW{t}$ of weft $t$. We will reduce this circuit if necessary to have alternating layers. We also ensure that every path from the root has the same length by adding gates with only one input. This increases the fan-in by at most a constant and the size by a constant factor.

   We convert the circuit first into a tree.  Now we set every edge $\{u,v\}$ in the tree to be an edge in our directed graph $(u,v)$ where the depth of $v$ is smaller than the depth of $u$, i.e., all edges are pointing towards the root.

   We call a vertex $u$ an \emph{additive child of $v$} if there exists a path from $u$ to $v$ in the above tree that does not cross any multiplication gate.

   Let $G$ be a directed clique of size $n$. We split every vertex into an incoming and outgoing vertex. Let us call this graph $G'$. We call the edges between $v_\text{in}$ and $v_\text{out}$ for every vertex the selector edge.
   Now, our circuit has variables $Y_1,\dots,Y_n$ which outgoing edges we couple with $1$-iff couplings to the selector edges of $G'$.

   \begin{figure}
      \centering
      \begin{tikzpicture}
         \begin{scope}[shift={(2,-1)}]
            \node[vertex] (u) at (0,1) {$u$};
            \node[vertex] (v) at (0,0) {$\times$};
            \node[vertex] (v1) at (-1,-1) {$v_1$};
            \node[vertex] (v2) at (1,-1) {$v_2$};
            \draw[->] (v1) -- (v);
            \draw[->] (v2) -- (v);
            \draw[->] (v) -- (u);
            \node at (2,0) {$\Rightarrow$};
         \end{scope}
         \begin{scope}[scale=2,shift={(4,0)}]
            \node[vertex] (u) at (0,1) {$u$};
            \path[->] (u) edge[loop left]  (u);
            \node[vertex] (v) at (0,0) {$\times$};
            \path[->] (v) edge[loop left]  (v);
            \node[vertex] (v1) at (-1,-2) {$v_1$};
            \path[->] (v1) edge[loop left]  (v1);
            \node[vertex] (v1p) at (-1,-1) {$v'_1$};
            \path[->] (v1p) edge[loop left]  (v1p);
            \node[vertex] (v2) at (1,-2) {$v_2$};
            \path[->] (v2) edge[loop right]  (v2);
            \node[vertex] (v2p) at (1,-1) {$v'_2$};
            \path[->] (v2p) edge[loop right]  (v2p);
            \draw[->] (v1) -- (v1p);
            \draw[->] (v2) -- (v2p);

            \draw (-0.75,-0.5) rectangle (0.75,-1.5);
            \node at (0,-1) {$2$-iff coupling};
            \node[vertex] (a) at (0,0.65) {$a$};
            \path[->] (a) edge[loop left]  (a);
            \node[vertex] (b) at (0,0.35) {$b$};
            \path[->] (b) edge[loop left]  (b);
            \draw[->] (v) -- (b);
            \path[->] (b) edge[bend left=10] node [right] {} (a);
            \path[->] (a) edge[bend left=10] node [right] {} (b);
            \draw[->] (a) -- (u);

            \node[vertex] (c1) at (-1,-1.5) {$c_1$};
            \path[->] (c1) edge[loop left]  (c1);
            \node[vertex] (c2) at (1,-1.5) {$c_2$};
            \path[->] (c2) edge[loop right]  (c2);
            \path[->] (u) edge[bend left=35] node[right] {} (v);
         \end{scope}
      \end{tikzpicture}
      \caption{Abbridged Multiplication Node Transform\label{fig:multiplication}}
   \end{figure}
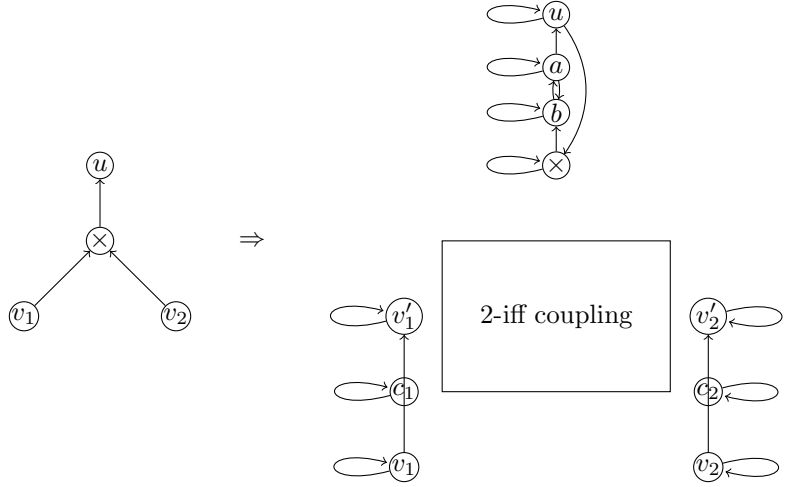

   Now we will construct gadgets around the above given tree. Notice that our paths travel from leafs to roots.
   \begin{description}
      \item[Input Node] We set the weight its outgoing edge to the value in the circuit. The self loop will have weight 1.
      \item[Addition Node] We add a self loop with weight 1.
      \item[Multiplication Node] In essence, we will do the following. We disconnect the graph, construct separate graphs for all its children and enforce that all children will be taken by an appropriate iff coupling. \Cref{fig:multiplication} represents such a transformation with the following caveats. For clarity we omit the weights and the edges belonging to the iff coupling between $(a,c_1)$ and $(a,c_2)$. Additionally, we assume that the vertex $u$ is the last vertex in this tree and omit the loops from $v'-1$, $v'_2$ back to all additive children in their respective subtree.

      In detail:
      For a multiplication gate with fan-in $\ell$ we add a $\ell$-iff coupling. This iff coupling connects the incoming edges with the outgoing edge. Let $a,b,c_1,\dots,c_\ell$ be the vertices from this $\ell$-iff coupling. Let $v$ be the multiplication gate and the edges $(v,u),(v_1,v),\dots,(v_\ell,v)$ be as expected. We splice the $\ell$-iff coupling into this graph in the following way. Add two vertices $a,b$ between $(v,u)$ such that $(a,u)$ and $(v,b)$ are the edges.

      Add new vertices $v'_1,\dots,v'_\ell$, add the edges $(v_i,v'_i)$. Splice between these edges $c_i$, i.e, replace the edge $(v_i,v'_i)$ by $(v_i,c_i), (c_i,v'_i)$. Finally, add for every $v'_i$ edges to all additive children of $v'_i$. We add self-loops to the vertices $v'_i$ of weight 1.
      \item[Root] For the root, we construct outgoing edges to all its additive children in the subtree. If the root is a multiplication gate, we also use the construction for a multiplication node and add an additional vertex in a loop with the root.
   \end{description}
   This finishes the construction of the circuit.
   Finally, we will require via extracting of homogeneous components that the root is always selected, i.e., is not covered by only a self loop, and that $G'$ has $k$ selector edges selected and $2k$ edges overall.

   As the polynomial computes by every formula is equal to the sum over all its parse trees, we just need to show the equality between bounded length $k$-cycle cover and parse trees. Notice that by construction only leafs have weight not equal to one or zero. In the following we will ignore the factor of $k!$ that occurs from the different possible orderings of the edges selected in the clique $G'$. We can easily divide by this factor.

   We first prove that every parse tree has a cycle cover. All vertices not in the parse tree can have self loops with weight 1. All vertices in the path tree are in a cycle. Take the appropriate $k$ cycle in $G'$. With the iff coupling these transfer to certain input gates. Take the cycle starting from the leafs of the parse tree. From this move up along addition gates to a multiplication gate. Here, take the back edge and the iff coupling ensures that we start with a new cycle from the multiplication gates output edge. Finally, we will reach the root. As the depth is constant and every cycle apart from the cycle in the $G'$ is of depth at most constant, this is a valid cycle cover. By construction the weight of this parse tree is the same as the weight of the cycle cover.

   Let us now look at a possible cycle cover. Let us start from the cycle of the root. The cycle goes to the roots children and either hits an addition gate, where it will take exactly one child or it hits a multiplication gate and returns to the root. The multiplication gates iff-coupling ensures that all the children have cycle cover. Repeat this until you have some inputs in the cycle cover. As these are combined with $G'$ by iff coupling, the equivalence follows. By construction the weight of these cycle cover is the same as the weight of the parse tree.
\end{proof}

\section*{Acknowledgements}

We thank Holger Dell for valuable comments on a first draft of this paper
and Radu Curticapean and Marc Roth for helpful discussions.

\bibliography{vftp}
\end{document}